\newcommand{\M}{\mathcal{M}} 
\newcommand{\MM}{\mathbbmss{M}} 
\newcommand{\al}{\alpha}
\newcommand{\bet}{\beta}
\newcommand{\de}{\delta}
\newcommand{\De}{\Delta}
\newcommand{\La}{\Lambda}
\newcommand{\ph}{\varphi}
\newcommand{\be}{\begin{equation}}
\newcommand{\ee}{\end{equation}}
\newcommand{\id}{\mathrm{id}} 
\newcommand{\NN}{\mathbbmss{N}} 
\newcommand{\ZZ}{\mathbbmss{Z}}
\newcommand{\RR}{\mathbbmss{R}} 
\newcommand{\CC}{\mathbbmss{C}} 
\newcommand{\supp}{\mathrm{supp}}
\newcommand{\WF}{\mathrm{WF}} 
\newcommand{\Dcal}{\mathcal{D}}
\newcommand{\Ecal}{\mathcal{E}} 
\newcommand{\Fcal}{\mathcal{F}} 
\newcommand{\Floc}{\mathcal{F}_{\text{loc}}} 
\newcommand{\Gcal}{\mathcal{G}} 
\newcommand{\Hcal}{\mathcal{H}}
\newcommand{\Ocal}{\mathcal{O}}
\newcommand{\Rcal}{\mathscr{R}}
\newcommand{\Scal}{\mathcal{S}}
\newcommand{\Tcal}{\mathcal{T}}
\newcommand{\Zcal}{\mathcal{Z}}
\newcommand{\sd}{\mathrm{sd}}
\newcommand{\ml}{{\mathrm{ml}}}
\newcommand{\reg}{\mathrm{reg}}
\DeclareMathAlphabet{\mathpzc}{OT1}{pzc}{m}{it}
\newcommand{\wcal}{\mathpzc{w}}
\newcommand{\bzeta}{{\boldsymbol{\zeta}}}
\newcommand{\pp}{\mathrm{pp}}		
\newcommand{\rp}{\mathrm{rp}}		
\renewcommand{\Re}{\mathrm{Re}}		
\newcommand{\MS}{\mathrm{MS}}		
\newcommand{\loc}{\mathrm{loc}}
\newcommand{\Lin}{\mathrm{Lin}}		
\newcommand{\oc}{{\,\bigcirc\hspace{-0.65em}\mathsf{c}\;}}		
\newcommand{\Halg}{\mathsf{H}}		
\newcommand{\Ap}{\mathcal{A}}		
\newcommand{\Oalg}{\mathcal{O}}
\newcommand{\Falg}{\mathsf{F}}	
\newcommand{\T}{\cdot_{{}^\Tcal}}
\newcommand{\TT}{\Tcal}
\tikzset{
photon/.style={decorate, decoration={snake,amplitude=4pt, segment length=5pt}, draw=red},
particle/.style={draw=blue, postaction={decorate}, decoration={markings,mark=at position .5 with {\arrow[draw=blue]{>}}}},
antiparticle/.style={draw=blue, postaction={decorate}, decoration={markings,mark=at position .5 with {\arrow[draw=blue]{<}}}},
gluon/.style={decorate, draw=green, decoration={coil,amplitude=4pt, segment length=5pt}}
}
\newcommand{\Catseye}{\begin{tikzpicture}[thick,scale=1.5]
\useasboundingbox (0,-0.1) rectangle (1,0.4);
\filldraw (0,0) circle (1pt);
\filldraw (0.4,0.13) circle (1pt);
\filldraw (0.4,-0.13) circle (1pt);
\filldraw (0.8,0) circle (1pt);
\draw (0,0) .. controls (0.18,0.4) and (0.62,0.4) .. (0.8,0);
\draw (0,0) edge [out=35,in=145] node[above] {} (0.8,0);
\draw (0,0) edge [out=-35,in=-145] node[above] {} (0.8,0);
\draw (0.4,0.13) edge [out=-25,in=30] node[above] {} (0.4,-0.13);
\draw (0.4,0.13) edge [out=-145,in=140] node[above] {} (0.4,-0.13);
\end{tikzpicture} }
\newcommand{\FGHtwoF}{\begin{tikzpicture}[thick,scale=1.5]
\useasboundingbox (-0.1,0.1) rectangle (0.5,0.4);
\filldraw (0,0) circle (1.3pt);
\filldraw (0.4,0) circle (1.3pt);
\filldraw (0.2,0.3) circle (1.3pt);
\draw (0.4,0) edge [out=75,in=-5] node[above] {} (0.2,0.3);
\draw (0.4,0) edge [out=-185,in=-105] node[above] {} (0.2,0.3);
\end{tikzpicture} }
\newcommand{\FGtwoHF}{\begin{tikzpicture}[thick,scale=1.5]
\useasboundingbox (-0.1,0.1) rectangle (0.5,0.4);
\filldraw (0,0) circle (1.3pt);
\filldraw (0.4,0) circle (1.3pt);
\filldraw (0.2,0.3) circle (1.3pt);
\draw (0,0) edge [out=45,in=135] node[above] {} (0.4,0);
\draw (0,0) edge [out=-45,in=-135] node[above] {} (0.4,0);
\end{tikzpicture} }
\newcommand{\FtwoGHF}{\begin{tikzpicture}[thick,scale=1.5]
\useasboundingbox (-0.1,0.1) rectangle (0.5,0.4);
\filldraw (0,0) circle (1.3pt);
\filldraw (0.4,0) circle (1.3pt);
\filldraw (0.2,0.3) circle (1.3pt);
\draw (0,0) edge [out=105,in=185] node[above] {} (0.2,0.3);
\draw (0,0) edge [out=5,in=-75] node[above] {} (0.2,0.3);
\end{tikzpicture} }
\newcommand{\FGoneHoneF}{\begin{tikzpicture}[thick,scale=1.5]
\useasboundingbox (-0.1,0.1) rectangle (0.5,0.4);
\filldraw (0,0) circle (1.3pt);
\filldraw (0.4,0) circle (1.3pt);
\filldraw (0.2,0.3) circle (1.3pt);
\draw (0.4,0) -- (0.2,0.3);
\draw (0.4,0) -- (0,0);
\end{tikzpicture} }
\newcommand{\FoneGoneHF}{\begin{tikzpicture}[thick,scale=1.5]
\useasboundingbox (-0.1,0.1) rectangle (0.5,0.4);
\filldraw (0,0) circle (1.3pt);
\filldraw (0.4,0) circle (1.3pt);
\filldraw (0.2,0.3) circle (1.3pt);
\draw (0,0) -- (0.2,0.3);
\draw (0.4,0) -- (0,0);
\end{tikzpicture} }
\newcommand{\FoneGHoneF}{\begin{tikzpicture}[thick,scale=1.5]
\useasboundingbox (-0.1,0.1) rectangle (0.5,0.4);
\filldraw (0,0) circle (1.3pt);
\filldraw (0.4,0) circle (1.3pt);
\filldraw (0.2,0.3) circle (1.3pt);
\draw (0,0) -- (0.2,0.3);
\draw (0.4,0) -- (0.2,0.3);
\end{tikzpicture} }
\newcommand{\FGHoneF}{\begin{tikzpicture}[thick,scale=1.5]
\useasboundingbox (-0.1,0.1) rectangle (0.5,0.4);
\filldraw (0,0) circle (1.3pt);
\filldraw (0.4,0) circle (1.3pt);
\filldraw (0.2,0.3) circle (1.3pt);
\draw (0,0) -- (0.2,0.3);
\end{tikzpicture} }
\newcommand{\FGoneHF}{\begin{tikzpicture}[thick,scale=1.5]
\useasboundingbox (-0.1,0.1) rectangle (0.5,0.4);
\filldraw (0,0) circle (1.3pt);
\filldraw (0.4,0) circle (1.3pt);
\filldraw (0.2,0.3) circle (1.3pt);
\draw (0,0) -- (0.4,0);
\end{tikzpicture} }
\newcommand{\FoneGHF}{\begin{tikzpicture}[thick,scale=1.5]
\useasboundingbox (-0.1,0.1) rectangle (0.5,0.4);
\filldraw (0,0) circle (1.3pt);
\filldraw (0.4,0) circle (1.3pt);
\filldraw (0.2,0.3) circle (1.3pt);
\draw (0.2,0.3) -- (0.4,0);
\end{tikzpicture} }
\newcommand{\FGH}{\begin{tikzpicture}[thick,scale=1.5]
\useasboundingbox (-0.1,0.1) rectangle (0.5,0.4);
\filldraw (0,0) circle (1.3pt);
\filldraw (0.4,0) circle (1.3pt);
\filldraw (0.2,0.3) circle (1.3pt);
\end{tikzpicture} }
 \author{\null\\Michael D\"utsch$^{(2)}$, Klaus Fredenhagen$^{(1)}$, Kai Johannes Keller$^{(1)}$, Katarzyna Rejzner$^{(3)}$ \\
  \null\\
  \null\\
        \small{$^{(1)}$ 2. Inst. f. Theoretische Physik, Universit\"at Hamburg,}\\
    \small{Luruper Chaussee 149, D-22761 Hamburg, Germany}\\
\small{$^{(2)}$ Max Planck Institute for Mathematics in the Sciences,}\\
\small{Inselstrasse 22, D-04103 Leipzig}\\
\small{$^{(3)}$Mathematics Department, University of York}\\
\small{York YO10 5DD}\\
\small{\texttt{michael.duetsch@theorie.physik.uni-goettingen.de,
klaus.fredenhagen@desy.de,}}\\ 
\small{\texttt{kai.johannes.keller@desy.de, kasia.rejzner@york.ac.uk }}}
  \title{Dimensional Regularization in Position Space, 
and a Forest Formula for Epstein-Glaser Renormalization}
\begin{document}
\maketitle

\begin{abstract}
We reformulate dimensional regularization as a regularization method in position space and show 
that it can be used to give a closed expression for the renormalized time-ordered products as 
solutions to the induction scheme of Epstein-Glaser. This closed expression, which we call the 
Epstein-Glaser Forest Formula, is analogous to Zimmermann's Forest Formula for BPH renormalization.
For scalar fields the resulting renormalization 
method is always applicable, we compute several examples. We also analyze the Hopf algebraic aspects 
of the combinatorics. Our starting point is the 
Main Theorem of Renormalization
of Stora and Popineau and the arising renormalization group as originally defined by St{\"u}ckelberg and Petermann. 
\end{abstract}

  \theoremstyle{plain}
  \newtheorem{df}{Definition}[section]
  \newtheorem{thm}[df]{Theorem}
  \newtheorem{prop}[df]{Proposition}
  \newtheorem{cor}[df]{Corollary}
  \newtheorem{lemma}[df]{Lemma}
  
  \theoremstyle{plain}
  \newtheorem*{Main}{Main Theorem}
  \newtheorem*{MainT}{Main Technical Theorem}

  \theoremstyle{definition}
  \newtheorem{rem}[df]{Remark}
  \newtheorem{example}[df]{Example}

 \theoremstyle{definition}
  \newtheorem{ass}{\underline{\textit{Assumption}}}[section]

{\small
\tableofcontents
\markboth{Contents}{Contents}}
\section{Introduction}
Renormalized perturbative quantum field theory describes large parts of physics, in particular particle physics, with good, and sometimes spectacular precision. It is, however, a conceptually and technically complicated subject, and it required hard and ingenious work to put the original treatment of Tomonaga, Schwinger, Feynman and Dyson on solid grounds. This was achieved, by Bogoliubov, Parasiuk, Hepp, Zimmermann, Epstein, Glaser, Steinmann and others, in a twenty years struggle, and the finally reached state of the art is nicely documented in the proceedings of the Erice school 1975 dedicated to renormalization \cite{VW76}. Main highlights are the Forest Formula of Zimmermann \cite{Zim69} which solves the recursion relations of the Bogoliubov-Parasiuk-Hepp (BPH) method \cite{BP57,BS59,Hep66}, the causal method of Epstein-Glaser (EG)\cite{EG73}, elaborating on older attempts of St\"uckelberg \cite{SR50} and Bogoliubov \cite{BP57,BS59}, and the method of retarded products by Steinmann \cite{Ste71}.

In spite of the fact that highly nontrivial mathematical methods were used (and to some extent, invented), the theory of perturbative renormalization had, for several decades, less impact on mathematics than it deserved%
\footnote{See however the work induced by Polchinski's version \cite{Pol84} of the Wilsonian renormalization group: \cite{KKS91,KK91,KK92,KK99}.}. 
This changed recently, induced by the observation of Kreimer \cite{Kre98}
that the BPH recursion relations may be understood in terms of Hopf algebras. It culminated in the Connes-Kreimer theory of renormalization \cite{CK00,CK01} and initiated a broad interest of mathematicians in perturbative quantum field theory.

In the present formulation (see, e.g., the book of Connes and Marcolli \cite{CM07}) the theory is based on the method of dimensional regularization, and on the combinatorics of Zimmermann's Forest Formula.
Dimensional regularization was invented simultaneously by Bollini and Giambiagi \cite{BG72a} and by 't Hooft and Veltman \cite{tHV72}. It relies on the fact that after parametrizing Feynman integrals by Schwinger or by Feynman parameters the momentum space integrals can be performed, and it remains an integral over the parameters whose integrand depends on the spacetime dimension. Formally, one can replace the spacetime dimension by an arbitrary complex number $d$. The resulting integral exists on a certain domain of the complex plane; moreover, it can be extended to a meromorphic function on the whole complex plane. A finite value at the physical dimension is obtained by subtracting the principal part of the Laurent series. It is, however, not a priori clear that this procedure is physically meaningful. A similar situation is present in the so-called $\zeta$-function renormalization where the deeper reasons for the spectacular successes are not well understood (see, e.g. \cite{Elizalde:1994gf}). In the case of dimensional regularization the situation was clarified by the analysis of Breitenlohner and Maison \cite{BM77a,BM77b,BM77c} who showed how the combinatorics of Zimmermann's Forest Formula can be adapted to dimensional regularization.

Dimensional regularization turned out to be very effective for practical calculations, in particular due to the fact that gauge invariance is not broken during the renormalization process. Its conceptual basis is, however, not very transparent.

Quite the contrary is true for EG renormalization. This method is based on the observation that time-ordered products of local fields can, up to coinciding points, be performed as operator products. The latter are well defined in the sense of operator valued distributions on Fock space, as shown originally by G\aa{}rding and Wightman \cite{GW64}. Thus we know, from the very beginning, the time-ordered products everywhere up to coinciding points.  Then, using an induction process, we can prove that the time-ordered product of $n$ local fields is, outside the thin diagonal (where all arguments coincide), uniquely determined in terms of lower order time-ordered products. The induction step amounts to an extension of a distribution in the relative coordinates which is defined for test functions which vanish in the neighborhood of the origin, to all test functions. The latter process is ambiguous, and the ambiguities correspond to the freedom of adding finite counter terms to the interaction Lagrangean.

The nice features of EG renormalization, which in particular allow renormalization on generic globally hyperbolic spacetimes \cite{BF00,HW01,HW02} are, unfortunately, connected with the difficulty of carrying through explicit calculations. Nevertheless, quite a number of  computations have been performed within this framework (see, e.g. \cite{Sch89,GraciaBondiaLazzarini2003,DF04}). There remains, however, an impression that, essentially, one needs a new idea for every new calculation.
The main purpose of this paper is to develop a method for practical calculations which is always applicable. 

A similar problem arises when one tries to analyze the combinatorics of EG renormalization. There are interesting attempts in this direction, see e.g. \cite{GBL00,Pin00b,BK05,BBK09}, but the obtained picture is not yet completely satisfactory. 

The method we describe in this paper is based on the Main Theorem of Renormalization. This theorem, originally formulated in an unpublished preprint by Stora and Popineau \cite{SP82}, was later generalized and improved, in particular by Pinter \cite{Pin01}. Its final version, which relies heavily on a proof of Stora's ``Action Ward Identity'' \cite{DF04,DF07}, was obtained in \cite{HW03,DF04} and was then further analyzed  in \cite{BDF09}. The main statement is that the ambiguity of associating a perturbative quantum field theory to an interaction Lagrangean is described in terms of a group of formal diffeomorphisms (tangent to the identity) on the space of interaction Lagrangeans. Such a group also appears in the work of Connes and Kreimer, and one of the aims of the present paper is to understand the relations between the two frameworks.

The first insight is that, due to the Main Theorem of Renormalization, the combinatorics of finite renormalizations derives from an iterated application of the chain rule. In fact this combinatorics was investigated long ago by Fa\`a di Bruno \cite{FdB1855}, and the relation to the combinatorics observed in perturbation theory was nicely described in \cite{FGB05}. 

One may now, after performing the construction of the theory by the EG procedure, introduce a regularization and  ask for a renormalization group element which subtracts the counter terms in such a way that the regularized theory converges to the given theory. If the regularized theory depends meromorphically on the regularization parameter, it is clear that the principal part of the Laurent series of the regularized theory must coincide with that of the counter terms. It is now tempting to identify the counter terms with the principal part and to define a new theory corresponding to minimal subtraction.

The arising method works in an inductive way by proceeding order by order and inserting the results of the lower orders into the calculations for the next order. An obvious question is whether the result at 
$n$th order can be obtained directly, as in Zimmermann's Forest Formula. We will derive such a formula in the framework of Epstein-Glaser renormalization.

Dimensional regularization in position space amounts essentially to a change of the order of the Bessel functions defining the propagators. 
Such a procedure was first proposed by Bollini and Giambiagi \cite{BG96} and was also tested in several examples by \cite{GKP07}. 
It can be viewed as a particular 'analytic regularization', as introduced by Speer in the context of BPHZ-renormalization 
long ago \cite{Speer1971}, and applied to EG renormalization by Hollands \cite{Hollands2007}.
A different approach had been taken by Rosen and Wright \cite{RoW}: they implement dimensional regularization in $x$-space
 by making replacements on the level of the position space Feynmann rules. In particular, the spacetime coordinate $x$ is replaced by $X=(x,\hat{x})$, where $\hat{x}$ is a formal parameter corresponding to the ``integration over the complex dimension''. This approach, similar to the one taken by Breitenlohner and Maison \cite{BM77a}, seems to be very formal, since it is not clear if the algebraic relations postulated for the formal symbol can be fulfilled in any concrete model.

With the procedure of \cite{BG96}, which we adopt in this paper,
the regularized theory can be uniquely defined as a meromorphic function of the 
regularization parameter (which is called the ``dimension'' in the physics literature). Its analyticity 
property directly follows from the analytic dependence of Bessel functions on their 
order. The analytic continuation to a meromorphic function with a pole at the physical value of the regularization parameter can be performed by exploiting homogeneity properties. This 
appears very clearly in 
{\it massless} theories.
There, the inductive Epstein-Glaser construction of time-ordered products can 
be traced back to the extension of homogeneously or, for terms with divergent subdiagrams, almost 
homogeneously scaling distributions $t\in\Dcal^\prime(\mathbb{R}^l\setminus\{0\})$ to almost homogeneously scaling
distributions $\dot t\in\Dcal^\prime(\mathbb{R}^l)$. Existence and uniqueness of such extensions
are classified in Prop.~3.3 of \cite{DF04} (related theorems, which are precursors of this proposition, 
can be found in \cite[Thm.~3.2.3]{Hoer03} and \cite{HW02}):
\begin{prop}\label{alm-hom-scal} 
Let $t\in \Dcal^\prime(\mathbb{R}^l\setminus\{0\})$ scale
almost homogeneously with degree $\kappa\in\mathbb{C}$ and power $N\in\mathbb{N}_0$, i.e.
\begin{equation}\label{almosthomscal}
\sum_{r=1}^l(m_{x_r}\partial_{x_r}+\kappa)^{(N+1)}t=0
\end{equation}
where $N$ is the minimal natural number with this property and $m_{x_r}$ is a multiplication with the function $x\mapsto x_r$. (For $N=0$ the scaling is homogeneous.)
Then $t$ has an extension $\dot t\in \Dcal^\prime(\mathbb{R}^l)$ which also scales almost homogeneously with
degree $\kappa$ and with power $\dot N\geq N$.
\begin{itemize}
\item For $\kappa\not\in\mathbb{N}_0+l$ it holds: $\dot t$ is uniquely determined and it is $\dot N=N$;
\item for $\kappa\in\mathbb{N}_0+l$ we have: $\dot t$ is non-unique and $\dot N\in\{N,\,N+1\}$.
\end{itemize}
\end{prop}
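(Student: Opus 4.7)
The strategy is to first produce some extension $\dot t\in\Dcal'(\RR^l)$ via the standard scaling-degree extension theorem, and then correct it by a distribution supported at the origin so as to recover the almost-homogeneous scaling identity as completely as possible. Because $t$ scales almost homogeneously of degree $\kappa$, its Steinmann scaling degree at $0$ equals $\Re\kappa<\infty$, so \cite[Thm.~3.2.3]{Hoer03} (equivalently, the Epstein--Glaser projection construction) provides some $\dot t\in\Dcal'(\RR^l)$ restricting to $t$. Writing $E:=\sum_{r=1}^l m_{x_r}\partial_{x_r}$, the distribution $\varrho:=(E+\kappa)^{N+1}\dot t$ vanishes on $\RR^l\setminus\{0\}$ by (\ref{almosthomscal}), hence
\begin{equation*}
\varrho=\sum_{|\alpha|\leq M}c_\alpha\,\partial^\alpha\delta
\end{equation*}
for some $M\in\NN_0$ and coefficients $c_\alpha\in\CC$.

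A direct calculation yields $(E+\kappa)\,\partial^\alpha\delta=(\kappa-l-|\alpha|)\,\partial^\alpha\delta$, so $E+\kappa$ acts diagonally on the space $\Dcal'_{\{0\}}$ of distributions supported at the origin, with eigenvalue $\kappa-l-|\alpha|$ on $\partial^\alpha\delta$. If $\kappa\notin\NN_0+l$ every eigenvalue is nonzero, $(E+\kappa)^{N+1}$ is invertible on the finite-dimensional subspace containing $\varrho$, and one can choose $s\in\Dcal'_{\{0\}}$ with $(E+\kappa)^{N+1}s=\varrho$; then $\dot t-s$ is an almost-homogeneous extension of power $\leq N$, and the minimality of $N$ off the origin forces $\dot N=N$. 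Uniqueness follows because any two such extensions differ by some $\sigma\in\Dcal'_{\{0\}}$ annihilated by a power of $E+\kappa$, and the eigenvalue formula forces $\sigma=0$. If instead $\kappa\in\NN_0+l$, set $m:=\kappa-l$; the kernel of $E+\kappa$ on $\Dcal'_{\{0\}}$ is $\mathrm{span}\{\partial^\alpha\delta:|\alpha|=m\}$ while $E+\kappa$ is invertible on the complement. Decomposing $\varrho=\varrho_{\mathrm{res}}+\varrho_{\mathrm{nres}}$ accordingly and inverting on the non-resonant part yields an extension $\dot t'$ whose remaining obstruction lies in $\ker(E+\kappa)$, so $(E+\kappa)^{N+2}\dot t'=0$ and $\dot N\in\{N,N+1\}$, with the jump occurring precisely when $\varrho_{\mathrm{res}}\neq 0$. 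The residual ambiguity is exactly the addition of elements of $\mathrm{span}\{\partial^\alpha\delta:|\alpha|=m\}$, each of which is genuinely homogeneous of degree $\kappa$.

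The main technical obstacles are two: the existence of \emph{some} extension $\dot t$, handled by the scaling-degree theorem; and the verification that the obstruction $\varrho$ is genuinely supported at $\{0\}$, which is immediate from the vanishing of $(E+\kappa)^{N+1}t$ off the origin. Once these are in place, the proposition reduces to a finite-dimensional spectral analysis of $E+\kappa$ acting on derivatives of $\delta$, and the dichotomy $\kappa\notin\NN_0+l$ versus $\kappa\in\NN_0+l$ is precisely the resonance condition on the eigenvalues $\kappa-l-|\alpha|$; moreover, since different initial extensions differ by elements of $\Dcal'_{\{0\}}$ and $(E+\kappa)^{N+1}$ maps $\Dcal'_{\{0\}}$ into the non-resonant subspace, $\varrho_{\mathrm{res}}$ is an intrinsic invariant of $t$, so the possible jump $\dot N=N+1$ is unavoidable whenever it is nonzero.
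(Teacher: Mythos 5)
Your argument is correct, and it is genuinely different from what the paper does. The paper does not actually reprove this proposition: it quotes it from Prop.~3.3 of \cite{DF04}, and the closest it comes to a proof is the constructive treatment in Sect.~\ref{extension-hom}, where the scaling identity is iterated into the differential-renormalization formula \eqref{diffrenhom} --- multiplying by coordinates $\omega+1$ times lowers the scaling degree below $l$, the direct extension of Thm.~\ref{thm:Extension-sd} applies to the inner factor, and $\dot t$ is then \emph{defined} by \eqref{diffrenhom0}; the non-resonance condition $\kappa\notin l+\NN_0$ enters there as the non-vanishing of the denominators $\prod_j(l+j-\kappa)$, and the resonant case is only touched in a remark. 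Your route instead takes \emph{any} extension from the scaling-degree theorem, observes that the obstruction $\varrho=(E+\kappa)^{N+1}\dot t$ lives in $\Dcal'_{\{0\}}$, and reduces everything to the diagonal action of $E+\kappa$ on $\partial^\alpha\delta$ with eigenvalues $\kappa-l-|\alpha|$. This buys a cleaner and more complete proof of the full dichotomy: uniqueness in the non-resonant case falls out of the triviality of $\ker(E+\kappa)$ on $\Dcal'_{\{0\}}$, the resonant case with $\dot N\in\{N,N+1\}$ and the precise ambiguity $\mathrm{span}\{\partial^\alpha\delta:|\alpha|=\kappa-l\}$ are handled uniformly, and your observation that $\varrho_{\mathrm{res}}$ is independent of the initial extension is a nice intrinsic characterization of when the power jumps. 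What the paper's approach buys instead is an \emph{explicit closed formula} for $\dot t$ as a meromorphic function of the regularization parameter, which is what it actually needs downstream for minimal subtraction and the Forest Formula; your construction is existence-and-uniqueness only. One small point worth making explicit if you write this up: if you want the corrected extension to also preserve the scaling degree $\Re\kappa$ (rather than merely satisfy \eqref{almosthomscal}), you should start from the scaling-degree-preserving extension $\bar t$, so that $\varrho$ and hence the correction $s$ only contain $\partial^\alpha\delta$ with $|\alpha|\le\Re\kappa-l$; as stated, your argument does not need this, but the paper's axioms do.
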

In the {\it unregularized} theory we have $\kappa\in\mathbb{Z}$, and usually there are terms with $\kappa\in\mathbb{N}_0+l$.
Then, renormalization is non-unique and homogeneous scaling may be broken by logarithmic terms. In the regularized theory, however, $\kappa\not\in\mathbb{Z}$, hence the extension is unique and always homogeneous.

We will show that a similar method works also for the massive case. 

The calculation of principal parts can be performed in terms of integrals in the 
complex plane.
If one wants to iterate the subtraction procedure one has to do these integrations independently. 
This requires the ability to vary the ``dimensions'' of propagators independently.  This is possible in the position 
space formulation. Moreover, also the regularized propagators are distributions on 
Minkowski space. In the momentum space formulation, the dimension  of propagators
has to be chosen for subgraphs, and, in the case of overlapping divergences, these dimensions 
cannot independently be varied.  

A nice feature of dimensional regularization is that many structural properties are respected by the regularization which then are automatically satisfied by the minimally subtracted theory. This holds also for our method and includes in particular Poincar\'{e} invariance, unitarity and the validity of field equations.    
However, our version of dimensional regularization does not preserve gauge invariance, because the propagators are modified and not the integration measure.
A few thoughts, how one may possibly overcome this drawback, are given in the 'Conclusions and Outlook'.

\section{Functional approach and Epstein-Glaser Renormalization}\label{EG}
We restrict ourselves to the theory of a real scalar field. Let $\Ecal(\MM)$ denote the space of smooth functions on the
$d$-dimensional Minkowski space, equipped with its standard Fr\'echet topology, and consider the space of smooth maps $F:\Ecal(\MM)\to\CC$. Let us recall  the definition of smoothness used in infinite dimensional calculus (see, e.g., \cite{Neeb2005}). The derivative of $F$ at $\ph\in\Ecal(\MM)$ in the direction of $\psi$ is defined as
\be\label{de}
F^{(1)}(\ph)[\psi] \doteq \lim_{t\rightarrow 0}\frac{1}{t}\left(F(\ph + t\psi) - F(\ph)\right)\,,
\ee
whenever the limit exists. $F$ is called differentiable at $\ph$ if $F^{(1)}(\ph)[\psi]$ exists for all $\psi\in\Ecal(\MM)$. It is called continuously differentiable in an open neighborhood $U\subset \Ecal(\MM)$ if it is differentiable at all points of $U$ and
$F^{(1)}:U\times \Ecal(\MM)\rightarrow \RR, (\ph,\psi)\mapsto F^{(1)}(\ph)[\psi]$
is a continuous map. It is called a $\mathcal{C}^1$-map if it is continuous and continuously differentiable. Higher derivatives are defined for $\mathcal{C}^n$-maps by 
\be
F^{(n)} (\ph)(\psi_1 , \ldots , \psi_n ) \doteq \lim_{t\rightarrow 0}\frac{1}{t}\big(F^{(n-1)}  (\ph + t\psi_n )(\psi_1 , \ldots, \psi_{n-1} ) -
 F^{(n-1)}(\ph)(\psi_1 , \ldots, \psi_{n-1}) \big)
 \ee
In particular, it means that if $F$ is a smooth functional on $\Ecal(\MM)$, then its $n$-th derivative at the point $\ph\in\Ecal(\MM)$ is a compactly supported distributional density $F^{(n)}(\ph)\in\Ecal'(\MM^n)$. There is a distinguished volume form on $\MM$, so we can use it to construct densities from functions and to provide an embedding of $\Dcal(\MM^n)$
 into $\Ecal'(\MM^n)$. Using the distinguished volume form we can identify derivatives $F^{(n)}(\ph)$ with distributions.

An important property of a functional is its spacetime support. It is defined as a generalization of the distributional support, namely as the set of 
points $x\in \MM$ such that $F$ depends on the field configuration in any neighborhood of $x$.
\begin{align}\label{support}
\supp\, F\doteq\{ & x\in \MM|\forall \text{ neighborhoods }U\text{ of }x\ \exists \ph,\psi\in\Ecal(\MM), \supp\,\psi\subset U 
\\ & \text{ such that }F(\ph+\psi)\not= F(\ph)\}\ .\nonumber
\end{align}
Here we will discuss only compactly supported functionals.
Finally we assume that the wave front set of the distribution $F^{(n)}(\ph)$, considered as a subset of the cotangent bundle
$T^*(\MM^n)=\MM^n\times\MM^n$, does not intersect the set $\MM^n\times(V_+^n\cup V_-^n)$
where $V_{\pm}$ denotes the closed forward and backward light cone, respectively. Functionals fulfilling all the conditions listed above are called microcausal, and the space of such functionals is denoted by $\Fcal$. It contains a subspace $\Floc$, the space of local  functionals, characterized by the additivity condition
\be\label{add}
F(\ph+\psi+\chi)=F(\ph+\psi)-F(\psi)+F(\psi+\chi) \ \text{ if }\supp\phi\cap\supp\chi=\emptyset
\ee   
(as shown in \cite{BDF09} this implies that the derivatives $F^{(n)}(\ph)$ have support on the thin 
diagonal $\Delta=\{(x,\dots,x)|x\in\MM\}$). In addition, the wave front sets of derivatives of local functionals are required to be perpendicular to the tangent bundle of $\Delta$, considered as a subset of the tangent 
space of $\MM^n$. $\Floc$ contains the functionals which occur as local interactions in the EG framework, e.g.
$F(\ph)=\int\, \ph(x)^4f(x)d^dx$ with a test function $f$ with compact support. Finite sums of pointwise products of local functionals form a subalgebra of $\mathcal{F}$, which is called the algebra of multilocal functionals, and we denote it by $\mathcal{F}_{\mathrm{ml}}$. It was shown in \cite{BFLR} that local functionals in the above sense are of the form:
\[
F(\ph)=\int f(x,\ph(x),\partial\ph(x),\dots)d^4x 
\]
where $f$ depends smoothly on $x$ and, for a fixed $\ph$, on finitely many derivatives%
\footnote{In \cite{BFLR} it was shown, with the use of the fundamental theorem of calculus, that $F(\ph)=\int f(j_x^\infty(\ph)) d^4x$, 
where $j_x^\infty(\ph)=(x,\ph(x),\partial\ph(x),\dots)$ is the infinite jet of $\ph$ at $x$. Moreover, the functional derivatives of $F$ 
have compact support on the thin diagonal and are therefore finite derivatives of the $\delta$-distribution in the relative variables 
(i.e.~denoting the latter by $\delta_\Delta$ it holds
$\langle\delta_\Delta,h\rangle=\int d^dx\, h(x,\dots,x)\ ,\,\, h\in\mathcal D(\mathbb M^n)$),
with coefficients which are smooth functions of $x$.} of $\ph$ at $x$.

Dynamics is introduced along the lines of \cite{BDF09} by means of a generalized Lagrangian. It is defined as a map $L:\Dcal(\MM)\rightarrow \Fcal_\loc$ satisfying 
\be\label{L:supp}
\supp(L(f))\subseteq \supp(f)\,,\qquad \forall\, f\in\Dcal(\MM)\,,
\ee
and the additivity condition \eqref{add} with respect to test functions. The action $S(L)$ is defined as an equivalence class of Lagrangians, where two Lagrangians $L_1$, $L_2$ are called equivalent $L_1\sim L_2$  if
\be\label{equ}
\supp (L_{1}-L_{2})(f)\subset\supp\, df\,, 
\ee
in particular two Lagrangians are identified if their densities differ by a total derivative.  
For the free Klein-Gordon field the generalized Lagrangian is given by:
\be\label{free:action}
L_0(f)(\ph)=\frac{1}{2}\int (\partial_\mu\ph\partial^\mu\ph-m^2\ph^2)fd^4x\,.
\ee 
The second functional derivative of $L_0$ induces a 
linear operator, which in our case is the Klein-Gordon operator $P=\Box+m^2$. The free quantized theory is defined by means of deformation quantization of the classical Poisson structure induced by $P$ (see \cite{DF01b,DF01a,BDF09} for details). On Minkowski spacetime one can perform this deformation using the Wightman 2-point function $\Delta_+$, to define a non-commutative product
\be\label{star:prod}
(F\star G)(\ph)\doteq\sum\limits_{n=0}^\infty \frac{\hbar^n}{n!}\left<F^{(n)}(\ph),\left(\Delta_+\right)^{\otimes n}G^{(n)}(\ph)\right>\,,
\ee 
which is interpreted as the operator product of the free theory.

Interaction is introduced in terms of time-ordered products. Let us first consider regular functionals, i.e. such that $F^{(n)}(\ph)\in\Dcal(\MM^n)$ for all $\ph\in\Ecal(\MM)$, $n\in\NN$. We denote the space of such functionals by $\Fcal_\reg$. 
Time-ordered products $\TT_n$, defined on $\Fcal_\reg[[\hbar]]$, are equivalent to the pointwise product 
\be\label{pointwise-product}
m_n(F_1\otimes\dots\otimes F_n)(\ph)=F_1(\ph)\dots F_n(\ph)
\ee 
by the ``heat kernel''
\be
\TT=e^{\frac12 D}
\ee
with $D=\langle\hbar \Delta_F,\frac{\delta^2}{\delta\ph^2}\rangle$ ($\Delta_F$ denotes 
the Feynman propagator), i.e.
\be\label{TT-unren}
\TT_n=\TT\circ m_n\circ (\TT^{-1})^{\otimes n} \ .
\ee
Using Leibniz' rule 
\be
\frac{\delta}{\delta\ph}\circ m_n=m_n\circ(\sum_{i=1}^n\frac{\delta}{\delta\ph_i}) 
\ee
(here an element of the $n$th tensor power of $\Fcal_\reg$ is considered as a functional of $n$ independent field configurations $\ph_1,\dots,\ph_n$) and the notation
\be\label{F1}
D_{ij}=\langle\hbar\Delta_F, \frac{\delta^2}{\delta\ph_i\delta\ph_j}\rangle
\ee
we obtain $\TT_n=m_n\circ T_n$, where
\be\label{F2}
T_n=e^{\sum_{i<j}D_{ij}}=\prod_{i<j}\sum_{l_{ij}=0}^{\infty}\frac{D_{ij}^{l_{ij}}}{l_{ij}!}
\ee
Note that the time-ordered product is commutative and associative. In this paper we will use consequently the calligraphic script (for example $\TT_n$) to denote objects involving the multiplication $m_n$, while roman letters (like $T_n$) are reserved for ``bare'' expressions, where $m_n$ is not applied.

The exponential in the formula \eqref{F2}
might be expanded into a formal power series. This yields the usual graphical description for time-ordered products, since the right hand side of \eqref{F2} may be written as a sum over all graphs $\Gamma$ with vertices $V(\Gamma)=\{1,\dots,n\}$ and $l_{ij}$ lines $e\in E(\Gamma)$ connecting the vertices $i$ and $j$.
We set $l_{ij}=l_{ji}$ for $i>j$ and $l_{ii}=0$ (no tadpoles). If $e$ connects $i$ and $j$ we set $\partial e:=\{i,j\}$.
Then we obtain
\be\label{time:ord}
T_n=\sum_{\Gamma\in \Gcal_n}T_{\Gamma}\,,
\ee
with $\Gcal_n$ the set of all graphs with vertex set $V(\Gamma)=\{1,\dots n\}$ and 
\be\label{GraphDO}
T_{\Gamma}=\frac{1}{\textrm{Sym}(\Gamma)}\langle t_{\Gamma},\delta_{\Gamma}\rangle\,,
\ee
where 
\[\delta_{\Gamma}=\frac{\delta^{2\,|E(\Gamma)|}}{\prod_{i\in V(\Gamma)}\prod_{e:i\in\partial e}\delta\ph_i(x_{e,i})}\]
is a functional differential operator on $\Fcal_\reg^{\otimes n}$,
\be\label{SGamma}
t_{\Gamma}=\prod_{e\in E(\Gamma)}\hbar\Delta_F(x_{e,i},i\in\partial e)
\ee
and the, so called, symmetry factor $\textrm{Sym}$ is the number of possible permutations of lines joining 
the same two vertices, $\textrm{Sym}(\Gamma)=\prod_{i<j}l_{ij}!$. We point out that in our approach the Feynman 
graphs are not fundamental objects of the theory, instead they are a bijective graphical description of the 
terms appearing in the exponential function \eqref{F2}, from which one can read off
the analytic expression (including the numerical prefactor) of each term/graph.
{\small \begin{example}[Graph expansion]
Regard three  functionals $F,G,H\in\Fcal_\reg$. Their time-ordered product is given by
\be
\TT_3(F\otimes G\otimes H)(\varphi)=
\sum_{k=0}^\infty\frac1{k!}\left(D_{12}+D_{23}+D_{13}\right)^k F(\varphi_1) G(\varphi_2)H(\varphi_3)
\bigg|_{\varphi_1=\varphi_2=\varphi_3=\varphi}
\ee
Applying the multinomial theorem and inserting the definition for $D_{ij}$ gives
\begin{align}
\TT_3(F\otimes G\otimes H)(\varphi)&\nonumber\\
&\hspace{-30mm}=\sum_{k=0}^\infty\sum_{k_1+k_2+k_3=k}\frac{\hbar^k}{k_1!\,k_2!\,k_3!}\label{gexpansion}\\
&\hspace{-30mm}\qquad\left\langle (\Delta_F^{12})^{\otimes k_1}(\Delta_F^{23})^{\otimes k_2}(\Delta_F^{13})^{\otimes k_3}, F^{(k_1+k_3)}(\varphi_1) G^{(k_1+k_2)}(\varphi_2)H^{(k_2+k_3)}(\varphi_3) \right\rangle\bigg|_{\varphi_1=\varphi_2=\varphi_3=\varphi}\nonumber
\end{align}
The terms in this expression are identified as the usual Feynman graphs in the following way: we assign to lines Feynman propagators and functional derivatives derivatives of given functionals to vertices. Formula \eqref{gexpansion} can now be represented as:
\begin{align}
&=\FGH+\hbar\left(\FoneGHF+\FGoneHF+\FGHoneF\right)\\
&\phantom{=\FGH}+\hbar^2\left[\FoneGHoneF+\FoneGoneHF+\FGoneHoneF +\frac12\left(\FtwoGHF+\FGtwoHF+\FGHtwoF\right)\right]
+\mathcal{O}(\hbar^3)\nonumber
\end{align}
In case the functionals are polynomial in the field and its derivatives, only a finite number of functional derivatives are non-vanishing. Only those graphs remain where the valence at the vertices is bounded by the degree of the associated polynomial functionals. 
\end{example}}
For regular functionals $F\in\mathcal{F}_{\mathrm{reg}}$, the contraction $\langle t_{\Gamma},\delta_{\Gamma}\rangle$ is  well-defined since $t_{\Gamma}$ is applied to a test function in the corresponding dual space. For local functionals, however, the functional derivatives are of the form
\be\label{dF}
F^{(k)}(\ph)(x_1,\dots,x_k)=\sum_\bet f^{(k)}_\bet(x_{\text{cms}})\partial^{\bet}\de(x_{\text{rel}})
\ee
where $\beta\in\NN_0^{d(k-1)}$, test functions 
$f^{(k)}_\bet(x)\equiv f^{(k)}_\bet(\ph)(x)\in\Dcal(\MM)$ are functions of the center of 
mass coordinate $x_{\text{cms}}=(x_1+\dots+x_k)/k$
which depend on $\ph$, and $x_{\text{rel}}=(x_1-x_{\text{cms}},\dots, x_k-x_{\text{cms}})$ denotes the relative coordinates. 
Hence, the functional differential operator $\delta_\Gamma$ applied to $\mathcal{F}_{\mathrm{loc}}^{\otimes n}$ yields, at any $n$-tuple of field configurations $(\varphi_1,\dots,\varphi_n)$, a compactly supported distribution in the variables $x_{e,i},i\in\partial e, e\in E(\Gamma)$ with support on the partial diagonal $\Delta_{\Gamma}=\{x_{e,i}=x_{f,i},i\in\partial e\cap\partial f, e,f\in E(\Gamma)\}$ with a wavefront set perpendicular to $T\Delta_{\Gamma}$. Such a distribution can uniquely be written as a finite sum 
\be
f=\sum f_\beta\otimes\partial^\beta\delta\label{factorisation}
\ee
where $f_{\beta}\in\mathcal{D}(\Delta_\Gamma)$ and where $\partial^\beta\delta$ (with a multi-index $\beta$) is a partial derivative of the $\delta$-distribution on the orthogonal complement of $\Delta_\Gamma$. A concrete coordinatization of $\Delta_\Gamma$ can be given by the center of mass coordinates introduced above and the coordinates on the orthogonal complement can be chosen as the relative coordinates. To obtain \eqref{factorisation}, one has to write all partial derivatives $\partial_{x_{e,i}}$ in terms of partial derivatives in $x_{\text{cms}}$ and $x_{\text{rel}}$ coordinates. The former are applied on the $\ph$-dependent test function and produce $f_\beta$ and the latter are applied on the Dirac $\delta$ distribution.

Let $Y_{\Gamma}$ denote the vector space spanned by the distributions $\partial^\beta\delta$. $Y_\Gamma$ is graded by the order of the derivatives. The space of $Y_\Gamma$-valued test functions on $\Delta_\Gamma$ is denoted by $\mathcal{D}(\Delta_\Gamma,Y_\Gamma)$. One now has to define the action of the distribution $t_\Gamma$ as a linear functional on $\mathcal{D}(\Delta_\Gamma,Y_\Gamma)$,
\be\label{time ordered functions}
\langle t_\Gamma,f\rangle=\sum \langle t_{\Gamma}^{\beta},f_{\beta}\rangle
\ee
with numerical distributions $t_{\Gamma}^{\beta}\in\mathcal{D}'(\Delta_{\Gamma})$.
{\small\begin{example} Let $F_1=\int dx \,g(x)\,(\ph^2(\partial\ph)^2)(x)\ ,\,\,
F_2=\int dx \,h(x)\,\ph^3(x)\ ,\,\,g,h\in\Dcal(\MM)$, 
$t_\Gamma=\hbar^2\,\Delta_F(x_{11}-x_{12})\Delta_F(x_{21}-x_{22})$ and 
$\delta_\Gamma=\tfrac{\delta^4}{\delta\ph_1(x_{11})\delta\ph_1(x_{21})\delta\ph_2(x_{12})\delta\ph_2(x_{22})}$.
Then,
\begin{align*}
f=&\delta_\Gamma(F_1(\ph_1)F_2(\ph_2))=\int dx_1 \,g(x_1)\int dx_2 \,h(x_2)\,6\,\ph_2(x_2)\,\delta(x_{12}-x_2)\delta(x_{22}-x_2)\\
&\Bigl(2\,(\partial\ph_1)^2(x_1)\,\delta(x_{11}-x_1)\delta(x_{21}-x_1)+2\,\ph_1^2(x_1)\,
\partial\delta(x_{11}-x_1)\partial\delta(x_{21}-x_1)\\
&-4\,(\ph_1\partial\ph_1)(x_1)\bigl(\delta(x_{11}-x_1)\partial\delta(x_{21}-x_1)+(x_{11}\leftrightarrow x_{21}\bigr)\Bigr)
\end{align*}
and with that we obtain
\begin{align*}
\langle t_\Gamma,&f\rangle=\hbar^2\,\int dx_1 \,g(x_1)\int dx_2 \,h(x_2)\,\Bigl(
12\,[(\Delta_F(x_1-x_2))^2]^\mathbf{\cdot}\,(\partial\ph_1)^2(x_1)\,\ph_2(x_2)+\\
&12\,[(\partial\Delta_F(x_1-x_2))^2]^\mathbf{\cdot}\,\ph_1^2(x_1)\,\ph_2(x_2)+
48\,[\Delta_F(x_1-x_2)\partial\Delta_F(x_1-x_2)]^\mathbf{\cdot}\,\ph_1(x_1)\partial\ph_1(x_1)\,\ph_2(x_2)\Bigr)\ .
\end{align*}
Hence, modulo constant prefactor
the appearing numerical distributions $t^\beta_\Gamma$ are 
the extensions (denoted by $[\cdots]^\mathbf{\cdot}$)
of $(\Delta_F(x_1-x_2))^2$, $(\partial\Delta_F(x_1-x_2))^2$
and $\Delta_F(x_1-x_2)\partial\Delta_F(x_1-x_2)$, respectively,
from $\Dcal'(\MM^2\setminus\Delta_2)$ to $\Dcal'(\MM^2)$ (where 
$\Delta_2$ is the diagonal $x_1=x_2$).
\end{example}}
The method of Epstein and Glaser proceeds by induction. One proves that if 
$t_{\Gamma'}$ is known for all graphs $\Gamma'$ with less vertices than $\Gamma$, then $t_\Gamma$ can be uniquely defined for all disconnected, all connected one particle reducible and all one particle irreducible one vertex reducible graphs. For the remaining graphs (which we call EG-irreducible) one can define it uniquely on all distributions $f\in\mathcal{D}(\Delta_{\Gamma},Y_\Gamma)$ of the form above where $f_\beta$ vanishes together with all its derivatives of order  $\leq\omega_\Gamma+|\beta|$ on the thin diagonal of $\Delta_\Gamma$. Here
\[\omega_\Gamma=(d-2)|E(\Gamma)|-d(|V(\Gamma)|-1)\]
is the degree of divergence of the graph $\Gamma$.
We  denote this subspace by $\mathcal{D}_{\omega_{\Gamma}}(\Delta_\Gamma,Y_\Gamma)$ .

Renormalization then amounts to project a generic $f$ to this subspace by a translation invariant projection $W_{\Gamma}:\mathcal{D}(\Delta_\Gamma,Y_\Gamma)\to\mathcal{D}_{\omega_\Gamma}(\Delta_\Gamma,Y_\Gamma)$. Different renormalizations differ by different choices of the projections $W_\Gamma$.

The ambiguity is best described in terms of the Main Theorem of Renormalization \cite{SP82,Pin01,DF04,BDF09}. 
Let the formal S-matrix be defined as the generating functional of time-ordered products, formally given by
\be
\Scal=\TT\circ\exp\circ \TT^{-1}\ ,
\ee  
i.e., its $n$th derivative at zero, $\TT_n\equiv \Scal^{(n)}(0)$, as a linear map 
$\TT_n:\Floc[[\hbar]]^{\otimes n}\to\Fcal[[\hbar]]$ is the (renormalized) time-ordered $n$-fold product. 
Given two $S$-matrices $\Scal$ and $\hat{\Scal}$ fulfilling the Epstein-Glaser axioms, there exists
 a unique analytic map $\Zcal:\Floc[[\hbar]]\to\Floc[[\hbar]]$ with $\Zcal(0)=0$ such that
\be
\hat{\Scal}=\Scal\circ \Zcal\ .
\ee   
To first order, this relation gives\footnote{Similarly to $\Scal^{(n)}$ we write $\Zcal^{(n)}$ for $\Zcal^{(n)}(0)$.} 
$\Zcal^{(1)}=\mathrm{id}$. 
The maps $\Zcal$ relating different $S$-matrices in this way form the renormalization group  
$\mathscr{R}$ in the sense of St\"uckelberg and Petermann. It is a subset of the group of formal diffeomorphisms 
(tangent to the identity) on the space of interactions. A direct definition 
of $\mathscr{R}$ by the properties of the maps $\Zcal$ 
is given in \cite{DF04,BDF09}: $\mathscr{R}$ has the structure of an affine space,
\be\label{R-affine}
\mathscr{R}=\mathrm{id}+\hbar\,\mathscr{V}[[\hbar]]\ ,
\ee
where $\mathscr{V}[[\hbar]]$ is a {\it vector space} of formal power series $\mathcal{V}=\sum_{n=0}^\infty \mathcal{V}_n\,\hbar^n$, which are analytic maps
$\mathcal{V}:\Floc[[\hbar]]\to\Floc[[\hbar]$, the main defining properties of elements of $\mathscr{V}$ are $\mathcal{V}(0)=0\ ,\,\,\mathcal{V}^{(1)}(0)=0$ and {\it locality},
\be\label{R-locality}
\mathcal{V}(F+G+H)=\mathcal{V}(F+G)-\mathcal{V}(G)+\mathcal{V}(G+H)\quad\text{if}\quad \supp\,F\cap\supp\,H=\emptyset\ ,\quad\forall\ \mathcal{V}\in\mathscr{V}\ .
\ee
To show that $\mathscr{R}$ is indeed a group, one needs additionally the property (proved in \cite{DF04}) that,
given an S-matrix $\Scal$ and $\Zcal\in  \mathscr{R}$, the composition $\hat{\Scal}:=\Scal\circ\Zcal$ satisfies 
also the Epstein-Glaser axioms.

One of the great virtues of the Epstein-Glaser approach is that it does not involve any divergences, and that it is explicitly independent of any regularization prescription. It can therefore be used for an \textit{a priori} definition of the problem of the perturbative construction of quantum field theory which then is solved by the method of renormalization. In other schemes usually only an \textit{a posteriori} definition is possible, and the independence of the construction from the chosen method relies on a comparison with other methods.

We have just outlined how to define the $n$-fold time-ordered products (i.e. multilinear maps $\TT_n$) by the procedure of Epstein and Glaser. An interesting question is whether the renormalized time-ordered product defined by such a sequence of multilinear maps can be understood as an iterated binary product on a suitable domain. 
Recently it was proven in  \cite{FRb} that this is indeed the case. The crucial observation is that 
pointwise multiplication of local functionals is injective. More precisely, let $\Fcal_0$ be the set of local functionals vanishing at some distinguished field configuration (say $\ph=0$). Iterated multiplication $m$ is then a linear map from the symmetric Fock space over $\Fcal_0$ onto $\Fcal_\ml$. Then, the following assertion holds:
\begin{prop}\label{beta}The multiplication $m:S^\bullet\Fcal_0\to\Fcal_{\ml}$ is bijective (where $S^k$ denotes the symmetrised tensor product of vector spaces).
\end{prop}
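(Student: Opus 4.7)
\textbf{Surjectivity} is immediate from the definitions. $\Fcal_{\ml}$ is the subalgebra of $\Fcal$ generated under pointwise multiplication by $\Fcal_\loc$, and each $F\in\Fcal_\loc$ splits as $F(0)\cdot\mathbf{1}+(F-F(0))$ with $F-F(0)\in\Fcal_0$. Hence $\Fcal_0$ together with the constants generate $\Fcal_{\ml}$, matching the image under $m$ of $S^\bullet\Fcal_0=\CC\oplus\Fcal_0\oplus S^2\Fcal_0\oplus\dots$, with $S^0\Fcal_0=\CC$ supplying the constants.

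For \textbf{injectivity}, write $\omega=\sum_{n\geq 0}\omega_n\in\ker m$ with $\omega_n\in S^n\Fcal_0$. My plan is to separate the tensor grades via \emph{cluster-evaluations} of derivatives at $\varphi=0$. The two key ingredients are (i) derivatives of local functionals are supported on the thin diagonal, so $F^{(k)}(\varphi)[\psi_1,\dots,\psi_k]=0$ for $F\in\Fcal_\loc$ whenever $\psi_1,\dots,\psi_k$ have pairwise disjoint supports and $k\geq 2$; and (ii) $F(0)=0$ for every $F\in\Fcal_0$. Consider a product $F_1\cdots F_m$ of elements of $\Fcal_0$ and evaluate $(F_1\cdots F_m)^{(k)}(0)$ on test functions grouped into $n$ pairwise disjointly supported clusters $\Psi_1,\dots,\Psi_n$ with $k=\sum_\ell|\Psi_\ell|$. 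The Leibniz expansion is a sum over assignments $\pi$ of clusters to factors; by (i), $\pi$ must be injective (two clusters sent to one factor would force a disjoint-support derivative of order $\geq 2$), and by (ii), $\pi$ must be surjective (factors missed by $\pi$ contribute $F_j(0)=0$). Hence only $m=n$ survives, yielding
\begin{equation*}
m(F_1\vee\cdots\vee F_n)^{(k)}(0)[\Psi_1,\dots,\Psi_n]=\sum_{\pi\in S_n}\prod_{\ell=1}^n F_{\pi(\ell)}^{(|\Psi_\ell|)}(0)[\Psi_\ell]\,,
\end{equation*}
so $m(\omega)=0$ implies that the corresponding weighted sum over any tensor decomposition of $\omega_n$ vanishes for every $n$ and every clustered configuration, reducing the problem to injectivity of $m$ on each $S^n\Fcal_0$ separately.

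\textbf{The main obstacle} is the final step, which amounts to algebraic independence of any finite collection of linearly independent local functionals in $\Fcal_0$. The cluster-evaluation data at $\varphi=0$ probes only the Taylor jets $(F^{(k)}(0))_{k\geq 1}$ of the factors, and since local densities in the representation $F(\varphi)=\int f(j_x^\infty\varphi)\,d^dx$ of~\cite{BFLR} need only be smooth (not analytic) in the jet variables, distinct elements of $\Fcal_0$ may share all derivatives at $0$. To close the gap I would re-run the cluster argument at arbitrary configurations by exploiting the additivity~\eqref{add}: for disjointly supported test functions $\chi_1,\dots,\chi_N$, each $G\in\Fcal_0$ satisfies $G(\sum_j s_j\chi_j)=\sum_j G(s_j\chi_j)$, converting a hypothetical polynomial relation $p(G_1,\dots,G_r)\equiv 0$ into a polynomial identity in $(s_1,\dots,s_N)$ built from the smooth one-variable functions $s\mapsto G_i(s\chi_j)$; a sufficiently rich choice of $\chi_j$ localised in disjoint regions where the jet-densities of $G_1,\dots,G_r$ have linearly independent germs then forces $p\equiv 0$.
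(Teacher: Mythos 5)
First, a remark on the comparison itself: the paper does not prove Proposition~\ref{beta} in the text — it quotes the result from \cite{FRb} — so your attempt can only be judged on its own merits and against the known argument. Your surjectivity paragraph is fine. The injectivity argument, however, breaks at its central step. The Leibniz expansion of $(F_1\cdots F_m)^{(k)}(0)$ is a sum over assignments of the \emph{individual test functions} to the factors, not over assignments of whole clusters. The diagonal-support property (i) only forbids a single factor from receiving arguments out of two \emph{different} clusters; it does not forbid one cluster from being split over several factors, each of which then receives arguments from that cluster alone. Such split terms survive and are generically nonzero: for $m=3$ factors and $n=2$ clusters with $\Psi_1=\{\psi,\psi'\}$, $|\Psi_2|=1$, the assignment $\psi\to F_1$, $\psi'\to F_2$, $\Psi_2\to F_3$ contributes $F_1^{(1)}(0)[\psi]\,F_2^{(1)}(0)[\psi']\,F_3^{(1)}(0)[\Psi_2]$, and there is no leftover factor to supply a vanishing $F_j(0)$. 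Hence the claim that ``only $m=n$ survives'' is false as soon as some cluster has at least two elements, the displayed identity does not isolate the grade-$n$ component, and the asserted reduction to injectivity of $m$ on each $S^n\Fcal_0$ is not established. (For singleton clusters the identity is correct, but then it only probes the first derivatives $F^{(1)}(0)$, which is far too little data to separate elements of $\Fcal_0$.)

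Second, as you yourself observe, any strategy anchored at $\varphi=0$ is structurally insufficient, since local densities are merely smooth in the jet variables and a nonzero $F\in\Fcal_0$ may have all $F^{(k)}(0)=0$. Your proposed repair — using additivity to get $G(\sum_j s_j\chi_j)=\sum_j G(s_j\chi_j)$ for disjointly supported $\chi_j$ and expanding products into sums over assignments of the $\chi_j$ to the factors — is indeed the right mechanism, and it is essentially how the argument of \cite{FRb} proceeds: counting which $\chi_j$ a given monomial in $(s_1,\dots,s_N)$ involves separates the tensor grades, and evaluation functionals at suitably localized configurations supply a point-separating family on any finite-dimensional subspace of $\Fcal_0$, which then kills each symmetric tensor. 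But in your write-up this is only a plan: the assertion that ``a sufficiently rich choice of $\chi_j$\dots forces $p\equiv 0$'' is precisely the statement that needs proof, and because the first reduction is broken you cannot in any case restrict attention to a single grade. As it stands, the proposal identifies the correct ingredients but does not constitute a proof.
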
 
Let $\beta=m^{-1}$. We now define the renormalized time-ordering operator on the space of multilocal functionals $\Fcal_{\ml}$ by
\be\label{T-ren}
\TT_\mathrm{ren}:=(\bigoplus_n  \TT_n)\circ\beta 
\ee  
This operator is a formal power series in $\hbar$ starting with the identity, hence it is injective.
The corresponding binary product $\T$ is now defined on the image of $\TT_\mathrm{ren}$ by
\be\label{Tprod}
F\T G\doteq \TT_\mathrm{ren}(\TT_\mathrm{ren}^{-1}F\cdot \TT_\mathrm{ren}^{-1}G)\,,
\ee
This product is equivalent to the pointwise product and, hence, it 
is in particular associative and commutative. Moreover, the $n$-fold iteration
of the binary product $\T$ applied to local functionals  
coincides with the linear 
map $\TT_n$ defined by the Epstein-Glaser procedure.

We may now use the St\"uckelberg-Petermann group $\mathscr{R}$ in order to establish a relation between the renormalized and the 
regularized S-matrix. Let $\Delta_F^{\Lambda}$ be a regularized Feynman propagator, and let the upper index $\Lambda$ 
indicate that in the formal construction the regularized propagator was used. A regularization should satisfy the 
condition that all the expressions for time-ordered products become meaningful for local functionals 
(still in the sense of formal power series in $\hbar$), 
and that the regularized propagators converge in the sense of the H\"ormander topology on distributions on 
$\RR^d\setminus\{0\}$ with the appropriate wave front sets and microlocal scaling degrees. The former condition is surely 
satisfied if $\Delta_F^{\Lambda}$ is a smooth function of rapid decrease. 
\section{Analytic regularization, Minimal Subtraction and Forest Formula}\label{sec:forest-MS}
Let $\Scal^{\Lambda}$ be the regularized S-matrix constructed from $\Delta_F^{\Lambda}$, more precisely $\Scal^{\Lambda}$ is the formal power series
$$
\Scal^{\Lambda}:=1+\mathrm{id}+\sum_{n=2}^\infty\frac{1}{n!}\,m_n\circ \exp\sum_{1\leq i<j\leq n}D^\Lambda_{ij}\ ,\quad
D^\Lambda_{ij}:=\langle\hbar\Delta^\Lambda_F, \frac{\delta^2}{\delta\ph_i\delta\ph_j}\rangle\ .
$$
To relate the construction of the S-matrix $\Scal$ of Epstein-Glaser to the method of divergent counter terms, we search a
family of renormalization group elements $Z^{\Lambda}\in\mathscr{R}$ such that
\be\label{counter terms}
\Scal=\lim_{\Lambda\to\Lambda_0}\Scal^{\Lambda}\circ \Zcal^{\Lambda} \ .
\ee
If $\Scal$ is given, then such a family $(\Zcal^{\Lambda})$ exists and this family  
is uniquely determined up to a sequence which converges to the identity (see the Appendix \ref{app:regularization} and \cite[Sect.~5.2]{BDF09}). 

A special role is played by analytic regularization schemes where $\Scal^{\Lambda}$ is a meromorphic function of 
$\Lambda\in\CC$ with a pole at the limit point $\Lambda_0$. In these cases there exists a distinguished 
choice $\Scal_{\MS}$ of the S-matrix (\emph{minimal subtraction}) and the corresponding family of renormalization group elements $\Zcal_{\MS}^{\Lambda}\in\mathscr{R}$. To construct these objects we start with the family $(\Zcal^{\Lambda})$ of meromorphic functions and we prove the so called Birkhoff decomposition (see \cite{Connes1999,CK00}, where such notion was first introduced in the context of renormalization):
\[
\Zcal^{\Lambda}=\Zcal^{\Lambda}_{\MS}\circ \Zcal_f^{\La}\,,
\] 
where $\Zcal_f^{\La}$ is regular and $\Zcal^{\Lambda}_{\MS}$ corresponds to subtracting the principal part. We prove this by induction. Let us consider the n-th functional derivative ${\Zcal^{\Lambda}}^{(n)}$ and we assume that for $k<n$ we have already constructed ${\Zcal_f^{\La}}^{(k)}$ and   ${\Zcal^{\Lambda}_{\MS}}^{(k)}$ in such a way that, for $l<n$, the chain rule (Fa\`a di Bruno formula) 
$${\Zcal^{\Lambda}}^{(l)}=\sum_{P\in\mathrm{Part}(\{1,\dots,l\})}(\Zcal^{\Lambda}_{\MS})^{(|P|)}\circ\bigotimes_{I\in P} (\Zcal^{\Lambda}_f)^{(|I|)}\quad\textrm{holds.}$$
The function
\[
{\Zcal^{\Lambda}}^{(n)}-\sum_{P\in\mathrm{Part}(\{1,\dots,n\})\atop 1<|P|<n
}(\Zcal^{\Lambda}_{\MS})^{(|P|)}\circ\bigotimes_{I\in P} (\Zcal^{\Lambda}_f)^{(|I|)}
\]
is meromorphic, so we can decompose it as a sum of the principal part, which we call ${\Zcal^{\Lambda}_{\MS}}^{(n)}$, and the rest term ${\Zcal_f^{\La}}^{(n)}$. This way we construct the $n$-th order derivative of  $\Zcal^{\Lambda}_{\MS}$ and we can proceed by induction. Using that $\Zcal_\Lambda\in\mathscr{R}$ one easily sees that $\Zcal^{\Lambda}_{\MS}$ satisfies \eqref{R-locality} for $G=0$. This implies the general case since all the quantities are formal power series (see Appendix B of \cite{BDF09}). It follows that $\Zcal^{\Lambda}_{\MS}$ and $\Zcal^\Lambda_f$ obtained by the above construction are elements of the St\"uckelberg-Petermann group $\mathscr{R}$.

By construction, $\Zcal_f^{\La}$ has a well defined limit as $\La$ approaches $\La_0$ and, since it is invertible, we can define $\Scal_{\MS}^\La:= \Scal^\La\circ \Zcal^\La\circ{\Zcal_f^{\La}}^{-1}=\Scal^\La\circ \Zcal_{\MS}^\La$ and this expression also has a well defined limit,
\[
\Scal_{\MS}:= \lim\limits_{\La\rightarrow \La_0}\Scal_{\MS}^\La\,.
\]
It can be expressed as $\Scal_{\MS}=\Scal\circ \Zcal_f^{-1}$, where $\Zcal_f:=\lim\limits_{\Lambda\rightarrow \La_0}\Zcal^\La_f$ and, because $\Zcal_f$ is an element of $\Rcal$, $\Scal_{\MS}$ is an S-matrix fulfilling the Epstein-Glaser axioms. 
It is the generating functional for minimally subtracted time-ordered products ($MS$ scheme). 

We will now derive a useful recursive formula for $\Zcal^{\Lambda}_{\MS}$. Consider the functional derivative
\be\label{FaadiBruno}
(\Scal_{\MS}^\La)^{(n)}=(\Zcal_{\MS}^\La)^{(n)}+\sum_{P\in\mathrm{Part}(\{1,\dots,n\})\atop 1<|P|}(\Scal^{\Lambda})^{|P|}\left(\bigotimes_{I\in P} (\Zcal_{\MS}^{\Lambda})^{|I|}\right)\,.
\ee
Since $(\Scal_{\MS}^\La)^{(n)}$ converges for $\La\rightarrow \La_0$, the principal parts of the summands above must add up to 0, so we obtain a recursion
\be\label{recursion}
(\Zcal^{\Lambda}_{\MS})^{(n)}=-\pp 
\sum_{|P|>1}(\Scal^{\Lambda})^{|P|}\left(\bigotimes_{I\in P} (\Zcal^{\Lambda}_{\MS})^{|I|}\right)
\ee
together with $(\Zcal^{\Lambda}_{\MS})^{(1)}=\mathrm{id}$.

One may now solve the recursive definition of the minimally subtracted S-matrix in terms of an 
analogue of Zimmermann's Forest Formula.
We define an Epstein-Glaser forest $F=\{I_1,...,I_k\}\in \mathfrak{F}_{\bar n}$ 
to be a set of subsets $I_j\subset\bar n:=\{1,\dots,n\}$ 
which contain at least two elements, $|I_j|\geq 2$, and which satisfy
\be
I_i\cap I_j=\emptyset\quad\text{or}\quad I_i\subset I_j \quad\text{or}\quad 
I_j\subset I_i\quad\forall i<j\ .\nonumber
\ee
The empty set of subsets is referred to as the empty forest. We assume that we can vary the regularization 
parameters $\Lambda_{ij}$ independently for every pair of indices $1\le i<j \le n$ such that the regularized 
time-ordered product is a well defined meromorphic function in all these variables \footnote{For the definition 
of the meromorphic function of several variables, see for example \cite{Lang}.}. 
More precisely, we 
assume that in the graph expansion every distribution $t_{\Gamma}^{\beta}$ (see \eqref{time ordered functions}) 
is, after evaluation on a test function, an analytic function on a suitable domain which can be extended to a meromorphic function on a domain containing $\{\Lambda_{ij}=\Lambda_0, 1\le i<j\le n\}$.  
Now, given a forest 
$F\in \mathfrak{F}_{\bar n}$, we reduce the number of parameters $\Lambda_{ij}$ as follows:
for each $I\in F$ we set $\Lambda_{ij}=\Lambda_I$ for all $i,j\in I$.
Let $R_I$ be (-1) times the projection onto the principal part with respect to the variable $\Lambda_I$.
We then obtain the EG Forest Formula 
\begin{thm}\label{forest}
\be\label{EGforest}
\Scal_{\MS}^{(n)}=\lim_{{\bf \Lambda}\to {\bf \Lambda}_0}
m_n\circ \Bigl(\sum_{F\in\mathfrak{F}_{\bar n}}\prod_{I\in F}R_I\Bigr)
\exp{\sum_{1\leq i<j\leq n}D^{\bf \Lambda}_{ij}}
\quad\quad({\bf \Lambda}\equiv(\Lambda_{ij})_{1\leq i<j\leq n})\ ,
\ee   
where as in Zimmermann's formula $R_I$ has to be applied before $R_J$ if $I\subset J$. The expression $\exp{\sum_{1\leq i<j\leq n}D^{\bf \Lambda}_{ij}}$ has to be understood as a meromorphic function obtained, term by term, by analytic continuation from the region where it exists due to sufficient regularity of the modified Feynman propagators.
\end{thm}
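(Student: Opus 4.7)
I would proceed by strong induction on $n$. The base case $n=1$ is immediate: since every $I$ in a forest must satisfy $|I|\geq 2$, the set $\mathfrak{F}_{\bar 1}$ contains only the empty forest, so the right-hand side collapses to $m_1 \circ e^{0} = \mathrm{id}$, which agrees with $\Scal_{\MS}^{(1)}=\mathrm{id}$.

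For the inductive step I would strengthen the hypothesis so that for every proper subset $I \subsetneq \bar n$ with $|I|\geq 2$ one has, in addition to the stated formula for $(\Scal_{\MS}^\Lambda)^{(|I|)}$, the parallel formula
\[
(\Zcal_{\MS}^\Lambda)^{(|I|)} \;=\; m_{|I|}\circ R_I\,\tilde Y_I, \qquad \tilde Y_I \;:=\; \sum_{\substack{F\in\mathfrak{F}_I\\ I\notin F}}\prod_{J\in F}R_J \,\exp\!\sum_{\substack{a<b\\ a,b\in I}}D^{\bf\Lambda}_{ab}.
\]
Writing $Y_I = \sum_{F\in\mathfrak{F}_I}\prod_{J\in F}R_J\,e^{\sum D^{\bf\Lambda}_{ab}}$ and splitting forests according to whether $I\in F$ gives $Y_I=(1+R_I)\tilde Y_I$, so the two statements are equivalent via $(\Scal_{\MS}^\Lambda)^{(|I|)}=m_{|I|}\circ Y_I = m_{|I|}\circ\tilde Y_I + (\Zcal_{\MS}^\Lambda)^{(|I|)}$. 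Combining Faà di Bruno
\[
(\Scal_{\MS}^\Lambda)^{(n)} \;=\; (\Zcal_{\MS}^\Lambda)^{(n)} + \sum_{|P|>1}(\Scal^\Lambda)^{(|P|)}\!\left(\bigotimes_{J\in P}(\Zcal_{\MS}^\Lambda)^{(|J|)}\right)
\]
with the recursion \eqref{recursion} for $(\Zcal_{\MS}^\Lambda)^{(n)}$ then identifies $(\Scal_{\MS}^\Lambda)^{(n)} = (1+R_{\bar n})\Sigma$, where $\Sigma$ denotes the $|P|>1$ sum on the right.

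Now I would substitute the inductive formula for each $(\Zcal_{\MS}^\Lambda)^{(|J|)}$ (using the convention $X_{\{i\}}=\mathrm{id}$ for singleton blocks) into $\Sigma$. The key algebraic identity is a Leibniz rule
\[
(\Scal^\Lambda)^{(|P|)}\!\left(\bigotimes_{J\in P} m_{|J|}\circ G_J\right)
\;=\; m_n\circ \Bigl(\prod_{J\in P}G_J\Bigr)\,\exp\!\sum_{\substack{i<j\\ i\not\sim_P j}}D^{\bf\Lambda}_{ij},
\]
which follows because the differential operators $D^{\bf\Lambda}_{ij}$ commute and are compatible with the pointwise multiplication in the standard way: the cross-block contractions survive on the outside while the within-block data is absorbed into the $G_J$'s. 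This moves every multiplication $m_{|J|}$ and the outer $m_{|P|}$ into a single outer $m_n$, at the cost of reinstating the cross-block propagator factor. The resulting sum is indexed by pairs $(P,(F_J)_{J\in P})$ with $|P|>1$ and $F_J$ a forest on $J$ containing $J$ as its unique maximal element (together with the convention that singleton blocks contribute no $F_J$).

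The combinatorial heart of the argument is then the bijection
\[
\bigl\{(P,(F_J)_{J\in P}) : |P|>1\bigr\} \;\longleftrightarrow\; \bigl\{F'\in\mathfrak{F}_{\bar n} : \bar n\notin F'\bigr\},
\]
obtained by reading $P$ as the partition of $\bar n$ whose non-singleton blocks are the maximal elements of $F'$ and taking $F_J = F'\cap\mathcal{P}(J)$. Under this bijection the product of the $R$-projections with their built-in ordering (smallest first) matches across the two sides, so $\Sigma = m_n\circ \tilde Y_{\bar n}$, and hence $(\Scal_{\MS}^\Lambda)^{(n)} = m_n\circ(1+R_{\bar n})\tilde Y_{\bar n} = m_n\circ\sum_{F\in\mathfrak{F}_{\bar n}}\prod_{I\in F}R_I\,\exp\sum D^{\bf\Lambda}_{ij}$. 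Taking ${\bf \Lambda}\to{\bf\Lambda}_0$ finishes the induction. The hardest part, I expect, will be checking the bijection carefully—in particular, tracking that the independent identifications $\Lambda_{ij}=\Lambda_I$ inside each inductively-produced $\tilde Y_J$ glue consistently with the cross-block $\Lambda_{ij}$'s reinstated by the Leibniz identity, and that the "smaller first" ordering of the $R_I$'s is automatically preserved because $R_{\bar n}$ is applied last in the recursion while the inner $R_J$'s for $J\subsetneq \bar n$ come from the inductive hypothesis.
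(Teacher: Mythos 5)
Your proposal is correct and follows essentially the same route as the paper's proof: the paper likewise defines the candidate $\Zcal_{\MS}^{(n)}$ as the sum over \emph{full} forests (your $m_n\circ R_{\bar n}\tilde Y_{\bar n}$), verifies the recursion \eqref{recursion} using the same Leibniz/associativity identity \eqref{as} and the same bijection between a forest and the pair (partition into maximal elements, full forests on the blocks), and then obtains \eqref{EGforest} by repeating the computation in the Fa\`a di Bruno formula with $R_{\{1,\dots,n\}}$ and the restriction $|P|>1$ omitted. The differences are purely cosmetic (full forests versus forests not containing $I$ multiplied by an explicit $R_I$, and the order in which the recursion is checked), and your concern about the $\Lambda$-identifications is resolved exactly as in the paper by noting that $R_M$ involves only the variables $\Lambda_{ij}$ with $i,j\in M$.
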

\begin{proof} We omit the index ${\bf \Lambda}$ belonging to each $\Zcal$ and each differential operator $D$,
since it is unessential to the proof. Let us define a full forest as a forest containing the set $\{1,\dots,n\}$, 
and let $\mathfrak{F}_{\bar n}^{\text{full}}$ denote the set of full forests.
We set $\Zcal^{(1)}:=\mathrm{id}$ and
\be\label{counter}
\Zcal^{(n)}:=m_n\circ \Bigl(\sum_{F\in \mathfrak{F}_{\bar n}^{\text{full}}}\prod_{I\in F}R_I\Bigr)\exp{\sum_{i<j}D_{ij}}\ ,
\ee
and we verify that it satisfies the recursion relation \eqref{recursion}, 
i.e. $\Zcal^{(n)}=\Zcal_{\MS}^{(n)}$.
In order to include the case $n=1$ into the formula \eqref{counter} we define 
$\mathfrak{F}_{\{1\}}^{\mathrm{full}}=\{\{1\}\}$ and we adopt the convention that $R_I=\mathrm{id}$ if $I$ contains only one element.

We proceed by induction. For $n=2$ the only full forest is $\{\{1,2\}\}$, hence
\be
\Zcal^{(2)}=m_2\circ R_{\{1,2\}}\exp{D_{12}} 
\ee
in agreement with the definition of minimal subtraction \eqref{recursion} in second order.
For $k>2$ we now assume that $\Zcal^{(n)}=\Zcal_{\MS}^{(n)}$ for all $n<k$.
Let $F\in\mathfrak{F}_{\bar k}^{\mathrm{full}}$ be a full forest. Then there exists a partition $P$ of $\bar k$ such that
\be\label{ff}
F=\{\{1,\dots,k\}\}\cup\bigcup_{L\in P}F_L
\ee 
with full forests $F_L\in\mathfrak{F}_{L}^{\mathrm{full}}$, and $P$ and $F_L$ are uniquely determined. Vice versa, given a partition $P$ and full forests $F_L$, $L\in P$,
equation \eqref{ff} defines a full forest.  
Using Leibniz' rule and the associativity of the pointwise product, we find for a partition $P$ of ${\bar k}$
\be\label{as}
m_{|P|}\circ\exp{\sum_{I<J\in P}D_{IJ}}
\left(\bigotimes_{L\in P}m_{|L|}\circ
\exp{\sum_{i<j\in L}D_{ij}}\right)=
m_k\circ\exp{\sum_{i<j}D_{ij}}
\ee
 where $I<J\in P$ means that $I,J\in P$ and the smallest element of $I$ is smaller than the smallest element of $J$.
This formula, applied to local functionals $F_1,\dots,F_k$, holds on a suitable domain in the deformation parameters and, by analytic continuation,  everywhere as an identity for meromorphic functions.  

We now insert the decomposition of a full forest \eqref{ff} into equation \eqref{counter} and use the identity \eqref{as}. 
We find
\be\label{Zkmin}
\Zcal^{(k)}=R_{\{1,\dots,k\}}\sum_{|P|>1}m_{|P|}\circ\exp{\sum_{I<J\in P}D_{IJ}}
\left(\bigotimes_{L\in P}m_{|L|}\circ\Bigl(\sum_{F\in \mathfrak{F}_L^{\text{full}}}\prod_{M\in F}R_M
\Bigr)\exp{\sum_{i<j\in L}D_{ij}}\right)
\ee
where we used the fact that the operation $R_M$  of taking the principal part involves only the variables $\Lambda_{ij}$ with $i,j\in M$.
But \eqref{Zkmin} is just the recursion relation which defines the minimal subtraction.  
 
In the last step we insert the formula \eqref{counter} into the Fa\`a di Bruno formula \eqref{FaadiBruno} and 
repeat the calculation  
with the modifications that $R_{\{1,\dots,k\}}$ and the restriction $|P|>1$
are omitted. As a result we obtain \eqref{EGforest}.
\end{proof}

{\small \begin{example}[Forest Formula for a particular graph]
The Forest Formula (\ref{EGforest}) can be broken down to the renormalization of individual graphs. As an example we want to regard the following overlapping divergence in $\varphi^4$-theory in 4 dimensions,
\be
G=\Catseye.
\ee
It is a contribution to the selfenergy in fourth order of causal perturbation theory. Introducing an irrelevant numbering of vertices the corresponding differential operator for the graph is:
\be
m_4\circ \left[D_{12}D_{13}D_{14}D_{23}^2D_{24}D_{34}\right]
\ee
Next we write down the basic subsets, from which the EG forests for any four point graph are built:
\be
\{1,2\},\{1,3\},\{1,4\},\{2,3\},\{2,4\},\{3,4\},
\{1,2,3\},\{1,3,4\},\{1,2,4\},\{2,3,4\},
\{1,2,3,4\}.
\ee
Only some of these subsets correspond to divergent subgraphs of $G$:
\be\label{div-graphs}
\{2,3\},
\{1,2,3\},\{2,3,4\},
\{1,2,3,4\}.
\ee
Thus, the relevant forests are
\be\nonumber
\begin{array}{cccccc}
\{\},    &\{23\},     &\{123\},     &\{234\},     &\{23,123\},     &\{23,234\},\\
\{1234\},&\{23,1234\},&\{123,1234\},&\{234,1234\},&\{23,123,1234\},&\{23,234,1234\},
\end{array}
\ee
where we wrote the normal forests in the first, and the corresponding full forests in the second line. The Forest Formula thus yields
\begin{align*}
G_\MS
&= m_4\circ [1+R_{23}+R_{123}+R_{234}+R_{123}R_{23}+R_{234}R_{23}+R_{1234}+R_{1234}R_{23}\\
&\phantom{==m_4\circ}+R_{1234}R_{123}+R_{1234}R_{234}+R_{1234}R_{123}R_{23}+R_{1234}R_{234}R_{23}]\;G\\
&= m_4\circ(1+R_{1234})(1+R_{123}+R_{234})(1+R_{23})\left[D_{12}D_{13}D_{14}D_{23}^2D_{24}D_{34}\right].
\end{align*}
The second equality, i.e.~that $\sum_{F}\prod_{I\in F}R_I$ can be written as $\prod(1+\sum R_I)$, is a peculiarity of this example,
which is due to the fact that each divergent subgraph \eqref{div-graphs} is a subgraph of all divergent (sub)graphs
of higher orders.
\end{example}}
Simpler examples, for which we will compute the projections $R_I$, are \ref{exp:triangle1}
and \ref{thm:doubletriangle}.

\section{Dimensional regularization in position space}\label{sec:dimreg}
To perform the analytic regularization required for minimal subtraction and the EG Forest Formula we have to find a distribution valued analytic function $\zeta\mapsto \Delta_F^{\zeta}$ with the following properties: for $\zeta =0$ the distribution should coincide with the Feynman propagator, the wave front set of $\Delta_F^{\zeta}$ should always be contained in the wave front set of $\Delta_F$, 
and the scaling degree \eqref{sd} of $\Delta_F^{\zeta}$, modulo smooth functions, should tend to $-\infty$ as the real part of $\zeta$ approaches infinity. Under these conditions each term in the graph expansion of the unrenormalized S-matrix is a well defined analytic function for suitable values of $\zeta$. One then has to perform an analytic extension to a meromorphic function with a pole at $\zeta =0$. For the Forest Formula we also require that this extension can be done individually for every propagator associated to a pair of vertices. Under these conditions minimal subtraction is well defined and the EG Forest Formula yields a closed expression. Depending on the choice of the analytic regularization the minimally subtracted S-matrix may automatically satisfy further conditions (e.g. Lorentz invariance, unitarity etc.).

In the following we concentrate on dimensional regularization. According to Bollini and Giambiaggi \cite{BG96} dimensional regularization in position space essentially amounts, in the massive case,  to a change in the index of the Bessel function appearing in the formula for the Feynman propagator. We have in $d$ dimensions
\[\Delta_F(x)=\lim_{\epsilon\searrow 0}\wcal^d(x^2-i\epsilon)\]
where
\[\wcal^d(z^2)=(2\pi)^{-\frac{d}{2}}m^{\frac{d}{2}-1}\sqrt{-z^2}^{1-\frac{d}{2}}K_{\frac{d}{2}-1}(m \sqrt{-z^2})\] 
with the modified Bessel function of the second kind $K_\nu$ with index $\nu=d/2-1$. The massless case is obtained as the limit $m\downarrow 0$.  The dimensionally regularized Feynman propagator is obtained by replacing $d$ by $d-2\zeta$ and, in order to keep the mass dimension constant, by multiplication with a factor $\mu^{2\zeta}$ with a mass parameter $\mu> 0$,
\be\label{regFeynprop}
\Delta^{\zeta}_F(x)=\mu^{2\zeta}\lim_{\epsilon\searrow 0}\wcal^{d-2\zeta}(x^2-i\epsilon)\ .
\ee
Since $x^{-\nu} K_\nu(x)$ is analytic in the cut plane $\CC\setminus\RR^0_-$ 
irrespective of the value of $\nu\in\CC$, the function $w^{d-2\zeta}$ is 
analytic on the complement of the positive real axis.   
Therefore $\Delta^{\zeta}_F(x)$ is, for all values of $\zeta$, the boundary value of an 
analytic function on the complexified Minkowski space with domain 
$\{x+iy|(x+iy)^2\not\in\mathbb{R}_-\}$, as for timelike $x$
the imaginary part $y$ approaches zero from the backward light cone if $x^0>0$ and 
from the forward light cone if $x^0<0$.
By \cite[Thm.~8.1.6]{Hoer03} the analyticity domain of a function determines the wave front 
set of its boundary values, hence $\WF(\De_F^\zeta)\subset\WF(\Delta_F)$.
Moreover, the scaling degree of $\Delta_F^\zeta$ is $\mathrm{max}\{d-2-2\Re\,\zeta,\,0\}$, 
as may be seen from the behavior of the Bessel function at the origin.

We now define the terms of the regularized S-matrix for suitable values of $\zeta$. Using the information on the scaling degrees as well as on the wave front sets we can proceed in the usual way for the Epstein-Glaser induction and  find, at every step, a unique expression which has the correct wave front sets and analyticity properties. It remains to construct the analytic continuations.

For this purpose we use the fact that $\Delta_F^\zeta$ can be written as a sum of homogeneous distributions and a rest term with sufficiently small scaling degree where every term is analytic in $\zeta$.  Every contribution in the expansion is then a product of homogeneous distributions and sufficiently well behaved functions. But for non-integer degree of homogeneity, homogeneous distributions have unique homogeneous extensions. This extension is analytic in $\zeta$, hence the whole expression is analytic for non-integer values of $\zeta$. In case one uses different values of $\zeta$ the degree of homogeneity differs from its value for $\zeta_{ij}=0$ by
a certain linear combination $\sum_{i<j}l_{ij}\zeta_{ij}$ of the $\zeta$-variables, and one may find poles at points where these degrees are integers.  

The choice of dimensional regularization has further nice properties. First of all, since the regularized propagators are Lorentz invariant, one automatically obtains a Lorentz invariant S-matrix. Moreover, due to $\overline{\Delta_F^\zeta(x)}=\Delta_{AF}^{\overline{\zeta}}(x)$ (where $\Delta_{AF}^{\zeta}$ is the regularized anti-Feynman 
propagator which is obtained from \eqref{regFeynprop} by reversing the sign of $\epsilon$),
the S-matrix is unitary. Finally, the condition that the field equation holds, is due to the fact that $\Delta_F^\zeta$ is analytic at $\zeta=0$.  

Before we enter in the details of the regularization of the two point functions 
(sect.~\ref{sec:DimRegHadamard}) and of the construction of a pertinent sequence $(\TT^\bzeta_n)$
of regularized time-ordered products 
(sect.~\ref{sec:dimregS}), we study regularization and minimal subtraction on the 
level of numerical distributions in the Epstein-Glaser framework.

\subsection{Regularization of numerical distributions}\label{MSnumerical}

In a translation invariant framework, perturbative renormalization can be understood in $x$-space as the extension
of distributions $t\in\Dcal'(\RR^n\setminus\{0\})$ to distributions $\dot t\in\Dcal'(\RR^n)$ \cite{Stora1993}. 
This mathematical problem is 
treated, for example, in \cite{BF00,DF04}. 
The existence and uniqueness of extensions $\dot t$ can be answered in terms of
Steinmann's scaling degree \cite{Ste71},
\be\label{sd}
\mathrm{sd}(t):=\inf\{\omega\in\RR\,|\,\lim_{\rho\downarrow 0}\rho^\omega\,t(\rho x)=0\}\ ,\quad
t\in\Dcal'(\RR^n)\quad\text{or}\quad  t\in\Dcal'(\RR^n\setminus\{0\})\ .
\ee
For the complete existence and uniqueness theorem we refer to \cite{BF00}, we only mention the following partial result. 
\begin{thm}[{\cite{Ste71,BF00}}]\label{thm:Extension-sd} 
For $\lambda\in\RR$ let
\be
\Dcal_\lambda(\RR^n):=\{f\in\Dcal(\RR^n)\,|\,(\partial^\alpha f)(0)=0\,\,\,\forall |\alpha|\leq\lambda\}
\ee
(in particular $\Dcal_\lambda(\RR^n)=\Dcal(\RR^n)$ if $\lambda <0$)
and let $\Dcal'_\lambda(\RR^n)$ be the corresponding space of distributions.
A distribution $t\in\Dcal'(\RR^n\setminus\{0\})$ with scaling degree 
$\sd(t)$ has a unique extension $\bar{t}\in\Dcal_\lambda'(\RR^n)$, $\lambda=\sd(t)-n$, which satisfies the condition
$\sd(\bar{t})=\sd(t)$.
\end{thm}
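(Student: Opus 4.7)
The plan is in two parts: prove uniqueness first, then construct an explicit extension. Uniqueness is the easier half. Suppose $\bar t_1, \bar t_2 \in \Dcal'_\lambda(\RR^n)$ both extend $t$ and both satisfy $\sd(\bar t_i) = \sd(t)$. Their difference $u := \bar t_1 - \bar t_2$, regarded as a distribution on $\Dcal(\RR^n)$, vanishes on $\RR^n \setminus \{0\}$ and therefore is supported at the origin; any such distribution is a finite sum $u = \sum_{|\alpha| \leq N} c_\alpha\,\partial^\alpha\delta$. Since $\sd(\partial^\alpha \delta) = n + |\alpha|$ and $\sd(u) \leq \sd(t) = n + \lambda$, one deduces $c_\alpha = 0$ whenever $|\alpha| > \lambda$. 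The surviving terms $\partial^\alpha\delta$ with $|\alpha| \leq \lambda$ annihilate every $f \in \Dcal_\lambda(\RR^n)$ by the defining vanishing condition, so $u$ is the zero functional on $\Dcal_\lambda(\RR^n)$.

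For existence, I would fix a cutoff $\chi \in \Dcal(\RR^n)$ with $\chi \equiv 1$ in a neighborhood of $0$ and set $\chi_\rho(x) := \chi(\rho x)$. Given $f \in \Dcal_\lambda(\RR^n)$, define
\[
\langle \bar t,\, f \rangle := \lim_{\rho \to \infty}\langle t,\,(1-\chi_\rho)f \rangle,
\]
which makes sense for each $\rho$ because $(1-\chi_\rho)f \in \Dcal(\RR^n \setminus \{0\})$. Convergence of the limit combines two ingredients. Taylor's theorem together with $f \in \Dcal_\lambda$ yields $|\partial^\beta f(x)| \leq C|x|^{\lfloor\lambda\rfloor + 1 - |\beta|}$ near the origin. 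The scaling-degree hypothesis, in its quantitative local form, yields a bound $|\langle t, g\rangle| \leq C\sup_{|\beta|\leq k}\sup_x|x|^{\sd(t)+\eps-|\beta|}|\partial^\beta g(x)|$ on test functions $g$ supported in small punctured balls. Applied to $g = (\chi_\rho - \chi_{\rho'})f$, which for $\rho' > \rho$ is supported in a shell $\{c_1/\rho' \leq |x| \leq c_2/\rho\}$, these estimates combine to show that $(\langle t,(1-\chi_\rho)f\rangle)_\rho$ is Cauchy, and uniform versions of the same inequalities give continuity of $\bar t$ on $\Dcal_\lambda$.

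The remaining checks are essentially bookkeeping. The identity $\bar t|_{\RR^n\setminus\{0\}} = t$ is immediate because $(1-\chi_\rho)f$ equals $f$ eventually on any compact subset of $\RR^n\setminus\{0\}$. The inequality $\sd(\bar t) \geq \sd(t)$ is automatic since restriction to $\RR^n\setminus\{0\}$ cannot increase the scaling degree, and the reverse $\sd(\bar t) \leq \sd(t)$ follows by feeding dilated test functions $f(\rho\,\cdot)$ into the defining limit and invoking the same scaling bound on $t$. The main obstacle is precisely the quantitative local bound used above: promoting the purely asymptotic definition of $\sd(t)$ into a uniform estimate controlled by semi-norms of the test function is the delicate step, and it is what pins down $\lambda = \sd(t) - n$ as the sharp threshold for the vanishing order required in $\Dcal_\lambda$.
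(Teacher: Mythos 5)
The paper itself does not prove this theorem --- it imports it from \cite{Ste71,BF00} --- so the comparison is against the standard proof in those references, which is exactly the route you take: cut off near the origin, telescope, and control the pieces by a quantitative form of the scaling-degree condition. The architecture is correct, but the one formula carrying the analytic content is wrong as stated. For $g$ supported in an annulus $\{r/2\le|x|\le 2r\}$ with $r\le 1$ the correct bound (obtained from the weak convergence of $\rho^{\omega}t(\rho\,\cdot)$ for $\omega>\sd(t)$ via Banach--Steinhaus on test functions supported in a \emph{fixed} annulus, then rescaling) is
\[
|\langle t,g\rangle|\;\le\;C\sum_{|\beta|\le k}\,\sup_x\,|x|^{\,n-\sd(t)-\eps+|\beta|}\,|\partial^\beta g(x)|\ ,
\]
not $|x|^{\sd(t)+\eps-|\beta|}$; a sanity check against $t=|x|^{-s}$, for which $|\langle t,g\rangle|\lesssim r^{\,n-s}\|g\|_\infty$, already rules out your exponent. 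This is not cosmetic: combining your exponent with the (correct) Taylor bound $|\partial^\beta f(x)|\le C|x|^{\lfloor\lambda\rfloor+1-|\beta|}$ gives a dyadic shell contribution $\sim(2^{-j})^{\sd(t)+\eps+\lfloor\lambda\rfloor+1-2|\beta|}$, which is not summable over $j$ once $|\beta|$ is large, so the Cauchy argument does not close. With the corrected exponent the $|\beta|$-dependence cancels, each shell contributes $O\bigl((2^{-j})^{\,n-\sd(t)-\eps+\lfloor\lambda\rfloor+1}\bigr)$, and the series converges precisely because $\lfloor\lambda\rfloor+1>\lambda=\sd(t)-n$; that computation is also what justifies your closing claim that $\lambda=\sd(t)-n$ is the sharp threshold. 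Since you yourself flag this estimate as ``the delicate step,'' it needs to be stated correctly and proved, not merely invoked.

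The uniqueness half is right in outline but elides one step: $u=\bar t_1-\bar t_2$ is a priori only a functional on $\Dcal_\lambda(\RR^n)$, so the structure theorem for distributions supported at a point does not apply to it directly. Either first extend $u$ to $\Dcal(\RR^n)$ by a $W$-projection $f\mapsto f-\sum_{|\gamma|\le\lambda}w_\gamma\,\partial^\gamma f(0)$ (the extension still vanishes on $\Dcal(\RR^n\setminus\{0\})$ because $Wf=f$ there, and the coefficients $c_\alpha$ with $|\alpha|>\lambda$ of the resulting $\sum c_\alpha\partial^\alpha\delta$ are independent of the choice of $W$, so your scaling-degree comparison then forces them to vanish), or argue directly that $u(f)=u(\chi_\rho f)\to 0$ for $f\in\Dcal_\lambda(\RR^n)$ by the same shell estimate as in the existence part. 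Either repair is routine, but one of them must be made explicit.
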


We will call $\bar{t}$ the \textit{direct extension}. With the requirement $\sd(\dot t)=\sd(t)$, the extension is unique for $\sd(t)< n$ and given by the direct extension\footnote{Note that the axioms for the (regularized) time-ordered products used in \cite{DF04,BDF09} 
or in this paper (sect.~\ref{sec:dimregS}), imply the condition $\sd(\dot t)=\sd(t)$.}. For $\sd(t)\geq n$, the condition  does not fix the extension. To treat this case
we introduce a regularization. 

\begin{df}[Regularization]\label{df:regularisation} Let $t\in\Dcal'(\RR^n\setminus\{0\})$ be a distribution with degree of 
divergence $\lambda:=\sd(t)-n\geq 0$, and let $\bar{t}\in\Dcal_\lambda'(\RR^n)$ be the direct extension of $t$. A family of distributions $\{t^\zeta\}_{\zeta\in\Omega\setminus\{0\}}$, $t^\zeta\in\Dcal'(\RR^n)$, with $\Omega\subset\CC$ a neighborhood of the origin, is called a regularization of $t$, if
\be\label{eq:regularization}
\forall g\in\Dcal_\lambda(\RR^n):\quad\lim_{\zeta\rightarrow0}\langle t^\zeta,g\rangle=\langle \bar{t},g\rangle\,.
\ee
The regularization $\{t^\zeta\}$ is called analytic, if for all functions $f\in\Dcal(\RR^n)$ the map
\be
\Omega\setminus\{0\}\ni\zeta\mapsto \langle t^\zeta,f \rangle
\ee
is analytic with a pole of finite order at the origin. The regularization $\{t^\zeta\}$ is called finite, if 
the limit $\lim_{\zeta\rightarrow 0}\langle t^\zeta,f\rangle\in\CC$ exists $\forall f\in\Dcal(\RR^n)$. 
\end{df}
Note that for a finite regularization, the limit
$\lim_{\zeta\rightarrow0}t^\zeta$ is indeed a solution $\dot t$ of the
extension problem, that is $\lim_{\zeta\rightarrow0}\langle
t^\zeta,h\rangle=\langle t,h\rangle\quad\forall h\in\Dcal(\RR^n\setminus\{0\})$
and $\sd(\lim_{\zeta\rightarrow0}t^\zeta)=\sd(t)\ $.\footnote{To verify the latter
let $\dot t$ be an extension with $\sd(\dot t)=\sd(t)\ $. Writing $\Delta t:=\dot t-
\lim_{\zeta\rightarrow0}t^\zeta=\sum_\gamma C_\gamma\partial^\gamma\delta$, it follows from
$\Delta t\vert_{\Dcal_\lambda(\RR^n)}=0$ that $C_\gamma=0\,\,\,\forall|\gamma|>\lambda$ and,
hence, $\sd(\lim_{\zeta\rightarrow0}t^\zeta)=\sd(t)\ $.}

Any extension $\dot t\in\Dcal'(\RR^n)$ of $t$ with the same scaling degree is of 
the form $\langle\dot t,f\rangle=\langle \bar t,Wf\rangle$ with some projection,
\be\label{W-projection}
\begin{array}{rccl}
W:&\Dcal&\rightarrow & \Dcal_{\lambda} \\
  &  f  &\mapsto     &Wf:=f-\displaystyle{\sum_{|\gamma|\leq\sd(t)-n}w_\gamma\;\partial^\gamma f(0)}\ ,
\end{array}
\ee
given in terms of functions $w_\beta\in\Dcal(\RR^n)\ ,\,\,\,|\beta|\leq\sd(t)-n$ , fulfilling
\be\label{w-derivatives}
\partial^\gamma w_\beta(0)=\delta^\gamma_\beta\>\>\quad\quad\forall \gamma\in\NN_0^n
\ee
\cite[Lem.~B.2]{DF04}. Since $t^\zeta\in\Dcal'(\RR^n)$, we can write \eqref{eq:regularization} in the form
\be\label{regW-2}
\langle \bar{t},Wf\rangle=\lim_{\zeta\rightarrow0}\left[\langle t^\zeta,f\rangle - \sum_{|\gamma|\leq\sd(t)-n}\langle t^\zeta,w_\gamma\rangle\;\partial^\gamma f(0)\right].
\ee
In general,  the limits of the individual terms on the right hand side might not exist. However, if the regularization $\{t^\zeta,\zeta\in\Omega\setminus\{0\}\}$ is analytic, each term can be expanded in a Laurent series around $\zeta=0$, and since the overall limit is finite, the principal parts ($\pp$) of these Laurent series must coincide,
\be\label{pp=pp}
\forall f\in\Dcal(\RR^n):\quad
\pp(\langle t^\zeta,f\rangle) = \sum_{|\gamma|\leq\sd(t)-n}\pp(\langle t^\zeta,w_\gamma\rangle)\;\partial^\gamma f(0)\,.
\ee
Note that $\pp(\langle t^\zeta,w_\gamma\rangle)$ is independent of the choice of $w_\gamma$, because
 $\pp(t^\zeta)$ is a linear combination of derivatives of $\delta(x)$ and 
all information about $w_\gamma$ that is used is \eqref{w-derivatives}.  
We thus have proven
\begin{lemma}\label{lem:pp-local}
The principal part of any analytic regularization $\{t^\zeta\}$ of a distribution $t\in\Dcal'(\RR^n\setminus\{0\})$ is a local distribution of order $\sd(t)-n$, i.e.,
\be\label{pp=local}
\pp(t^\zeta) = \sum_{|\gamma|\leq\sd(t)-n}C_\gamma(\zeta)\;\delta^{(\gamma)}\,.
\ee
In the derivation above we have $C_\gamma(\zeta)=(-1)^{|\gamma|}\pp(\langle t^\zeta,w_\gamma\rangle)$.
\end{lemma}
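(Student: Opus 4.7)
The plan is to assemble the lemma almost entirely from the material developed just before the statement, since the text has effectively set up all of the machinery. The starting point is the characterization of extensions $\dot t \in \Dcal'(\RR^n)$ of $t$ with $\sd(\dot t)=\sd(t)$ as those of the form $\langle\dot t, f\rangle = \langle \bar t, Wf\rangle$ for a projection $W\colon \Dcal \to \Dcal_\lambda$ as in \eqref{W-projection}, determined by auxiliary functions $w_\gamma\in\Dcal(\RR^n)$ satisfying $\partial^\gamma w_\beta(0)=\delta^\gamma_\beta$. Combining this with the defining limit \eqref{eq:regularization} of a regularization, applied to $Wf\in\Dcal_\lambda(\RR^n)$, yields \eqref{regW-2}: the left-hand side $\langle\bar t,Wf\rangle$ is finite for every $f\in\Dcal(\RR^n)$.

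Next I would exploit the analyticity assumption. Since $\zeta\mapsto \langle t^\zeta, f\rangle$ and $\zeta\mapsto \langle t^\zeta, w_\gamma\rangle$ are meromorphic at $\zeta=0$ with poles of finite order, each can be expanded in a Laurent series around $\zeta=0$. Because the overall limit on the right-hand side of \eqref{regW-2} exists for every $f\in\Dcal(\RR^n)$, the principal parts of the two pieces must cancel. This gives the pointwise identity \eqref{pp=pp},
\begin{equation}
\pp(\langle t^\zeta,f\rangle) \;=\; \sum_{|\gamma|\leq \sd(t)-n} \pp(\langle t^\zeta, w_\gamma\rangle)\, \partial^\gamma f(0) ,
\end{equation}
valid for every $f\in\Dcal(\RR^n)$. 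In distributional notation this reads
\begin{equation}
\pp(t^\zeta) \;=\; \sum_{|\gamma|\leq \sd(t)-n} (-1)^{|\gamma|}\,\pp(\langle t^\zeta, w_\gamma\rangle)\, \delta^{(\gamma)} ,
\end{equation}
which is exactly the claimed form with $C_\gamma(\zeta)=(-1)^{|\gamma|}\pp(\langle t^\zeta, w_\gamma\rangle)$.

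Finally I would add the brief but important observation that the coefficients $C_\gamma(\zeta)$ do not depend on the particular choice of the functions $w_\gamma$: since the right-hand side is a linear combination of derivatives of $\delta$, only the normalizations $\partial^\gamma w_\beta(0)=\delta^\gamma_\beta$ of \eqref{w-derivatives} enter, and any two admissible families differ by elements of $\Dcal_\lambda$, on which $\pp(t^\zeta)$ vanishes by construction. The only subtlety — and the one place where care is really needed — is to justify splitting the limit in \eqref{regW-2} into separate Laurent expansions. This is legitimate because finiteness of the sum of meromorphic functions at $\zeta=0$ forces their principal parts to cancel term by term, a purely algebraic fact about Laurent series, and no interchange of limits is required beyond that.
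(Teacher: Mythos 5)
Your proposal is correct and follows essentially the same route as the paper: apply the regularization limit \eqref{eq:regularization} to $Wf\in\Dcal_\lambda(\RR^n)$ to obtain \eqref{regW-2}, expand both terms in Laurent series using analyticity, and conclude that the principal parts must cancel, which is \eqref{pp=pp} and hence \eqref{pp=local}. The paper's "proof" is precisely the derivation \eqref{regW-2}--\eqref{pp=pp} preceding the lemma statement, including your closing remark that the coefficients are independent of the choice of the $w_\gamma$.
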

Alternatively, the latter formula for $C_\gamma(\zeta)$ can be obtained directly from \eqref{pp=local}
by applying it to $w_\gamma$ and using \eqref{w-derivatives}.

\begin{cor}[Minimal Subtraction]\label{cor:MS-same-sd}
The regular part ($\rp=1-\pp$) of any analytic regularization $\{t^\zeta\}$ of a distribution $t\in\Dcal'(\RR^n\setminus\{0\})$ defines by
\be\label{def:MS}
\langle t^\MS,f\rangle :=\lim_{\zeta\rightarrow0} \rp(\langle t^\zeta,f\rangle)
\ee
an extension of $t$ with the same scaling degree, $\sd(t^\MS)=\sd(t)$.
The extension $t^\MS$ defined by (\ref{def:MS}) is called 'minimal subtraction'.
\end{cor}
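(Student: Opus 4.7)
The plan is to identify $t^\MS$ explicitly as the sum of the direct extension $\bar t$ (extended to $\Dcal(\RR^n)$ via a projection $W$) and a local correction supported at the origin. From this identification all three claims -- that $t^\MS$ is a distribution, that it extends $t$, and that it preserves the scaling degree -- fall out with almost no further work.

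First, I would invoke Lemma \ref{lem:pp-local}: since $\pp\langle t^\zeta,f\rangle = \sum_{|\gamma|\leq\sd(t)-n}(-1)^{|\gamma|}C_\gamma(\zeta)\,\partial^\gamma f(0)$, the map $\zeta\mapsto\rp\langle t^\zeta,f\rangle$ is analytic at $\zeta=0$ for every $f\in\Dcal(\RR^n)$, so the limit in \eqref{def:MS} exists and is linear in $f$. Next, I would choose auxiliary functions $w_\gamma\in\Dcal(\RR^n)$ satisfying \eqref{w-derivatives} and substitute the splitting $f=Wf+\sum_{|\gamma|\leq\sd(t)-n}w_\gamma\,\partial^\gamma f(0)$ into $\langle t^\zeta,f\rangle$. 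Because $Wf\in\Dcal_\lambda(\RR^n)$ with $\lambda=\sd(t)-n$, the defining property of the regularization yields $\lim_{\zeta\to 0}\langle t^\zeta,Wf\rangle=\langle\bar t,Wf\rangle$. Taking regular parts and using \eqref{pp=pp} to see that the pieces $\pp\langle t^\zeta,w_\gamma\rangle\,\partial^\gamma f(0)$ together reproduce $\pp\langle t^\zeta,f\rangle$, one arrives at
\begin{equation*}
\langle t^\MS,f\rangle=\langle \bar t,Wf\rangle+\sum_{|\gamma|\leq\sd(t)-n}c_\gamma\,\partial^\gamma f(0)\,,\qquad c_\gamma:=\lim_{\zeta\to 0}\rp\langle t^\zeta,w_\gamma\rangle\in\CC\,.
\end{equation*}

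From this explicit formula the continuity of $f\mapsto\langle t^\MS,f\rangle$ on $\Dcal(\RR^n)$ is immediate, since $W$ is continuous into $\Dcal_\lambda(\RR^n)$, $\bar t\in\Dcal_\lambda'(\RR^n)$ by Theorem \ref{thm:Extension-sd}, and the jet maps $f\mapsto\partial^\gamma f(0)$ are continuous; hence $t^\MS\in\Dcal'(\RR^n)$. The extension property follows at once: for $h\in\Dcal(\RR^n\setminus\{0\})$ all jets $\partial^\gamma h(0)$ vanish, so $Wh=h$ and $\langle t^\MS,h\rangle=\langle\bar t,h\rangle=\langle t,h\rangle$.

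For the scaling degree, I would compare $t^\MS$ with an arbitrary extension $\dot t_0\in\Dcal'(\RR^n)$ of $t$ satisfying $\sd(\dot t_0)=\sd(t)$, whose existence is guaranteed by the full version of Theorem \ref{thm:Extension-sd} in \cite{BF00}. Restricted to $\Dcal_\lambda(\RR^n)$ both $t^\MS$ and $\dot t_0$ coincide with $\bar t$: for $t^\MS$ this is read off the explicit formula, and for $\dot t_0$ it follows from the uniqueness part of Theorem \ref{thm:Extension-sd}. Hence $t^\MS-\dot t_0$ is supported at the origin and annihilates $\Dcal_\lambda(\RR^n)$, so it is a linear combination of derivatives $\delta^{(\gamma)}$ with $|\gamma|\leq\sd(t)-n$. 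Each such term has scaling degree at most $n+(\sd(t)-n)=\sd(t)$, giving $\sd(t^\MS)\leq\sd(t)$; the reverse inequality is trivial since restriction cannot increase the scaling degree. The one genuinely delicate point -- the main obstacle -- is the bookkeeping in the second step: the individual summands inside the bracket of \eqref{regW-2} need not converge as $\zeta\to 0$, and one must carefully track how the principal parts of the auxiliary terms combine with $\pp\langle t^\zeta,f\rangle$ before taking limits.
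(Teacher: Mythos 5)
Your proposal is correct and follows essentially the same route as the paper: insert the splitting $f=Wf+\sum w_\gamma\,\partial^\gamma f(0)$ into \eqref{regW-2}, use \eqref{pp=pp} to cancel the principal parts, and conclude that $t^\MS$ differs from $\langle\bar t,W\,\cdot\,\rangle$ by a local distribution of order at most $\sd(t)-n$. The extra verifications you supply (existence of the limit, continuity, the restriction argument for the extension property) are details the paper leaves implicit, not a different argument.
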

In traditional terminology $(-1)\pp(t^\zeta)$ is a 'local counter term'.
\begin{proof}
It follows directly from (\ref{regW-2})-(\ref{pp=pp}) that any extension $\dot t$ of $t$ with the same scaling degree
can be written as
\begin{align*}
\langle \dot t,f\rangle=
\langle \bar{t},Wf\rangle&=\lim_{\zeta\rightarrow0}\left[\langle t^\zeta,f\rangle - \sum_{|\gamma|\leq\sd(t)-n} 
\left[\pp\langle t^\zeta,w_\gamma\rangle+\rp\langle t^\zeta,w_\gamma\rangle\right]\;\partial^\gamma f(0)\right].\\
&=\langle t^\MS,f\rangle
-\lim_{\zeta\rightarrow0}\sum_{|\gamma|\leq\sd(t)-n} \rp(\langle t^\zeta,w_\gamma\rangle)\;\partial^\gamma f(0)\,.
\end{align*}
Obviously $t^\MS$ differs from $\dot t$ by a local distribution of lower or equal scaling degree.
\end{proof}
\begin{cor}
For finite $\zeta$ the projection to the regular part of any analytic regularization $\{t^\zeta\}$ can be realized as a $W$-projection up to a term of order $\zeta$, i.e., there exists a projection $W^\MS:\Dcal(\RR^n)\rightarrow\Dcal_\lambda(\RR^n)$, $\lambda=\sd(t)-n$, such that
\be
\forall f\in\Dcal(\RR^n):\quad \rp\langle t^\zeta, f\rangle=\langle t^\zeta,W^\MS f\rangle+\Ocal(\zeta)\,.
\ee
\end{cor}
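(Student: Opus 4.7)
The plan is to construct the projection $W^\MS$ in the explicit form \eqref{W-projection}, that is, to exhibit test functions $w_\gamma^\MS\in\Dcal(\RR^n)$ satisfying \eqref{w-derivatives} for $|\gamma|\leq\lambda$, and then to verify the claimed identity by means of Lemma~\ref{lem:pp-local} and the defining formula \eqref{def:MS} for $t^\MS$. The key idea is to choose the $w_\gamma^\MS$ so that, in addition to \eqref{w-derivatives}, they satisfy the normalization
\[
\langle t^\MS,w_\gamma^\MS\rangle=0\quad\text{for all }|\gamma|\leq\lambda.
\]
Indeed, once this holds, the Laurent expansion of $\zeta\mapsto\langle t^\zeta,w_\gamma^\MS\rangle$ has regular part $\rp\langle t^\zeta,w_\gamma^\MS\rangle$ whose value at $\zeta=0$ is, by \eqref{def:MS}, exactly $\langle t^\MS,w_\gamma^\MS\rangle=0$, so $\rp\langle t^\zeta,w_\gamma^\MS\rangle=\Ocal(\zeta)$.

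To construct such $w_\gamma^\MS$, I start from an arbitrary family $\{w_\gamma\}_{|\gamma|\leq\lambda}$ fulfilling \eqref{w-derivatives} and look for corrections $\chi_\gamma\in\Dcal_\lambda(\RR^n)$ with $\langle t^\MS,\chi_\gamma\rangle=\langle t^\MS,w_\gamma\rangle$; setting $w_\gamma^\MS:=w_\gamma-\chi_\gamma$ then yields both conditions because $\partial^\beta\chi_\gamma(0)=0$ for $|\beta|\leq\lambda$. The existence of such $\chi_\gamma$ reduces to showing that $t^\MS$ does not vanish identically on $\Dcal_\lambda(\RR^n)$, which is clear in the non-trivial case $t\neq0$: since $\Dcal(\RR^n\setminus\{0\})\subset\Dcal_\lambda(\RR^n)$ and $t^\MS$ is an extension of $t$, the restriction $t^\MS|_{\Dcal_\lambda}$ is non-zero, so the linear functional $\chi\mapsto\langle t^\MS,\chi\rangle$ is surjective onto $\CC$.

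With $W^\MS f:=f-\sum_{|\gamma|\leq\lambda}w_\gamma^\MS\,\partial^\gamma f(0)$ now fixed, I compute, for any $f\in\Dcal(\RR^n)$,
\[
\langle t^\zeta,W^\MS f\rangle=\langle t^\zeta,f\rangle-\sum_{|\gamma|\leq\lambda}\langle t^\zeta,w_\gamma^\MS\rangle\,\partial^\gamma f(0).
\]
By Lemma~\ref{lem:pp-local}, the coefficients $C_\gamma(\zeta)=(-1)^{|\gamma|}\pp\langle t^\zeta,w_\gamma\rangle$ are independent of the choice of $w_\gamma$ fulfilling \eqref{w-derivatives}, so in particular $\pp\langle t^\zeta,w_\gamma^\MS\rangle=\pp\langle t^\zeta,w_\gamma\rangle$. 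Combining this with $\rp\langle t^\zeta,w_\gamma^\MS\rangle=\Ocal(\zeta)$ and the identity $\pp\langle t^\zeta,f\rangle=\sum_{|\gamma|\leq\lambda}\pp\langle t^\zeta,w_\gamma\rangle\,\partial^\gamma f(0)$ (obtained by pairing the representation \eqref{pp=local} with $f$ and inserting the formula for $C_\gamma(\zeta)$), I obtain
\[
\langle t^\zeta,W^\MS f\rangle=\langle t^\zeta,f\rangle-\pp\langle t^\zeta,f\rangle+\Ocal(\zeta)=\rp\langle t^\zeta,f\rangle+\Ocal(\zeta),
\]
as claimed. The only genuine point requiring attention is the existence of the corrections $\chi_\gamma$, i.e.\ the non-triviality of $t^\MS|_{\Dcal_\lambda}$; everything else is a direct verification from the preceding lemma and definition.
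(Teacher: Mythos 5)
Your proof is correct, and it reaches the result by a slightly different route than the paper. The paper obtains $W^\MS$ implicitly from Corollary~\ref{cor:MS-same-sd}: since $t^\MS$ is an extension of $t$ with the same scaling degree, there is a projection with $\langle t^\MS,f\rangle=\langle\bar t,W^\MS f\rangle$ (equation \eqref{eq:W-MS}); it then writes $\rp\langle t^\zeta,f\rangle=\langle t^\zeta,W^\MS f\rangle+\sum_\gamma\rp\langle t^\zeta,w^\MS_\gamma\rangle\,\partial^\gamma f(0)$ and concludes that the residual sum is $\Ocal(\zeta)$ only indirectly, because both $\rp\langle t^\zeta,f\rangle$ and $\langle t^\zeta,W^\MS f\rangle$ converge to the same limit $\langle t^\MS,f\rangle$. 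You instead build $W^\MS$ explicitly by imposing the normalization $\langle t^\MS,w^\MS_\gamma\rangle=0$, which makes each residual term individually $\Ocal(\zeta)$ via \eqref{def:MS}; the rest of your computation (splitting $\langle t^\zeta,w^\MS_\gamma\rangle$ into $\pp+\rp$ and invoking \eqref{pp=pp} and Lemma~\ref{lem:pp-local}) is the same identity read in the other direction. The two characterizations in fact coincide: applying \eqref{eq:W-MS} to $f=w^\MS_\gamma$ and using $W^\MS w^\MS_\gamma=0$ shows the paper's $W^\MS$ automatically satisfies your normalization. What your route costs is the extra (correct) argument that $t^\MS$ is non-trivial on $\Dcal_\lambda(\RR^n)$, which you rightly restrict to $t\neq 0$ — the degenerate case $t=0$ genuinely breaks the statement (e.g.\ $t^\zeta=\delta$), so the exclusion is not cosmetic. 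One small point: your corrected functions $w^\MS_\gamma=w_\gamma-\chi_\gamma$ satisfy $\partial^\beta w^\MS_\gamma(0)=\delta^\beta_\gamma$ only for $|\beta|\leq\lambda$ rather than for all $\beta$ as in the literal statement of \eqref{w-derivatives}; this suffices both for $W^\MS$ to map into $\Dcal_\lambda(\RR^n)$ and for $\pp\langle t^\zeta,w^\MS_\gamma\rangle=\pp\langle t^\zeta,w_\gamma\rangle$ (since $\pp(t^\zeta)$ has order $\leq\lambda$ by \eqref{pp=local}), so it is not a gap, but it is worth flagging.
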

\begin{proof}
According to Corollary~\ref{cor:MS-same-sd} there is a projection $W^\MS:\Dcal(\RR^n)\rightarrow\Dcal_\lambda(\RR^n)$ such that
\be\label{eq:W-MS}
\langle t^\MS,f \rangle=\langle \bar t,W^\MS f\rangle
\ee
It follows for the regular part for all $f\in\Dcal(\RR^n)$,
\begin{align*}
\rp\langle t^\zeta, f\rangle
&=\bigg\langle t^\zeta-\pp(t^\zeta), W^\MS f+\sum_{|\gamma|\leq\sd(t)-n} w_\gamma^\MS f^{(\gamma)}(0) \bigg\rangle\\
&=\langle t^\zeta, W^\MS f\rangle + \sum_{|\gamma|\leq\sd(t)-n} \rp\langle t^\zeta,w_\gamma^\MS\rangle f^{(\gamma)}(0)\,,
\end{align*}
since $\pp\langle t^\zeta,W^\MS f\rangle=0$ by \eqref{pp=local}. The left hand side as well as the first term on the right 
hand side tend to $\langle t^\MS,f\rangle$ as $\zeta\rightarrow0$, cf.~(\ref{def:MS}) and \eqref{eq:regularization}, (\ref{eq:W-MS}). 
Hence the remaining sum on the right hand side needs to vanish in this limit, and since it is the regular part of a Laurent series it is at least of order $\zeta$.
\end{proof}

By means of the results above, the statements made at the beginning of section \ref{sec:forest-MS} 
can be illustrated on the level of the numerical distributions $t=t_{\Gamma}^{\beta}$ introduced 
in \eqref{time ordered functions}.
We first note that $t^\zeta\mapsto -\pp(t^\zeta)$ is 
just the action in terms of the numerical distributions of the projectors
$R_I$, $I=V(\Gamma)$ in the Forest Formula \eqref{EGforest}, 
because $\Scal^{\zeta\,(n)}$ depends on $\zeta$ only through 
$t^\zeta$ (see sect.~\ref{sec:dimregS}). The fact that $t^\MS$ is a 
renormalization of $t$ admitted by the Epstein-Glaser
axioms (Corollary \ref{cor:MS-same-sd}), reflects that $\Scal_{\MS}$ is a solution of these axioms.

Moreover, since in $(\Scal^{\zeta}\circ \Zcal^{\zeta}_{\MS})^{(n)}$ the action of $\Zcal^{\zeta}_{\MS}$
is the addition of the divergent counter terms for all contributing diagrams and for all their
subdiagrams, and since, in terms of the pertinent numerical distributions $t^\zeta$, 
these counter terms are given by $(-1)\pp(t^\zeta)$, and since $\pp(t^\zeta)$ is a 
{\it local} distribution with $\sd(\pp(t^\zeta))\leq\sd(t)$
(Lemma \ref{lem:pp-local}), we understand on the level of numerical distributions why
$\Zcal^{\zeta}_{\MS}\in\mathscr{R}$ (see \eqref{R-locality}). 
\subsection{Dimensionally Regularized Two Point Function}\label{sec:DimRegHadamard}
As in the unregularized case, the regularized Wightman two point function $\De_+^\zeta$
differs from $\De_F^\zeta$ \eqref{regFeynprop} only by a change of the boundary value prescription:
\be\label{regWightman}
\Delta^{\zeta}_+(x):=\mu^{2\zeta}\lim_{\epsilon\searrow 0}\wcal^{d-2\zeta}(x^2-i\epsilon x^0)\ .
\ee
As for $\De_F^\zeta$, we conclude that $\WF(\De_+^\zeta)\subset \WF(\Delta_+)$. Due to 
\be\label{eq:RegularizedCausalityDistributions}
\De_F^\zeta(x)=
\begin{cases}
\De_+^\zeta(x) \quad & \text{if}\quad x\notin \bar V_-\\
\De_+^\zeta(-x) \quad & \text{if}\quad x\notin \bar V_+
\end{cases}\ ,
\ee
causality of the regularized time-ordered products $\TT^{\bzeta}_n$ can be postulated in the 
Bogoliubov-Shirkov way \cite{BS59} and with that the family $(\TT^{\bzeta}_n)$ can be 
constructed inductively by a version of the Epstein-Glaser method; this is done in the next subsection.

Under a simultaneous scaling of $x$ and $m$, $\De_+^\zeta$ and $\De_F^\zeta$ are homogeneous,
\be\label{hom:scaling}
\rho^{d-2-2\zeta}\,\De_{j,\rho^{-1}m}^\zeta(\rho x)=\De_{j,m}^\zeta(x)\ ,\quad j=+,F\ ,
\ee
since $\wcal^{d-2\zeta}(z^2)$ has this property. 

We now use the relation between Bessel functions
\be
K_\nu=\frac{\pi}{2\sin(\nu\pi)}[I_{-\nu}-I_\nu]\quad(\nu\in\CC\setminus\ZZ)\quad
\ee 
and the fact that $I_{\nu}$ is of the form
\be
I_{\nu}(z)=z^{\nu}F_{\nu}(z^2)
\ee
with the entire function
\be
F_{\nu}(z^2)=2^{-\nu}\sum_{k=0}^\infty\frac1{k!\Gamma(\nu+k+1)}\left(\frac{z^2}{4}\right)^{k} \ .
\ee
Inserting these relations into the formula for the 2-point function we obtain the decomposition
\be\label{D=H+C}
\De_{j,m}^\zeta=H_j^{m,\zeta}+C_m^\zeta
\ee
where
\be\label{Creg}
C_m^\zeta(x)=-c(d-2\zeta)m^{d-2}\left(\frac{\mu}{m}\right)^{2\zeta}F_{\frac{d}{2}-1-\zeta}(-m^2x^2)
\ee 
\be\label{Hreg}
H_j^{m,\zeta}(x)=c(d-2\zeta)\mu^{2\zeta}\left((-x^2)^{1-\frac{d}{2}+\zeta}\right)_jF_{\zeta-\frac{d}{2}+1}(-m^2x^2)
\ee
with $c(d-2\zeta):=(2\pi)^{\zeta-\frac{d}{2}}\frac{\pi}{2\sin((\frac{d}{2}-1-\zeta)\pi)}$. The index $j=+,F$ denotes as above the appropriate boundary values. Note that 
the zeroes of the sine function at multiples of $\pi$ produce poles at $\zeta=\frac{d}{2}+n, n\in\ZZ$ in the above decomposition which cancel in the sum \eqref{D=H+C}. 
 
We observe that $H_j^{m,\zeta}$ is a smooth function of the mass $m$ and $C^{m,\zeta}$ is a smooth function of the position $x$. Both terms satisfy the homogeneous scaling \eqref{hom:scaling}. $H_F^{m,\zeta}$ is the Feynman type propagator corresponding to the Hadamard function which was already used in \cite{BDF09} (see also \cite{Kel10a}).

The interpretation of  $\De_+^\zeta$ as the dimensionally regularized 2-point function (in spite of the fact that it is a distribution in $d$ dimensions) may be justified by the fact that it solves an appropriately deformed
version of the Klein-Gordon equation. This may be useful for the discussion of symmetries (as current conservation or gauge invariance cf.~\cite{BD08,FRb}) for the
dimensionally regularized amplitudes.

\begin{lemma}\label{mod-KG}
Let $d_\zeta:=d-2\zeta\ $, $t:=x^0\ $, $r:=\sqrt{\sum_{i=1}^{d-1} (x^i)^2}$
and let the $\zeta$-dependent functions $f_\zeta$ and $G_\zeta$ be related by
\be
f_\zeta(z)=z^{-d_\zeta/2+1}\,G_\zeta(mz)\ .
\ee
For $r\not=0$ we introduce the 'wave operator
in $d_\zeta$-dimensions':  
\be\label{reg-wellenop}
 \square_d^\zeta:=\partial_t^2-\partial_r^2
-\frac{d_\zeta-2}{r}\partial_r-\frac{1}{r^2}\Delta_{S^{d-2}}\ .
\ee
with the Laplacian $\Delta_{S^{d-2}}$ on the $(d-2)$-sphere.

\begin{itemize}
\item[(a)] For $x^2=t^2-r^2<0$ it holds: $F_\zeta(x):=f_\zeta(\sqrt{-x^2})$ solves
the 'Klein-Gordon equation in $d_\zeta$-dimensions', i.e. 
\be\label{eq:mod-KG}
(\square_d^\zeta+m^2)\, F_\zeta(x)=0\ ,
\ee
if and only if $G_\zeta(u)$ is a solution of the modified Bessel equation of order 
$d_\zeta/2-1$,
\be\label{mod-Bessel}
G''_\zeta(u)+\frac{G'_\zeta(u)}{u}+G_\zeta(u)\Bigl(1+\frac{(d_\zeta/2-1)^2}{u^2}\Bigr)=0\ .
\ee
\item[(b)] $\De_+^\zeta(x)$ solves the 'Klein-Gordon equation in $d_\zeta$-dimensions'
\eqref{eq:mod-KG} for all $x$ with $r\not=0$.
\end{itemize}
\end{lemma}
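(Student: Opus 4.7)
The plan is to prove part (a) by a direct chain-rule computation and then to deduce part (b) from (a) together with the explicit form of $\wcal^{d-2\zeta}$ in terms of the modified Bessel function $K_\nu$, supplemented by an analytic-continuation argument.

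For (a), the guiding observation is that $F_\zeta(x)=f_\zeta(z)$ with $z=\sqrt{-x^2}$ is $O(d-1)$-invariant, so the angular term $r^{-2}\Delta_{S^{d-2}}F_\zeta$ in \eqref{reg-wellenop} drops out and it suffices to handle $(\partial_t^2-\partial_r^2-\tfrac{d_\zeta-2}{r}\partial_r)F_\zeta$. From $\partial_t z=-t/z$ and $\partial_r z=r/z$, together with the identity $t^2-r^2=-z^2$, the cross terms collapse and one obtains
\begin{equation*}
\square_d^\zeta F_\zeta = -\Bigl(f_\zeta''(z) + \frac{d_\zeta-1}{z}\,f_\zeta'(z)\Bigr),
\end{equation*}
so that \eqref{eq:mod-KG} is equivalent to the radial ODE $f_\zeta''(z)+\frac{d_\zeta-1}{z}f_\zeta'(z)-m^2 f_\zeta(z)=0$. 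Into this I would substitute $f_\zeta(z)=z^{-\nu}G_\zeta(mz)$ with $\nu:=d_\zeta/2-1$, expand, and collect terms by powers of $z$. At order $z^{-\nu-2}$ only $-\nu^2 G_\zeta$ survives (the coefficients $\nu(\nu+1)$ and $-\nu(2\nu+1)$ cancel all but $-\nu^2$), at order $z^{-\nu-1}$ the cross terms combine to $m\,G_\zeta'$, and at order $z^{-\nu}$ one is left with $m^2(G_\zeta''-G_\zeta)$; dividing by $m^2 z^{-\nu}$ and passing to $u=mz$ yields \eqref{mod-Bessel}. All steps are reversible, which gives the stated ``if and only if''.

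For (b), I would first note that whenever $x$ is spacelike one has $\Delta_+^\zeta(x)=\mu^{2\zeta}\wcal^{d-2\zeta}(x^2)$ by \eqref{regWightman}, which is literally of the form required by (a) with $G_\zeta(u)=K_{d_\zeta/2-1}(u)$; since $K_\nu$ satisfies the modified Bessel equation by classical Bessel theory, (a) gives $(\square_d^\zeta+m^2)\Delta_+^\zeta=0$ on the open region $\{x^2<0\}$. To extend the identity to all $x$ with $r>0$, I would invoke the analyticity noted in the paragraphs following \eqref{regFeynprop}: $\Delta_+^\zeta$ is the boundary value, via the appropriate $-i\epsilon x^0$ prescription, of a function holomorphic on the domain $\{x+iy : (x+iy)^2\notin\RR_-\}$. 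The coefficients of $\square_d^\zeta$ are rational in $r$ and independent of $t$, so $\square_d^\zeta$ acts on the holomorphic extension and produces a holomorphic function on the same connected domain intersected with $\{r\neq 0\}$; since this function vanishes on the open real subset of spacelike $x$ with $r>0$, the identity theorem forces it to vanish throughout, and the Klein-Gordon equation passes to the boundary value.

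The only non-routine point is the bookkeeping in (a): tracking the three exponents $z^{-\nu-2}$, $z^{-\nu-1}$, $z^{-\nu}$ and verifying that the coefficient at each power simplifies as claimed. Part (b) is then essentially formal, once the identification $G_\zeta=K_{d_\zeta/2-1}$ and the analyticity of $\Delta_+^\zeta$ outside $\{(x+iy)^2\in\RR_-\}$ have been recorded.
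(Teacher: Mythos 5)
Your computation for (a) is correct and is exactly the calculation the paper dismisses as ``straightforward'': the radial reduction $\square_d^\zeta F_\zeta=-\bigl(f_\zeta''+\tfrac{d_\zeta-1}{z}f_\zeta'\bigr)$ and the coefficient bookkeeping ($\nu(\nu+1)-\nu(2\nu+1)=-\nu^2$ at order $z^{-\nu-2}$, $-2\nu m+(2\nu+1)m=m$ at order $z^{-\nu-1}$) both check out. One caveat: what your substitution actually produces is $G''+G'/u-G\bigl(1+\nu^2/u^2\bigr)=0$, the standard modified Bessel equation satisfied by $K_\nu$, whereas \eqref{mod-Bessel} as printed carries a $+$ sign in front of the bracket; that is evidently a typo in the paper, so your claim that the computation ``yields \eqref{mod-Bessel}'' should be read as yielding the corrected equation. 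For (b) you genuinely diverge from the paper: the authors extend the result past the spacelike region by redoing the chain-rule calculation with $z_\epsilon=\sqrt{r^2-t^2+it\epsilon}$ and arguing that the extra $O(\epsilon)$ terms are ``harmless'' in the limit, whereas you invoke holomorphy of $\wcal^{d-2\zeta}$ on $\{(x+iy)^2\notin\RR_-\}$ plus the identity theorem, pushing the equation from the real spacelike slice to the whole connected domain and then to the boundary value. Your route is cleaner and avoids estimating the $\epsilon$-corrections near the light cone, at the price of two points you pass over quickly: you must make sense of the polar-form operator (whose coefficients involve $r=\bigl(\sum(\xi^i)^2\bigr)^{1/2}$) acting on the complexified function --- easiest by rewriting $\square_d^\zeta=\square+\tfrac{2\zeta}{r}\partial_r$ with $\tfrac1r\partial_r=\sum_i\tfrac{x^i}{r^2}\partial_i$, which is manifestly rational off $r=0$ --- and you must note that $\square_d^\zeta$, having smooth coefficients on $\{r\neq0\}$, commutes with the distributional boundary-value limit, so that the equation survives on the light cone where $\Delta_+^\zeta$ is singular. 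With those two sentences added, your argument is complete and arguably tighter than the paper's.
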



\begin{proof} {\it (a) and (b):} The statement (a) is obtained straightforwardly 
by inserting the definitions and computing the derivatives. Since 
\be
\De^\zeta_+(x)=\lim_{\epsilon\downarrow 0}f_\zeta(\sqrt{r^2-t^2+it\epsilon})
\ee
with a pertinent function $G_\zeta$ solving \eqref{mod-Bessel}, part (a) immediately yields
$(\square_d^\zeta+m^2)\,\De^\zeta_+(x)=0$ for $x^2<0$. For $x^2\geq 0$ the calculation in the proof of (a)
has to be supplemented by the $i\epsilon$-terms, the final limit $\epsilon\downarrow 0$ is harmless.
\end{proof}

\subsection{Dimensionally Regularized Time-ordered products}\label{sec:dimregS}
In contrast to the situation described in (\cite[Sect.~5.2]{BDF09}) and (\cite[Sect.~4]{DWilson}),
the regularized Feynman propagator $\De^\zeta_F\in\Dcal^\prime(\MM)$ is {\it not} a smooth function.
Actually, we are not aware of any analytic regularization which yields smooth propagators.
Hence, the construction of the regularized time-ordered products
involves non-direct extensions of distributions.

The aim of this section is to construct a unique family of linear maps 
\be\label{Sreg}
\Tcal_n^\bzeta:\Floc^{\otimes n}\rightarrow\Fcal
\ee
perturbatively by Epstein-Glaser induction (to simplify the notations we write $\Floc$ and $\Fcal$ for $\Floc[[\hbar]]$ and $\Fcal[[\hbar]]$ resp.). The construction has to be done in such a way
that each $\TT_n^{\,\bzeta}$ in the perturbative expansion is a meromorphic function of 
$N:= {n\choose 2}$ complex parameters $\zeta_{ij}$, i.e. each order in $\hbar$ is meromorphic.
We choose different parameters $\zeta_{ij}$
for each bidifferential operator in the formal expression
\be\label{Sreg-unren}
T^{\bzeta,\,\mathrm{unren}}_{n}:=\exp\sum_{1\leq i<j\leq n} D_{ij}^{\zeta_{ij}}\ ,\quad
D_{ij}^{\zeta_{ij}}:=\langle \hbar \De_F^{\zeta_{ij}},\tfrac{\delta^2}{\delta\varphi_i\delta\varphi_j}\rangle,\,
\ee
which we now want to make precise with the use of homogeneous extensions of distributions ('$\mathrm{unren}$' stands for 'unrenormalized'). We can expand the exponential in \eqref{Sreg-unren} in terms of graphs by means of \eqref{time:ord}, so  we can construct $T^{\bzeta}_n$ as a sum of  $T_\Gamma^{\bzeta}$, with $\Gamma\in\Gcal_n$ (the set of all graphs with vertices $\{1,\dots n\}$). Each expression $T^{\bzeta}_\Gamma$ can be obtained by recursively extending $t_{\Gamma}$ given by the formula \eqref{SGamma} to a distribution defined on the space $\mathcal{D}(\Delta_{\Gamma},Y_{\Gamma})$ introduced in Section \ref{EG} . The family $\bzeta$ contains one regularization parameter for each pair of vertices. 
We write $\bzeta:=(\zeta_{ij})_{1\leq i<j\leq n}\in\CC^N$. 
The regularized time-ordered products $\Tcal_n^\bzeta$ are given by $m_n\circ T^{\bzeta}_{n}$. We will show in this section that the latter can be constructed in such a way that certain properties, similar to Epstein-Glaser axioms are satisfied. These properties can be specified equivalently on the level of $T^{\bzeta}_{n}$ or $T^{\bzeta}_{\Gamma}$ and we will make use of both possibilities, dependent on notational convenience.
The axioms which we assume are the following (compare with \cite{DF04} and \cite{BDF09}):
\begin{itemize}
\item{\bf Starting element}. $T_1^{\bzeta}=\id\ $,
\item{\bf Causality}. Let $F_1\,\dots,F_n$ be local functionals such that $F_1,\dots,F_k$ have supports later than the supports of $F_{k+1},\dots,F_n$. Let us denote by $I$ the index set $\{1,\dots,k\}$ and by $\bzeta_I$ the family of parameters 
$\zeta_{ij}$, where $i,j\in I$. Similarly, elements of $\bzeta_{I^c}$ will have $i,j\in I^c$ and elements of $\bzeta_{II^c}$ satisfy: $i\in I$, $j\in I^c$. Together they form the set of parameters $\bzeta=(\bzeta_I,\bzeta_{I^c},\bzeta_{II^c})$.
 The condition of causality is the requirement that
\be\label{caus}
T_n^{\bzeta}(F_1,\dots,F_n)=\exp(\sum\limits_{i\leq k\atop j> k}D_{ij}^{\bzeta_{II^c}})\ T_k^{\bzeta_I}(F_1,\dots,F_k)\otimes T_{n-k}^{\bzeta_{I^c}}(F_{k+1},\dots,F_n)\,.
\ee

\item{\bf $\ph$-Locality}. We require that $T_n$ is, in $L$-th order of $\hbar$, a functional differential operator on $\Ecal(\M)$ of order $2L$ (see \cite{BDF09} for details).

\item{\bf Field Independence}. For every $k=1,\ldots,n$ we require that
  $\langle\frac{\delta}{\delta \ph_k} T_n^\bzeta(F_1,\dots,F_n),\psi\rangle=T_n^\bzeta(F_1,\dots,\langle\frac{\delta F_k}{\delta \ph_k},\psi\rangle,\dots,F_n),\ $ with $F_1,\dots,F_n\in\Floc,\,\psi\in\Ecal(\MM)\ $.

As explained in \cite{BDF09}, $\ph$-Locality and  Field Independence imply that 
$T^{\,\bzeta}_n(F^{\otimes n})$ can be expanded in the fields as follows:
\be\label{causWick}
T^{\,\bzeta}_n(F^{\otimes n})(\ph_1,\ldots,\ph_n)=
\sum_{\alpha,\beta}\langle t_{\alpha}^{\bzeta,\beta},f^{\alpha_1}_{\beta_1}(\ph_1)\otimes\cdots f^{\alpha_n}_{\beta_n}(\ph_n)\rangle  \,
\ee
where the test functions $f^{\alpha_i}_{\beta_i}(\ph)$ are defined in \eqref{dF} and the numerical distribution 
$t^{\bzeta,\beta}_{\alpha}=\sum_{\Gamma}t^{\bzeta,\beta}_{\Gamma}$ is a time-ordered product of balanced fields\footnote{Balanced fields are local field polynomials 
$A(x) = P(\partial_1,\dots,\partial_n)\varphi(x_1)\cdots\varphi(x_n)\big|_{x_1=\dots=x_n=x}$.
Here $P$ is a polynomial, with the peculiarity that $P$ 
depends only on the differences of variables $(\partial_i - \partial_j)$ (``relative derivatives'').
Balanced fields, originally introduced in \cite{BOR02}, were used in \cite{DF04} 
in order to fulfill Stora's Action Ward Identity (AWI). The latter guarantees that the time-ordered product 
depends only on local interaction {\it functionals} $F$, and not on 
the choice of a corresponding Lagrangian.} at $\ph=0$,
 where the sum runs over all graphs with vertex set $V(\Gamma)=\{1,\dots,n\}$
and $\alpha_i$ lines at the vertex $i$, $i=1,\dots,n$.

This formula is a generalization of the causal Wick expansion given in \cite{EG73}. 
We point out that the r.h.s.~of \eqref{causWick} depends 
on $\bzeta$ only through  the numerical distributions $t^{\bzeta,\beta}_{\alpha}$.

\item{\bf Translation Invariance}. This axiom can be expressed by the requirement that the numerical 
distributions  $t^{\bzeta,\beta}_{\alpha}$ appearing in the field expansion
\eqref{causWick} depend only on the relative coordinates $(x_1-x_n,...,x_{n-1}-x_n)$.

\item{\bf Smoothness in $m^2$}. To formulate the requirement of the smoothness in mass we make use of the decomposition of the Feynman propagator $\De_{F}^{m,\zeta}$ into $H_{F}^{m,\zeta}$ and $C^{m,\zeta}$. Let $D_{ij}^{C}:=\langle \hbar C^{m,\zeta_{ij}},\tfrac{\delta^2}{\delta\varphi_i\delta\varphi_j}\rangle$. We can ``factor out'' the powers of $C^{m,\zeta}$ from the regularized time-ordered products by applying the following equivalence relation:
\be\label{TH:TF}
T^{\bzeta}_{H,n}(F_1,\dots, F_n):=\exp\Big(-\sum_{i<j}D_{ij}^{C}\Big)\circ T^{\bzeta}_n (F_1,\dots, F_n)\,.
\ee
Let us now explain what this operation means in terms of Feynman graphs. First we decompose a given graph into a sum of graphs
that have $H_{F}^{m,\zeta}$  or $C^{m,\zeta}$ assigned to lines. For example, the setting sun graph can be written as
\[
\begin{tikzpicture}[thick,scale=1.5]
\useasboundingbox (0,0) rectangle (1,0.6);
\filldraw (0,0) circle (1pt);
\filldraw (0.8,0) circle (1pt);
\draw (0,0) edge [out=80,in=100] node[above] {} (0.8,0);
\draw (0,0) -- node[above] {} (0.8,0);
\draw (0,0) edge [out=-80,in=-100] node[above] {} (0.8,0);
\end{tikzpicture} 
\,=
\begin{tikzpicture}[thick,scale=1.5]
\useasboundingbox (0,0) rectangle (1,0.6);
\filldraw (0,0) circle (1pt);
\filldraw (0.8,0) circle (1pt);
\draw (0,0) edge [out=80,in=100] node[above=-1.5pt] {\footnotesize$C$} (0.8,0);
\draw (0,0) -- node[above=-1.5pt] {\footnotesize$C$} (0.8,0);
\draw (0,0) edge [out=-80,in=-100] node[above=-1.5pt] {\footnotesize$C$} (0.8,0);
\end{tikzpicture}
\ 
+
3\ \begin{tikzpicture}[thick,scale=1.5]
\useasboundingbox (0,0) rectangle (1,0.6);
\filldraw (0,0) circle (1pt);
\filldraw (0.8,0) circle (1pt);
\draw (0,0) edge [out=80,in=100] node[above=-1.5pt] {\footnotesize$C$} (0.8,0);
\draw (0,0) -- node[above=-1.5pt] {\footnotesize$H$} (0.8,0);
\draw (0,0) edge [out=-80,in=-100] node[above=-1.5pt] {\footnotesize$C$} (0.8,0);
\end{tikzpicture}
\ 
+
3\  \begin{tikzpicture}[thick,scale=1.5]
\useasboundingbox (0,0) rectangle (1,0.6);
\filldraw (0,0) circle (1pt);
\filldraw (0.8,0) circle (1pt);
\draw (0,0) edge [out=80,in=100] node[above=-1.5pt] {\footnotesize$H$} (0.8,0);
\draw (0,0) -- node[above=-1.5pt] {\footnotesize$H$} (0.8,0);
\draw (0,0) edge [out=-80,in=-100] node[above=-1.5pt] {\footnotesize$C$} (0.8,0);
\end{tikzpicture}
\ +
\  \begin{tikzpicture}[thick,scale=1.5]
\useasboundingbox (0,0) rectangle (1,0.6);
\filldraw (0,0) circle (1pt);
\filldraw (0.8,0) circle (1pt);
\draw (0,0) edge [out=80,in=100] node[above=-1.5pt] {\footnotesize$H$} (0.8,0);
\draw (0,0) -- node[above=-1.5pt] {\footnotesize$H$} (0.8,0);
\draw (0,0) edge [out=-80,in=-100] node[above=-1.5pt] {\footnotesize$H$} (0.8,0);
\end{tikzpicture}
\]
Next, we write each such graph as a product of two graphs with only one kind of lines, for example:
\[
\begin{tikzpicture}[thick,scale=1.5]
\useasboundingbox (0,-0.1) rectangle (1,0.6);
\filldraw (0,0) circle (1pt);
\filldraw (0.8,0) circle (1pt);
\draw (0,0) edge [out=80,in=100] node[above=-1.5pt] {\footnotesize$C$} (0.8,0);
\draw (0,0) -- node[above=-1.5pt] {\footnotesize$H$} (0.8,0);
\draw (0,0) edge [out=-80,in=-100] node[above=-1.5pt] {\footnotesize$C$} (0.8,0);
\end{tikzpicture}\ =\ \begin{tikzpicture}[thick,scale=1.5]
\useasboundingbox (0,-0.1) rectangle (1,0.6);
\filldraw (0,0) circle (1pt);
\filldraw (0.8,0) circle (1pt);
\draw (0,0) edge [out=80,in=100] node[above=-1.5pt] {\footnotesize$C$} (0.8,0);
\draw (0,0) edge [out=-80,in=-100] node[above=-1.5pt] {\footnotesize$C$} (0.8,0);
\end{tikzpicture}\cdot\ \begin{tikzpicture}[thick,scale=1.5]
\useasboundingbox (0,-0.1) rectangle (1,0.6);
\filldraw (0,0) circle (1pt);
\filldraw (.8,0) circle (1pt);
\draw (0,0) -- node[above=-1.5pt] {\footnotesize$H$} (0.8,0);
\end{tikzpicture}
\]
Similarly to \cite{BDF09}, we require the maps $T^{\bzeta}_{H,n}$ to be smooth in $m^2\in\RR$. 
Since one can switch between 
$T^{\bzeta}_{H,n}$ and $T^{\bzeta}_n$ using the map $\exp\,{\sum_{i<j}D_{ij}^{C}}$, this requirement is a condition that affects also $T^{\bzeta}_n$. The contribution to $T^{\bzeta}_{H,n}$ coming from a graph $\Gamma$ will be denoted by  $T^{\bzeta}_{H,\Gamma}$.

\item{\bf Scaling}. Both the regularized Feynman propagator $\De^{m,\zeta}_F$ and $H_{F}^{m,\zeta}$ satisfy the scaling property \eqref{hom:scaling}, so it is natural to require a corresponding scaling behavior from $T^{\bzeta}_{H,n}$ and $T^{\bzeta}_n$. Following \cite{DF04,BDF09}, we define a map $\sigma_\rho:\Fcal\rightarrow \Fcal$, which acts as the scaling transformation:
\[
\sigma_\rho (F)(\ph):=F(\rho^{\frac{2-d}{2}}\ph_\rho)\,,\qquad \ph_\rho(x)=\ph(\rho^{-1}x)\,.
\] 
For $F_1\dots F_n\in \Fcal$ with disjoint supports
\be\label{nonren:scaling}
\sigma_\rho\circ T^{m,\bzeta}_n\circ \sigma_\rho^{-1}(F_1,\dots,F_n)=\exp\Big(\sum_{1\leq i< j\leq n}\rho^{2\zeta_{ij}}D^{\rho m}_{ij}\Big)(F_1,\dots,F_n)
\ee
holds, where we exhibited the dependence on the mass $m$. To formulate the scaling condition, it is convenient to work on the level of graphs. In the expression \eqref{nonren:scaling} we get a factor $\rho^{2\zeta_{ij}}$ for each line joining vertices $i$ and $j$. We want the extended $ T^{m,\bzeta}_n$ and $ T^{m,\bzeta}_{H,n}$ to behave in the same way, so we require that
\be\label{scalingH}
\sigma_\rho\circ T^{m,\bzeta}_{H,\Gamma}\circ \sigma_\rho^{-1}=\rho^{2 \mathbf{l}\bzeta}\,T^{\rho m,\bzeta}_{H,\Gamma}\,,
\ee
and the same for $T^{m,\bzeta}_{\Gamma}$. In the formula above, $l_{ij}$ is the number of lines connecting vertices $i$ and $j$ and $\mathbf{l}\bzeta$ is the scalar product $\mathbf{l}\bzeta:=\sum_{i,j\in V(\Gamma)}l_{ij}\zeta_{ij}$. 
The formula above may be illustrated by the following example: 
{\small \begin{example}
\begin{align}
&\sigma_\rho\circ
D_{12}^{m,\zeta_{12}}\,D_{23}^{m,\zeta_{23}}\circ\sigma_\rho^{-1}
(\ph_1^2(x), \ph_2^2(y),\ph_3^2(z))\notag\\
&\quad\quad= 8\hbar^2\,\rho^{2(d-2)}\,\Delta_{F}^{m,\zeta_{12}}
(\rho(x-y))\,\Delta_{F}^{m,\zeta_{23}}(\rho(y-z))\,
\sigma_\rho(\rho^{d-2}\ph_1(\rho x)\ph_3(\rho z))\notag\\
&\quad\quad= \rho^{2(\zeta_{12}+\zeta_{23})}\, D_{12}^{\rho m,\zeta_{12}}\,
D_{23}^{\rho m,\zeta_{23}}(\ph_1^2(x), \ph_2^2(y),\ph_3^2(z)) \end{align}
\end{example}}

\end{itemize} 
We will now show that the given axioms determine the family 
$(T_n^\bzeta)$ {\it uniquely}, for an appropriate choice of the parameters $\bzeta$. First we construct the family $(T_{H,n}^\bzeta)$ by the Epstein-Glaser induction. 
 Using the causal factorization and the field expansion, in each order $n$, we reduce the problem to the extension of a numerical distribution defined everywhere outside of the thin diagonal. The crucial property that allows us to do this is the fact that the propagators $H^{m,\bzeta}_F$ are symmetric for spacelike points and therefore the definition of $(T_{H,n}^\bzeta)$ doesn't depend on the way in which we split $F_1,\dots,F_n$ into an earlier and later supported set on the r.h.s. of \eqref{caus}. The scaling behavior of the numerical distributions is obtained from the formula \eqref{scalingH}, after inserting the field expansions \eqref{causWick} of functionals. See \cite{DF04} for details of this construction. For a given graph $\Gamma$, the scaling degree of a numerical coefficient $t_{H,\Gamma}^{\bzeta,\beta}$ is given by
\be\label{kappa:zeta}
\kappa^{\bzeta,\beta}=\sum\limits_{i<j} l_{ij}(d-2-2\zeta_{ij})+|\beta|\,.
\ee
We choose the parameters $\zeta_{ij}$ in such a way, that  $\kappa^{\bzeta,\beta}\notin\NN_0+d(|V(\Gamma)|-1)$ and $\Re(\kappa^{\bzeta,\beta})<\kappa^{\mathbf{0},\beta}+1$, where $\kappa^{\mathbf{0},\beta}:=\sum_{i<j} l_{ij}(d-2)+|\beta|$.  The reason for the former condition will be seen in \eqref{diffrenhom0}. The latter condition guarantees that the regularization doesn't make ${t_{H,\Gamma}^{\bzeta,\beta}}$ too singular.
 If these conditions are fulfilled, we obtain a unique homogeneous extension with the same degree $\kappa^{\bzeta,\beta}$, which is smooth in $m^2$. The uniqueness follows immediately from the fact that two homogeneous extensions would differ by a sum of derivatives of the $\delta$-function multiplied by non-integer powers of $m$, thus violating the smoothness condition. The explicit construction of such an extension will be given in section \ref{extension-hom}. In section \ref{examples} we illustrate the inductive procedure presented here on the level of single diagrams.

Having constructed the family $(T_{H,n}^\bzeta)$ as a unique solution to our extension problem, we can obtain $(T_n^\bzeta)$ by applying \eqref{TH:TF}. 
The maps $T_n^\bzeta$ constructed  here have some additional useful
{\bf properties}:\footnote{In the exact theory (i.e.~for $\bzeta ={\bf 0}$), 
these properties play the role of renormalization 
conditions, they are part of the axioms \cite{EG73,DF04,BDF09}.}
\begin{itemize} 
\item{\bf  Lorentz Covariance}. Since $\De^\zeta_+$ and $\De^\zeta_F$ are of the form
$\De^\zeta_+(x)=\wcal^\zeta(x^2-i\epsilon x^0)$ and $\De^\zeta_F(x)=\wcal^\zeta(x^2-i\epsilon)$ resp.,
they are  Lorentz invariant. This is the origin of Lorentz Covariance of $T_n^\bzeta$:
\be
\beta_L(T_n^\bzeta(F_1,\dots,F_n))=T_n^\bzeta(\beta_L(F_1),\dots,\beta_L(F_n))\quad\quad
    \forall L\in {\cal L}_+^\uparrow\ ,
\ee
where $\beta$ is the  natural automorphic action of the Lorentz group ${\cal L}_+^\uparrow$
 on $\Fcal$.
\item{\bf Unitarity}.
In the exact theory ($\bzeta ={\bf 0}$) one wants the relation $\bar S(-F)=(S(F))^{-1}$
to hold true, where
$\bar S(F):=\overline{S(\bar F)}$. In our formalism for the regularized 
time-ordered products, the corresponding property can be formulated as
\be\label{eq:unitarity}
(-1)^n\,\bar{T}_n^\bzeta=\hat{T}_n^\bzeta\ ,
\ee
where $\bar{T}_n^\bzeta(F^{\otimes n}):=\overline{T_n^{\bar\bzeta}(\bar F^{\otimes n})}$ and
\be\label{hatT}
\hat{T}_n^\bzeta:=
\sum_{P=(I_1,...,I_r)\in\mathrm{Part}(\{1,\dots,n\})}(-1)^r \exp\Bigl(
\sum_{i\in I_k,\,j\in I_l\,\mathrm{with}\,k<l}D_{ij}^+\Bigr)\,T^{\bzeta_{I_1}}_{|I_1|}\otimes\cdots\otimes
T^{\bzeta_{I_r}}_{|I_r|}
\ee
with $D_{ij}^+:=\langle \hbar \De_+^{\zeta_{ij}},\tfrac{\delta^2}{\delta\varphi_i\delta\varphi_j}\rangle$.
Similarly to the exact theory (cf. \cite{EG73,Sch89})\footnote{The fact that we work with different
$\zeta$'s does not complicate the calculations, because the propagators depend on $(i,j)$ already 
via their argument $(x_i-x_j)$.}, 
$\hat{T}_n^\bzeta$ satisfies anti-causal factorization, i.e.~the
$T^\bzeta$-factors on the r.h.s.~of \eqref{caus} appear in reversed order.

This property holds also for $\bar{T}^\bzeta_n$, because the underlying propagator is the
anti-Feynmann propagator, that is
\be\label{barT}
\bar{T}^{\bzeta,\mathrm{unren}}_n:=\exp\sum_{1\leq i<j\leq n} D^{AF}_{ij}\ ,\quad
D^{AF}_{ij}:=\langle \hbar \De_{AF}^{\zeta_{ij}},\tfrac{\delta^2}{\delta\varphi_i\delta\varphi_j}\rangle,\,,
\ee
where
\begin{align}
\De_{AF}^\zeta(x)&:=\Theta(x^0)\De_+^\zeta(-x)+\Theta(-x^0)\De_+^\zeta(x)\notag\\
&=\wcal^\zeta(x^2+i\epsilon)=\overline{\wcal^{\bar\zeta}(x^2-i\epsilon)}=
\overline{\De_{F}^{\bar\zeta}(x)}\label{AFprop}\,.
\end{align} 

From the anti-causal factorization of both, $\bar{T}_n^\bzeta$ and $\hat{T}_n^\bzeta$, we
conclude that in the inductive Epstein-Glaser construction unitarity \eqref{eq:unitarity}
can possibly be violated only in the extension to the thin diagonal. However, since both sides of 
\eqref{eq:unitarity} are {\it uniquely} extended by homogeneity, also the extensions must agree.

\item{\bf Field Equation}. Let $G=\int dx\,\ph(x)\,h(x)$ (where $h\in\Dcal(\MM)$). 
By the field equation we mean the relation
\begin{align}\label{FE}
&T^{\,\bzeta}_n(G,F_1,\dots,F_{n-1})=G\otimes T^{\,\bzeta}_{n-1}(F_1,\dots,F_{n-1})\notag\\
&\quad\quad\quad+\sum_{i=1}^{n-1}\int dx\,h(x)\int dy\, \De_F^{\zeta_{0i}}(x-y)\,T^{\,\bzeta}_{n-1}
\Bigl(F_1,\dots,\tfrac{\delta F_i}{\delta\ph(y)},\dots,F_{n-1}\Bigr)\ .
\end{align}
The validity of the Field Equation is most easily shown by using the uniqueness of $T^{\bzeta}_n$. 
The right hand side of \eqref{FE} gives an alternative inductive definition of $T^{\bzeta}_n$ on the restricted domain
$\{G\otimes F_1\otimes...F_{n-1}\,|\,G=\int\ph\,h\ ,\,\,F_i\in \Floc\}$, which fulfills all the axioms. 
Therefore, the alternative definition \eqref{FE} of 
$T^{\bzeta}_n(G\otimes F_1\otimes...F_{n-1})$ must agree with the original one.
\item{\bf Meromorphicity}. The maps $T^{\,\bzeta}_n$ are meromorphic in $\bzeta$.
\end{itemize}
\subsection{Extension of homogeneously scaling distributions 
with non-integer degree}\label{extension-hom}

In this section we derive a general formula for differential renormalization of homogeneous 
distributions with a non-integer degree. In order to include the case of nonzero masses, we consider smooth distribution valued functions of $m^2$,
\[t:\RR\to \Dcal^\prime(\RR^l\setminus\{0\})\]
(i.e. for every test function $f\in\Dcal(\RR^l\setminus\{0\})$, the function $m^2\mapsto\langle t(m^2),f\rangle$ is smooth).
We assume that $t$ is homogeneous under simultaneous scaling of $m^2$ and the underlying coordinates $x_r,r=1,\dots,l$, 
i.e. 
\be\label{scaling}
(\sum_{r=1}^l Q_{r}\,\partial_r -m\partial_m)t=-\kappa\,t \,
\ee
with  $\kappa\not\in l+\NN_0$,
where $Q_r$ is the operator of multiplication with the function $x\mapsto x_r$.
Due to smoothness in $m^2$, the scaling degree of $t(m^2)$ is equal to $\mathrm{Re}\,\kappa$.
 
It is convenient to introduce a uniform notation and write $Q_0$ for the operator $\partial_m$.  Furthermore, we denote $P_r:= \partial_r$, $r=1,\dots,l$ and  $P_0$ is the multiplication by $-m$. Using this notation we define a sequence of operators $E_n$ by
\begin{align*}
E_0&:=1\\
E_{n+1}&:=\sum_{r=0}^l P_r E_n Q_r\,.
\end{align*}
We can think of $E_n$ as generalized Euler operators, hence the notation.
\begin{lemma}\label{lemma:scaling}
The scaling relation
\eqref{scaling} implies the formula
\be\label{diffrenhom}
t=\frac{1}{\prod_{j=0}^{n-1}(l+j-\kappa)} E_n t\,,\ \forall n\in\NN\,.
\ee
\end{lemma}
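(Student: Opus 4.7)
The plan is to establish the factored operator identity
\[
E_n = \prod_{j=0}^{n-1}(l + N + j),
\]
where $N := \sum_{r=1}^l x_r\partial_r - m\partial_m$ denotes the combined Euler operator, so the hypothesis reads $Nt = -\kappa\,t$. Applying this identity to $t$ immediately gives $E_n t = \prod_{j=0}^{n-1}(l+j-\kappa)\,t$, and since $\kappa\notin l+\NN_0$ the scalar product is nonzero, so division yields the claimed formula \eqref{diffrenhom}.

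The key preliminary observation, which also motivates the slightly asymmetric conventions $Q_0 = \partial_m$ and $P_0 = -m$, is that the commutation relation
\[
[N, Q_r] = Q_r \qquad (r = 0, 1, \ldots, l)
\]
holds uniformly. For $r \geq 1$ this is the standard $[\sum_s x_s\partial_s, x_r] = x_r$; for $r = 0$ it is $[-m\partial_m, \partial_m] = \partial_m$. Thus the $Q_r$ act as raising operators for the $N$-grading, and the particular sign choice in $P_0$ is precisely what is needed to make this uniformity hold.

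The main step is an induction on $n$. The base case is a direct calculation using $\partial_r x_r = 1 + x_r\partial_r$:
\[
E_1 = \sum_{r=0}^l P_r Q_r = \sum_{r=1}^l (1 + x_r\partial_r) + (-m)\partial_m = l + N.
\]
For the inductive step, the commutator $[N, Q_r] = Q_r$ iterates to the telescoping shift $(l+N+j)Q_r = Q_r(l+N+j+1)$, so that
\[
\prod_{j=0}^{n-1}(l+N+j)\,Q_r = Q_r \prod_{j=1}^{n}(l+N+j).
\]
Plugging this into the recursion $E_{n+1} = \sum_r P_r E_n Q_r$, the $N$-polynomial on the right is independent of $r$ and can be pulled out, leaving $E_{n+1} = E_1 \prod_{j=1}^n (l+N+j) = \prod_{j=0}^n (l+N+j)$, closing the induction.

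The main obstacle is conceptual rather than computational: one has to recognise that the nonobvious convention $Q_0 = \partial_m,\ P_0 = -m$ is chosen precisely so that $Q_0$ satisfies the same raising commutator as the spatial $Q_r$'s, letting the mass derivative be treated on equal footing with the coordinate multiplications inside the induction. Once this uniform raising/lowering structure is in place, the proof is a short algebraic manipulation.
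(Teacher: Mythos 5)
Your proof is correct and follows essentially the same route as the paper's: an induction on $n$ whose engine is the fact that each $Q_r$ shifts the Euler operator $N=\sum_{r=1}^{l} x_r\partial_r-m\partial_m$ by one unit (the paper phrases this as ``$Q_r t$ is homogeneous of degree $\kappa-1$'', you phrase it as $[N,Q_r]=Q_r$). The only difference is cosmetic: you establish the $t$-independent operator factorization $E_n=\prod_{j=0}^{n-1}(l+N+j)$ and then evaluate on the eigenvector $t$, whereas the paper carries $t$ through the induction; both verifications are sound.
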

\begin{proof}We prove this by induction on $n$. The case $n=1$ is the scaling relation
\eqref{scaling} written in the form
\be\label{scaling1}
\sum_{r=0}^lP_r\Bigl(Q_r\, t\Bigr)=(l-\kappa)\,t\ .
\ee
Assuming \eqref{diffrenhom} to hold true for $n\leq k$, we take into account that
$Q_r t$ is homogeneous with degree $(\kappa-1)$
(i.e.~it satisfies \eqref{scaling1} correspondingly modified) and obtain
\begin{multline}
E_{k+1}t=\sum_{r_1...r_{k+1}}P_{r_{k+1}}\dots P_{r_{1}}\Bigl(Q_{r_1}\dots Q_{r_{k+1}}\,t\Bigr)=\\
=\sum_{r_1...}P_{r_{k+1}}\dots P_{r_2}\Bigl(\sum_rP_{r}\circ Q_r\circ
\Bigl(Q_{r_2}\dots Q_{r_{k+1}}\,t\Bigr)\Bigr)\notag=\\
=(l-(\kappa-k))\,\sum_{r_1...}P_{r_{k+1}}\dots P_{r_2}
\Bigl(Q_{r_2}\dots Q_{r_{k+1}}\,t\Bigr)=\notag\\
=(l+k-\kappa)\,\Bigl(\prod_{j=0}^{k-1}(l+j-\kappa)\Bigr)\,t\ ,
\end{multline}
which is \eqref{diffrenhom} for $n=k+1$.
\end{proof}
We will now use this lemma for defining extensions of distributions. Obviously, multiplication by $x_r$ reduces the scaling degree by 1. Since $t$ is a smooth function of mass, fulfilling \eqref{scaling}, the scaling degree is also reduced by 1 if we apply $\partial_m$. 
Let $\omega\in\ZZ$ be the minimal integer fulfilling
\be\label{omega}
\omega>\Re(\kappa)-l-1\ .
\ee
Now, choosing $n=\omega+1$ in \eqref{diffrenhom} we have
\be
\mathrm{sd}\bigl(Q_{r_1}\dots Q_{r_{\omega+1}}\,t\bigr)=
\Re(\kappa)-(\omega +1)< l\ .
\ee
Hence, $Q_{r_1}\dots Q_{r_{\omega+1}}\,t$ can be uniquely extended  (by the direct extension, 
see Thm.~\ref{thm:Extension-sd}) to a homogeneous distribution
\[
\overline{Q_{r_1}\dots Q_{r_{\omega+1}}\,t}\in\Dcal'(\RR^l)\,.
\]
Using differential renormalization, the unique homogeneous extension $\dot t\in\Dcal'(\RR^l)$ of $t$ is given by 
\be\label{diffrenhom0}
\dot t=\frac{1}{\prod_{j=0}^\omega (l+j-\kappa)}\,
\sum_{r_1...r_{\omega+1}}P_{r_{\omega+1}}\dots P_{r_1}
\Bigl(\overline{Q_{r_1}\dots Q_{r_{\omega+1}}\,t}\Bigr)\ .
\ee
It is now clear, why the assumption $\kappa\not\in l+\NN_0$ is needed. The massless case is easily obtained by setting $Q_0$ and $P_0$ to 0.
\begin{rem}[Almost homogeneous scaling distributions]\footnote{This remark 
is not relevant for our construction, but it may be useful in other instances.}
For {\it almost homogeneous scaling distributions $t\in\Dcal^\prime(\RR^l\setminus\{0\})$ 
with $\omega=0$}, the scaling relation
(which is now \eqref{almosthomscal}) can also be used to derive a formula for differential renormalization.
More precisely we assume that $t$ fulfills \eqref{almosthomscal} with degree $\kappa=l+z$,
where $0\not= z\in\CC$ and $\Re(z)<1$. Now we write \eqref{almosthomscal} in the form
\begin{align}
0&=(P\cdot Q+z)^{N+1}\,t\notag\\
&=z^{N+1}\,t+\sum_{k=1}^{N+1}\,{N+1\choose k} \,z^{N+1-k}
\sum_sP_s\Bigl(Q_s\circ(P\cdot Q)^{k-1}\,t\Bigr)\ ,
\end{align}
where $P\cdot Q:=\sum_{r=1}^lP_r\circ Q_r$. Since
$\mathrm{sd}\bigl(Q_s\circ(P \cdot Q)^{k-1}\,t\bigr)=\Re(\kappa)-1<l$
the unique, almost homogeneous extension can be written as
\be\label{almosthomscal1}
\dot t=-\sum_{k=1}^{N+1}\,{N+1\choose k}\,\frac{1}{z^k}\,
\sum_{s=1}^lP_s\Bigl(\overline{Q_s\,(P \cdot Q)^{k-1}\,t}\Bigr)
\in \Dcal^\prime(\RR^l)\ .
\ee
\end{rem}
\subsection{Minimal subtraction and the Forest Formula}\label{sec:MSandFF}

We start with a $1$-dimensional toy example, which is taken from
\cite[sect.~III.3.2]{Hoer03}, but treated here in the somewhat different light of extension of
(homogeneous) distributions from $\Dcal'(\RR\setminus\{0\})$ to $\Dcal'(\RR)$, 
cf.~\cite{NST11} and \cite{GraciaBondiaLazzarini2003}.
 
{\small\begin{example}[Toy model]\label{thm:toy} The distribution
\be
t^\zeta(x)=\Theta(x)\,x^{-k+\zeta}\in\Dcal'(\RR\setminus\{0\})\ ,\quad
k\in\NN\,,\quad\zeta\in\CC\,,\,\,|\zeta|<1\ ,
\ee
($\Theta(x)$ denotes the Heaviside function) scales homogeneously with degree $\kappa=k-\zeta$. 
We are searching almost homogeneous extensions to $\Dcal'(\RR)$, in particular for $\zeta =0$.
For $\zeta\not= 0$ the unique homogeneous extension $\dot t^\zeta\in\Dcal'(\RR)$ can 
be obtained by our formula \eqref{diffrenhom0}: the definition \eqref{omega}
gives $\omega=k-1$ and with that we obtain
\be\label{diffren-toy}
\dot t^\zeta(x)=\frac{1}{\zeta(\zeta-1)...(\zeta-k+1)}\,\frac{d^k}{dx^k}(\Theta(x)\,x^{\zeta})=:
\sum_{l=-1}^\infty \dot t_l(x)\,\zeta^l\ .
\ee
This is an analytic regularization of $t^0=\Theta(x)\,x^{-k}$ in the sense of definition 
\ref{df:regularisation}, since one verifies straightforwardly that
$$
\lim_{\zeta\to 0}\langle \dot t^\zeta , g\rangle=\int dx\,\Theta(x)\,x^{-k}\,g(x)\ ,\quad\quad\forall
g\in \Dcal_{k-1}(\RR)\ ,
$$
by using that such a function $g$ is of the form $g(x)=x^k\,\tilde g(x)$ with $\tilde g\in  \Dcal(\RR)$.
For $\zeta=0$ the extension is non-unique, almost homogeneous scaling is 
compatible with the addition of a term $C\,\delta^{(k-1)}(x)\ $, where $C\in\CC$ arbitrary.
However, the $MS$-prescription \eqref{def:MS} yields a unique result: $t^{\rm MS}(x)=\dot t_0(x)$
(the coefficient $l=0$ in the expansion \eqref{diffren-toy}). Using
\be
\Theta(x)\,x^{\zeta}=\Theta(x)+\zeta\,\Theta(x)\,\ln\,x+{\cal O}(\zeta^2)
\ee
and
\be
\frac{1}{(\zeta-1)...(\zeta-k+1)}=\frac{(-1)^{k-1}}{(k-1)!}\,
\Bigl(1+\zeta\,\sum_{j=1}^{k-1}1/j+{\cal O}(\zeta^2)\Bigr)
\ee
we obtain
\be
t^{\rm MS}(x)=\frac{(-1)^{k-1}}{(k-1)!}\,
\Bigl(\frac{d^k}{dx^k}(\Theta(x)\,\ln x)+(\sum_{j=1}^{k-1}1/j)
\,\,\delta^{(k-1)}(x)\Bigr)\ .
\ee
Note that $t^{\rm MS}$ scales almost homogeneously with degree $k$ and power $1$.
\end{example}}

We now apply the formula \eqref{diffrenhom0} to the
distributions ${t_{H,\Gamma}^{\bzeta,\beta}}\in\Dcal'(\RR^{d(n-1)}\setminus\{ 0\})$,   
arising as numerical coefficients in the expansions \eqref{causWick} of $T_{H,\Gamma}^{\bzeta}$. For such objects  $\kappa\equiv\kappa^{\bzeta,\beta}$ is given by equation \eqref{kappa:zeta} and the domain for $\bzeta\in\CC^N$
is restricted by $\Re(\kappa^{\bzeta,\beta})<\kappa^{{\bf 0},\beta}+1$ and 
$\kappa^{\bzeta,\beta}\not\in d(n-1)+\NN_0$ to the region
\be\label{Omega-graph}
\Omega_{\Gamma}^{\beta}:=\bigl\{\bzeta=(\zeta_{ij})_{1\leq i<j\leq n}\,\vert\,2\,{\bf l}\bzeta\not\in
\{0,1,...,\kappa^{{\bf 0},\beta}-d(n-1)\}\,\, \wedge\,\,\Re\, 2{\bf l}\bzeta> -1\bigr\}\,.
\ee
The minimal $\omega\in\ZZ$ satisfying \eqref{omega} for all $\bzeta$ 
fulfilling these restrictions is  
\be\label{sing-order}
\omega=\kappa^{{\bf 0},\beta}-d(n-1)\ .
\ee
Since for $\omega <0$ the direct extension (Thm.~\ref{thm:Extension-sd})
is applicable, we only study the case $\omega\geq 0$. The unique homogeneous
extension \eqref{diffrenhom0} can be written as
\be\label{diffrenhom1}
\dot{t}_{H,\Gamma}^{\bzeta,\beta}=\frac{1}{\prod_{k=0}^\omega (2{\bf l}\bzeta-k)}\,
\sum_{r_1...r_{\omega+1}}P_{r_{\omega+1}}...P_{r_{1}}
\Bigl(\overline{Q_{r_1}\dots Q_{r_{\omega+1}}\,{t_{H,\Gamma}^{\bzeta,\beta}}}\Bigr)\ .
\ee
We explicitly see that $\dot{t}_{H,\Gamma}^{\bzeta,\beta}$ has possible poles at $2\,{\bf l}\bzeta\in\{0,1,...,\omega\}$. Before we can perform the minimal subtraction, we have to pass from $T_{H,n}^\bzeta$ to  $T_{n}^\bzeta$. On the level of graphs, this corresponds to multiplying extended regularized expressions  constructed above with powers of $C^\zeta$. Since $C^\zeta$ is regular in $x$, these powers and multiplications
are well defined and no extension is needed. Let us fix a graph $\Gamma$ and consider subgraphs $\gamma,\gamma^c$ with the same vertex set such the edges of $\Gamma$ are either edges of $\gamma$ or $\gamma^c$, i.e. $E(\gamma)\subset E(\Gamma)$ and $E(\gamma^c)=E(\Gamma)\setminus E(\gamma^c)$. According to \eqref{TH:TF}, $T_{\Gamma}^{\bzeta}$ can be constructed as
\[
T_\Gamma^{\bzeta}=\sum\limits_{\gamma}T_{H,\gamma}^{\bzeta_\gamma}\circ T_{C,\gamma^c}^{\bzeta_{\gamma^c}}\,,
\]
where $\bzeta=(\bzeta_{\gamma}, \bzeta_{\gamma^c})$ and 
$T_{C,\gamma^c}^{\bzeta_{\gamma^c}}:=\prod_{i,j\in V(\gamma)}(C^{\zeta_{ij}})^{l_{ij}^{\gamma^c}}$
($l_{ij}^{\gamma^c}$ denotes the number of $(ij)$-lines in $\gamma^c$). 
Using the field expansion \eqref{causWick} we can write the corresponding formula also on the level of numerical distributions:
\be\label{zeta:expansion}
\dot{t}_\Gamma^{\bzeta,\beta}=\sum\limits_{\gamma,\beta_1\le\beta}\dot{t}_{H,\gamma}^{\bzeta_\gamma,\beta_1}\,t_{C,\gamma^c}^{\bzeta_{\gamma^c},\beta-\beta_1}\ .
\ee 
Note that $\beta_1$, contrary to $\beta$, is not restricted by the condition that it involves only the relative coordinates at each vertex.

To perform the minimal subtraction scheme we set all $\zeta_{ij}$ to be equal to a fixed value $\zeta$ and 
determine the coefficient of $\zeta^0$ in the Laurent series
\eqref{zeta:expansion}. In the massless case the  minimal subtraction scheme simplifies significantly, since we don't have to separate the regularized time-ordered products into $T^{\bzeta}_{H,\Gamma}$ and the powers of $C$.  

We illustrate the massless case for a graph $\Gamma$ with no subdivergences.  
Then the non-extended distribution $t_\Gamma^{\zeta,\beta}\equiv t^\zeta$ is analytic in $\zeta$, 
\be\label{t-analyt}
t^\zeta=t_0+\zeta\,t_1+\mathcal{O}(\zeta^2)\ ,
\ee
Thus the
extension $\dot t^\zeta$ \eqref{diffrenhom1}
has a pole of order $1$ at $\zeta =0$, i.e. $\dot t^\zeta=\sum_{j=-1}^\infty \dot t_j\,
\zeta^j\ $; the coefficient $\dot t_0$ is the $MS$-solution $t^{\rm MS}$. 
By expanding also the prefactor $\tfrac{1}{\prod_k(2c\zeta-k)}$ with $c=\sum_{i<j}l_{ij}=|E(\Gamma)|$, we obtain
\be\label{diffrenhomMS}
t^{\rm MS}=\frac{(-1)^\omega}{\omega!}
\sum_{r_1...}\partial_{r_1}...\partial_{r_{\omega+1}}
\Bigl(\frac{1}{2c}\,\overline{Q_{r_1} \dots Q_{r_{\omega+1}}\,t_1}
+(\sum_{k=1}^\omega\frac{1}{k})\,\overline{Q_{r_1} \dots Q_{r_{\omega+1}}\,t_0}\Bigr)\ .
\ee
The second term on the right hand side is a finite renormalization term, i.e.~it is of the form $\sum_{|a|=\omega} C_a\,
\partial^a\delta$ with $C_a\in\CC$, since  it vanishes for 
$x\not= 0$ due to $\partial_{r_{\omega+1}}\circ Q_{r_{\omega+1}}(Q_{r_1}\dots Q_{r_\omega}\,t_0)=0$ 
(cf.~\eqref{scaling1}).  
 
The first term in \eqref{diffrenhomMS} contains generically logarithmic terms
which come from the expansion of a product of massless Feynman propagators:
\be\label{expansion-Feyprop}
\prod_j\frac{\mu^{2\zeta}}{(-(x_j^2-i\epsilon))^{1-\zeta}}=
\frac{1+\zeta\,\sum_j\ln(-\mu^2(x_j^2-i\epsilon))+\mathcal{O}(\zeta^2)}{\prod_j(-(x_j^2-i\epsilon))}\ .
\ee

The dimensional regularization which we introduce doesn't yield 
${T^{\bzeta,\mathrm{unren}}_n}$ \eqref{Sreg-unren} (i.e.~the Feynman rules) 
finite, in the sense that the formal expressions for a graph $\Gamma$ and a multi-index $\beta$ characterizing the derivatives are a priori meaningful only for values of the regularization parameters $\bzeta$ with $\mathrm{Re}\,\zeta_{ij}$ sufficiently large.\footnote{\label{dim-reg-usual} This corresponds
to the fact that in the dimensional regularization in Euclidean momentum space the ``Feynman
integrals in $d-2\zeta$ dimensions'' are only  defined for $\mathrm{Re}\,\zeta$ sufficiently large and have to be extended by analytic continuation.}
The analytic extension to $\Omega_{\Gamma}^{\beta}$ can be constructed by the use of the homogeneous scaling with non-integer degree in terms of
formula \eqref{diffrenhom1}. 
In the presence of divergent subdiagrams, one first has to perform the analytic extension for the subdiagrams.
This amounts to solving the EG induction scheme for the deformed theory.  
The result is unique. Then, the limit $\bzeta\to {\bf 0}$ is performed by
applying the EG Forest Formula \eqref{EGforest}. A disadvantage is that in intermediate steps of the construction of the analytic extension partitions of unity are used (see Example \ref{thm:doubletriangle}) which make the method less explicit. 
An alternative is the so-called splitting method originally used by Epstein and Glaser which avoids partitions of unity on the price of a more complicated combinatorics.
\begin{rem}[Quick computation in the massive case]\label{m-quick}
If $t^\zeta_m:=t^{\zeta,\beta}_{\Gamma}$ is a product of derivated regularized Feynman propagators
(i.e.~all preceding inductive steps of the EG-construction of $t^\zeta_m$ are done by 
the direct extension), the unique $\dot t^\zeta_m$ and the unique $t^\MS_m$ can be computed by the 
following procedure, which is usually much faster than the method explained above. In this case
it suffices to work with one $\zeta$.
\begin{itemize}
\item[(1)] Insert the expansion
\be\label{reg-Feyn-prop}
\Delta^\zeta_{F,\,m}(x)=\sum_{l=0}^\infty h_l^\zeta\,\mu^{2\zeta}\,m^{2l}\,(-(x^2-i\epsilon))^{l+1-\tfrac{d}2+\zeta}
+\sum_{l=0}^\infty c_l^\zeta\,\mu^{2\zeta}\, m^{d-2+2l-2\zeta}\,(-x^2)^l
\ee
(with coefficients $h_l^\zeta,\,c_l^\zeta$ which do not depend on $(x,m)$, see \eqref{Creg}-\eqref{Hreg})
into $t^\zeta_m=\prod\partial^a\Delta^\zeta_{F,\,m}$.
\item[(2)] Let $\zeta\in\Omega^\beta_\Gamma$. Write $t^\zeta_m$ as the summands with scaling degree $> d(|V(\Gamma)|-1)-1$ 
and a remainder $r^\zeta_m$:
\be
t^\zeta_m(x)=\sum_{p=0}^P\sum_c m^{p-2c\zeta}\,\tau_{p,c}^\zeta(x)+r^\zeta_m(x)\ ,
\ee
where $c$ is the total number of $c$-lines (i.e.~the propagator is a $c_l^\zeta$-term).
\item[(3)] Apply the direct extension to $r^\zeta_m$. Since $\tau_{p,c}^\zeta(x)$ is homogeneous in $x$
with a non-integer degree, it can be extended by the differential renormalization formula  \eqref{diffrenhom1}
with $Q_0\equiv 0\equiv P_0$. Summing up we obtain
\be
\dot t^\zeta_m(x)=\sum_{p=0}^P\sum_c m^{p-2c\zeta}\,\dot\tau_{p,c}^\zeta(x)
+\overline{r^\zeta_m}(x)\ .
\ee
Obviously, the so constructed $\dot t^\zeta_m$ is the unique solution of our axioms.
\item[(4)] Minimal subtraction acts only on the expressions $(m^{-2c\zeta}\,\dot\tau_{p,c}^\zeta)$, i.e.
\be
t^\MS_m(x)=\sum_{p=0}^P m^p \sum_c \,\lim_{\zeta\to 0}\rp\bigl(m^{-2c\zeta}\,\dot\tau_{p,c}^\zeta(x)\bigr)
+\overline{r_m}(x)\ ,\quad \overline{r_m}:=\lim_{\zeta\to 0}\overline{r^\zeta_m}\ ,
\ee
because $\overline{r^\zeta_m}$ is analytic in $\zeta$. The latter can be seen as follows: for $x\not= 0$
we conclude from the analyticity of $\Delta^\zeta_{F,\,m}$ that $t^\zeta_m$ and the sums 
$\sum_c m^{-2c\zeta}\,\tau_{p,c}^\zeta$ are analytic, hence, $r^\zeta_m$ is analytic and this property is maintained 
in the direct extension.
\end{itemize}
\end{rem}

\subsection{Examples}\label{examples}
In this section we use the shorthand notations
\be
x_{kl}:=x_k-x_l\ ,\quad X:=-(x^2-i\epsilon)\ ,\quad X_{kl}:=-(x_{kl}^2-i\epsilon)\ ,
\quad X-Y:=-((x-y)^2-i\epsilon)\ .
\ee

To simplify the formulas we work with a slight modification of the 
regularized Feynman propagator: writing the prefactor $c(d-2\zeta)$ 
(used in \eqref{Creg}-\eqref{Hreg}) as 
$$
c(d-2\zeta)=\tfrac{(2\pi)^{\zeta-\frac{d}{2}}}2\,\Gamma(\tfrac{d}{2}-1-\zeta)\Gamma(2-\tfrac{d}{2}+\zeta)
$$
(by means of Euler's reflection formula),
we replace $\pi^\zeta\,\Gamma(\tfrac{d}2-1-\zeta)$ by $\Gamma(\tfrac{d}2-1)$. This amounts to a finite renormalization of 
$t^\MS$, which is analogous to the step from the $MS$- to the
$\overline{MS}$-prescription in conventional dimensional regularization.

\begin{example}[Second order of a massless model in $d=4$ dimensions]\footnote{In 
\cite{GraciaBondiaLazzarini2003} this example is treated by essentially the same method under the 
name 'analytical regularization'.} 
The $k$-th power of the dimensionally regularized massless Feynman propagator,
\be\label{DF4}
t^\zeta(x):=(D_F^\zeta(x))^k\ ,\quad k\in\{2,3,4,...\}\,,\quad\text{with}\quad
D_F^\zeta(x)=\frac{\mu^{2\zeta}}{4\pi^2\,X^{1-\zeta}}\ ,
\ee
exists in $\Dcal'(\RR^4)$ (by the direct 
extension, Thm.~\ref{thm:Extension-sd}) for 
\be
\sd(t^\zeta)<4\quad\text{that is for}\quad \Re(\zeta)>1-\tfrac{2}{k}\ .
\ee
Analytic continuation to a function meromorphic in 
$\Omega:=\{\zeta\in\CC\,|\,\Re(\zeta)>-\tfrac{1}{k}\}$
can be done by differential renormalization. Instead of using \eqref{diffrenhom1}
we proceed in the
following way:\footnote{For $k=2$ (fish diagram)
the two procedures give essentially the same formula, due to 
$\square X^\alpha=-2\alpha\,\partial_\mu(x^\mu\,
X^{\alpha-1})$; but for higher powers of $D_F^\zeta$, the method \eqref{DFk}
yields shorter formulas.}
on $\Dcal(\RR^4\setminus\{0\})$ the distribution $X^\alpha$ is
well-defined for $\alpha\in\CC$ and one easily verifies
\be
\square X^\alpha=-4\alpha(\alpha+1)\,X^{\alpha-1}
\ee
(cf.~\cite{GraciaBondiaLazzarini2003}). Hence, in $\Dcal'(\RR^4\setminus\{0\})$
$t^\zeta$ agrees with  
\be
\dot t^\zeta(x)=\frac{(-1)^{k-1}\,\mu^{2k\zeta}}
{(4\pi^2)^k\,4^{k-1}\,k\zeta\,(k\zeta-k+1)\,\prod_{j=1}^{k-2}(k\zeta-j)^2}
\,\square^{(k-1)} X^{-1+k\zeta}\ ,\label{DFk}
\ee
for almost all $\zeta\in\CC$, where $\prod_{j=1}^{k-2}(k\zeta-j)^2:=1$ for $k=2$.

As distributions on $\Dcal(\RR^4)$ we have $t^\zeta=\dot t^\zeta$ for 
$\Re(\zeta)>1-\tfrac{2}{k}$ only; however, $\dot t^\zeta$ is well-defined as meromorphic function
on $\Omega$ by direct extension of $X^{-1+k\zeta}$, 
since $\mathrm{sd}(X^{-1+k\zeta})=2(1-\Re(k\zeta))<4=d$ for $\zeta\in\Omega$.
Hence, $\dot t^\zeta$ is the unique analytic continuation of $t^\zeta$.

The $MS$-solution can be computed as explained in \eqref{t-analyt}-\eqref{expansion-Feyprop}:
\begin{align}\label{DFk-MS}
t^{\rm MS}(x)&=\frac{(-1)^{k-1}}{(4\pi^2)^k\,4^{k-1}\,(1-k)\,\prod_{j=1}^{k-2}j^2}\,\square^{(k-1)}
\Bigl(\frac{\ln(\mu^2\,X)+c}{X}\Bigr)\notag\\
&=\frac{(-1)^{k-1}}{(4\pi^2)^k\,4^{k-1}\,(1-k)\,\prod_{j=1}^{k-2}j^2}\,
\Bigl(\square^{(k-1)}\Bigl(\frac{\ln(\mu^2\,X)}
{X}\Bigr)+c\,i4\pi^2\,\square^{(k-2)}\delta(x)\Bigr)\ ,
\end{align}
where $c:=(k-1)^{-1}+2\sum_{j=1}^{k-2}j^{-1}$ and $\sum_{j=1}^{k-2}j^{-1}:=0$ for $k=2$.
\end{example}

\begin{example}[Massive setting sun diagram in $d=4$ dimensions]\label{exp:settingsun}
We use the quick computation of remark \ref{m-quick}: we write $ t^\zeta_m:=(\De_{F,m}^\zeta)^3$ as
\be\label{sm-settingsun}
t^\zeta_m(x)=\tau_{0,0}^\zeta(x)+(\tfrac{m}{\mu})^2\Bigl(\tau_{2,0}^\zeta(x)+
(\tfrac{m}{\mu})^{-2\zeta}\,\tau_{2,1}^\zeta(x)\Bigr)+r^\zeta_m(x)
\ee
with
\begin{align}
\tau_{0,0}^\zeta(x)=&(h_0)^3\,\mu^{6\zeta}\,X^{-3+3\zeta}\ ,\quad
\tau_{2,0}^\zeta(x)=3\,h_1^\zeta\,(h_0)^2\,\mu^{2+6\zeta}\,X^{-2+3\zeta}\ ,\notag\\
\tau_{2,1}^\zeta(x)=&3\,c_0^\zeta\,(h_0)^2\,\mu^{2+4\zeta}\,X^{-2+2\zeta}\ ,
\end{align}
where 
\begin{align}\label{coefficients}
h_0 &=\frac{1}{4\,\pi^2}\,,\quad
h_1^\zeta=\frac{\Gamma(\zeta)}{16\,\pi^2\,\Gamma(1+\zeta)}
=\frac{1}{16\,\pi^2}\,\Bigl(\frac{1}{\zeta}+\mathcal{O}(\zeta^0)\Bigr)\ ,\notag\\
c_0^\zeta&=-\frac{4^\zeta\,\Gamma(\zeta)}{16\,\pi^2\,\Gamma(2-\zeta)}
=\frac{-1}{16\,\pi^2}\,\Bigl(\frac{1}{\zeta}+\mathcal{O}(\zeta^0)\Bigr)\ .
\end{align}
We point out that the singularities of $h_1^\zeta,\, c_0^\zeta$ for $\zeta\to 0$
cancel out in the combination $t^\zeta_{2}:=(\tau_{2,0}^\zeta+(\tfrac{m}{\mu})^{-2\zeta}\,\tau_{2,1}^\zeta)$:
\be\label{cancellation}
\lim_{\zeta\to 0}t^\zeta_{2}(x)=3\,(h_0)^2\,\mu^2\,X^{-2}\,
\lim_{\zeta\to 0}(h_1^\zeta\,(\mu^2X)^{\zeta}+c_0^\zeta\,(\tfrac{m}{\mu})^{-2\zeta})=
\tfrac{3}{2^8\,\pi^6}\,\mu^2\,X^{-2}
\,\Bigl(\ln(\tfrac{m^2\,X}{4})-2\,\Gamma '(1)-1\Bigr)
\ee
in $\Dcal'(\RR^3\setminus\{0\})$, as it must be since $(\De_F^\zeta)^3$ is analytic in $\zeta$. 

We have to compute the coefficient
$t^\MS=\dot t_0$ of $\dot t^\zeta=\sum_{k=-l}^\infty \dot t_k\,\zeta^k$ for $t^\zeta=\tau^\zeta_{0,0}\ $,
$t^\zeta=\tau^\zeta_{2,0}$ and $t^\zeta=m^{-2\zeta}\,\tau^\zeta_{2,1}\ $. 
For the former the result is the particular case $k=3$ of \eqref{DFk-MS}:
\be
\tau^\MS_{0,0}(x)=\frac{-1}{2^{11}\,\pi^6}\,
\Bigl(\square^2\Bigl(\frac{\ln(\mu^2\,X)}
{X}\Bigr)+\tfrac{5}2\,i\,4\,\pi^2\,\square\delta(x)\Bigr)\ .
\ee
The extensions of $\tau^\zeta_{2,0}$ and $\tau^\zeta_{2,1}$ are obtained analogously to \eqref{DFk}:
\be 
\dot \tau^\zeta_{2,0}(x)=\frac{\mu^{2+6\zeta}\,3\,h_1^\zeta\,(h_0)^2}{4\,(1-3\zeta)\,3\zeta}\,
\square X^{-1+3\zeta}\ ,\quad
\dot \tau^\zeta_{2,1}(x)=\frac{\mu^{2+4\zeta}\,3\,c_0^\zeta\,(h_0)^2}{4\,(1-2\zeta)\,2\zeta}\,
\square X^{-1+2\zeta}\ .
\ee
Note that the cancellation of the leading negative power of $\zeta$ in the sum
$t^\zeta_{2}$ \eqref{cancellation}
does not work for the pertinent extensions:
\be
\dot t^\zeta_{2}:=(\dot\tau_{2,0}^\zeta+(\tfrac{m}{\mu})^{-2\zeta}\,\dot \tau_{2,1}^\zeta)
=\frac{3\,(h_0)^2}{4\,\zeta}\,\square \Bigl(X^{-1+2\zeta}\,\mu^{2+4\zeta}\Bigl[
\frac{h_1^\zeta\,(\mu^2 X)^\zeta}{(1-3\zeta)\,3}+\frac{c_0^\zeta\,(\tfrac{m}{\mu})^{-2\zeta}}{(1-2\zeta)\,2}\Bigr]\Bigr)\ ,
\ee
since $[...]=\tfrac{1}{16\,\pi^2\,\,\zeta}\,(\tfrac{1}3-\tfrac{1}2)+\mathcal{O}(\zeta^0)$. Expanding
\begin{align}
\frac{3\,h_1^\zeta\,(h_0)^2}{4\,(1-3\zeta)\,3\zeta}&=\Bigl(\frac{r_{-2}}{\zeta^2}+
\frac{r_{-1}}{\zeta}+r_0+\mathcal{O}(\zeta)\Bigr)\ ,\notag\\
\frac{3\,c_0^\zeta\,(h_0)^2}{4\,(1-2\zeta)\,2\zeta}&=\Bigl(\frac{s_{-2}}{\zeta^2}+
\frac{s_{-1}}{\zeta}+s_0+\mathcal{O}(\zeta)\Bigr)\ ,
\end{align}
the MS-solutions read
\begin{align}
\tau^\MS_{2,0}(x):=\mu^{-2}\,\lim_{\zeta\to 0}\rp (\dot\tau_{2,0}^\zeta)(x)&=
\square\Bigl(\frac{r_0+r_{-1}\,3\,\ln(\mu^2X)+r_{-2}\,\tfrac{9}2\,
(\ln(\mu^2X))^2}{X}\Bigr)\ ,\notag\\
\tilde\tau_{2,1}^\MS(x):=\mu^{-2}\,\lim_{\zeta\to 0}\rp ((\tfrac{m}{\mu})^{-2\zeta}\,\dot\tau_{2,1}^\zeta)(x)&=
\square\Bigl(\frac{s_0+s_{-1}\,2\,\ln(\tfrac{\mu^3X}{m})+s_{-2}\,2\,
(\ln(\tfrac{\mu^3X}{m}))^2}{X}\Bigr)\ .
\end{align}
Joining together the various terms we end up with
\be
t^\MS_m(x)=\tau_{0,0}^\MS(x)+m^2\Bigl(\tau_{2,0}^\MS(x)+\tilde\tau_{2,1}^\MS(x)\Bigr)
+\overline{r_m}(x)\ .
\ee

Notice that $\overline{r_m}$ can be computed directly, without using any regularization: inserting
the $m^2$-expansion of the Feynman propagator,
\be
\De_F(x)=\frac{h_0}{X}+m^2\sum_{l=0}^\infty\Bigl(q_l\,(m^2X)^l\,\ln(m^2X)+Q_l\,(m^2X)^l\Bigr)\ ,
\quad q_l,Q_l\in\RR\ ,
\ee
into $(\De_F)^3$, $\overline{r_m}$ is the direct extension (Thm.~\ref{thm:Extension-sd}) of the sum of all 
terms which are $\sim m^{4+2l}\,(\ln(m^2X))^k$ with $l\in\NN_0$ (and $k=0,1,2,3$). 
\end{example}

\begin{example}[Massless triangle diagram in $d=6$ dimensions]\label{exp:triangle}
The dimensionally regularized massless Feynman propagator
in $d=6$ dimensions reads
\be\label{Feynman6}
D_F^\zeta(x)=\frac{\mu^{2\zeta}}{4\pi^3\,X^{2-\zeta}}\ ,
\ee
The triangle diagram
\be
t^\zeta(x,y)=D_F^\zeta(x)\,D_F^\zeta(y)\,D_F^\zeta(x-y)\,\in\Dcal'(\RR^{12}\setminus\{0\})\ ,\quad 
\zeta\in\CC\setminus\{0\}\ ,
\ee
is homogeneous with degree $\kappa^\zeta=12-6\zeta$, i.e.~we have $\omega =0$
\eqref{omega}. With that \eqref{diffrenhom1} yields
\be\label{triangle6}
\dot t^\zeta(x,y)=\frac{\mu^{6\zeta}}{4^3\pi^9\,6\,\zeta}\,
\partial_{(x,y)\mu} \Bigl(\overline{\frac{(x,y)^\mu}{X^{2-\zeta}\,
Y^{2-\zeta}\,(X-Y)^{2-\zeta}}}\Bigr)\ ,
\ee
where we write $\partial_{(x,y)\mu} \bigl(\overline{(x,y)^\mu\,t(x,y)}\bigr)$ for $\partial_{x,\mu} 
(\overline{x^\mu t(x,y)})+\partial_{y,\mu} (\overline{y^\mu t(x,y)})$ to simplify the notation.

For $\zeta\to 0$ the $MS$-prescription \eqref{diffrenhomMS} yields
\be\label{triangle6MS}
t^{\rm MS}(x,y)=\frac{1}{3\cdot 2^7\,\pi^9}\,\partial_{(x,y)\mu}
\Bigl(\overline{(x,y)^\mu\,\frac{\ln(\mu^6\,X\,Y\,(X-Y))}
{X^2\,Y^2\,(X-Y)^2}}\Bigr) 
\ee
\end{example}

\begin{example}[Massless triangle diagram with subdivergences in $d=4$ dimensions]\label{exp:triangle1}
We want to renormalize
\be
t^\bzeta(x,y)=\Bigl((D_F^{\zeta_1}(x))^2\Bigr)_\mathrm{ren}\,\Bigl((D_F^{\zeta_2}(y))^2\Bigr)_\mathrm{ren}
D_F^{\zeta_3}(x-y)\,\in\Dcal'(\RR^8\setminus\{0\})\ ,
\ee
with $\bzeta\equiv (\zeta_1,\zeta_2,\zeta_3)\in\CC^3\ ,\,\,|\bzeta|$ small enough, 
$\zeta_1\not= 0$, $\zeta_2\not= 0$, $(2\zeta_1+2\zeta_2+\zeta_3)\not= 0$
and where $D_F^\zeta$ is given by \eqref{DF4}.
By 'ren' we mean that the pertinent (divergent) subdiagram is renormalized. Doing this 
by using \eqref{DFk} we have
\be
t^\bzeta(x,y)=c(\zeta_1)\,c(\zeta_2)\,\tau^\bzeta(x,y)\ ,
\ee
where
\be
c(\zeta):=\frac{\mu^{4\zeta}}{(4\pi^2)^2\,8\,\zeta\,(1-2\zeta)}
=:\sum_{k=-1}^\infty c_k\,\zeta^k
\ee
and
\be
\tau^\bzeta(x,y):=\frac{\mu^{2\zeta_3}}{4\pi^2}
\Bigl(\square_x X^{-1+2\zeta_1}\Bigr)\,\Bigl(\square_y Y^{-1+2\zeta_2}\Bigr)\,
(X-Y)^{-1+\zeta_3}\ .
\ee
We explicitly see that $t^\bzeta$ scales homogeneously with 
degree $\kappa^\bzeta=10-2(2\zeta_1+2\zeta_2+\zeta_3)$. Hence, 
the differential renormalization formula \eqref{diffrenhom1}
can be applied with $\omega=2$:
\be
\dot t^\bzeta(x,y)=\frac{c(\zeta_1)\,c(\zeta_2)}{2\zeta_1+2\zeta_2+\zeta_3}\,
\tilde t^\bzeta(x,y)\in\Dcal^\prime(\RR^8)\ ,
\ee
where
\begin{align}\label{trianglesubsub}
\tilde t^\bzeta(x,y)&:=\frac{1}{2\,(1-4\zeta_1-4\zeta_2-2\zeta_3)\,(2-4\zeta_1-4\zeta_2-2\zeta_3)}\Bigl(
\partial_{x,\mu}\partial_{x,\nu}\partial_{x,\lambda}\bigl(\overline{x^\mu x^\nu x^\lambda \tau^\bzeta(x,y)}\bigr)
\notag\\
&+3\,\partial_{x,\mu}\partial_{x,\nu}\partial_{y,\lambda}\bigl(\overline{x^\mu x^\nu y^\lambda \tau^\bzeta(x,y)}\bigr)
+3\,\partial_{x,\mu}\partial_{y,\nu}\partial_{y,\lambda}\bigl(\overline{x^\mu y^\nu y^\lambda \tau^\bzeta(x,y)}\bigr)
\notag\\
&+\partial_{y,\mu}\partial_{y,\nu}\partial_{y,\lambda}\bigl(\overline{y^\mu y^\nu y^\lambda \tau^\bzeta(x,y)}\bigr)\Bigr)
\end{align}
is analytic in $\bzeta$ for $|\bzeta|$ sufficiently small. 

Turning to the limit $\bzeta\to {\bf 0}$ we apply the EG Forest Formula \eqref{EGforest} (Thm.~\ref{forest}):
first we subtract the principle parts of the divergent subdiagrams
\begin{align}\label{ssd}
(1+R_{\zeta_1}+R_{\zeta_2})\,\dot t^\bzeta(x,y)&=\frac{c(\zeta_1)\,c(\zeta_2)}{2\zeta_1+2\zeta_2+\zeta_3}\,
\tilde t^{(\zeta_1,\zeta_2,\zeta_3)}(x,y)\notag\\
&-\frac{c_{-1}\,c(\zeta_2)}{\zeta_1\,(2\zeta_2+\zeta_3)}\tilde t^{(0,\zeta_2,\zeta_3)}(x,y)
-\frac{c_{-1}\,c(\zeta_1)}{\zeta_2\,(2\zeta_1+\zeta_3)}\tilde t^{(\zeta_1,0,\zeta_3)}(x,y)
\end{align}
(where we write $R_{\Lambda_I}$ instead of $R_I$). Using $\square_x X^{-1}=-i 4 \pi^2\,\delta(x)$
we explicitly see that $\tilde t^{(0,\zeta_2,\zeta_3)}(x,y)= \tilde t^{\zeta_2,0,\zeta_3)}(y,x)$ 
has support on the partial diagonal $x=0$:
\begin{align} 
\tilde t^{(0,\zeta_2,\zeta_3)}(x,y)=& \frac{-i\,\mu^{2\zeta_3}}
{2\,(1-4\zeta_2-2\zeta_3)\,(2-4\zeta_2-2\zeta_3)}\,\delta(x)\notag\\ 
&\quad\cdot \partial_{y,\mu}\partial_{y,\nu}\partial_{y,\lambda}\Bigl(\overline{y^\mu y^\nu y^\lambda 
\bigl(\square_y Y^{-1+2\zeta_2}\bigr)\, 
Y^{-1+\zeta_3}}\Bigr)\ .  
\end{align}
To obtain 
\be
t^\mathrm{MS}(x,y)=\lim_{\zeta\to 0}\,(1+R_\zeta)\vert_{\zeta:=\zeta_1=\zeta_2=\zeta_3}
(1+R_{\zeta_1}+R_{\zeta_2})\,\dot t^\bzeta(x,y)
\ee
($(1+R_\zeta)$ removes the remaining ``overall divergence'') we set $\zeta:=\zeta_1=\zeta_2=\zeta_3$
in $(1+R_{\zeta_1}+R_{\zeta_2})\,\dot t^\bzeta(x,y)$ \eqref{ssd} and compute from the resulting 
Laurent series in $\zeta$ (which has a pole of order $3$) the term $\sim\zeta^0$. Using
the expansions
\be
\tilde t^{(\zeta,\zeta,\zeta)}(x,y)=\sum_{k=0}^\infty t_k(x,y)\,\zeta^k\ ,\quad\quad
\tilde t^{(0,\zeta,\zeta)}(x,y)=\sum_{k=0}^\infty t^1_k(x,y)\,\zeta^k\ ,
\ee
we end up with
\be
t^\mathrm{MS}(x,y)=\frac{1}5\,\sum_{q+r+s=1}c_q\,c_r\,t_s(x,y)-
\frac{c_{-1}}3\,\sum_{r+s=2}c_r\,(t^1_s(x,y)+t^1_s(y,x))\ ,
\ee
where $q,r\geq -1$ and $s\geq 0$.
\end{example}

\begin{example}[Massless double triangle diagram with overlapping divergences in $d=6$ 
dimensions]\label{thm:doubletriangle}
We introduce the notation $D_F^\zeta(x)=: d(\zeta)\, X^{-2+\zeta}$
for the Feynman propagator \eqref{Feynman6}, note that $\zeta\mapsto d(\zeta)$ is analytic. 
Compared with the preceding examples, the additional complication
of the double triangle diagram,
\be
t_\mathrm{unren}(x_{14},x_{24},x_{34})=D_F^{\zeta_{12}}(x_{12})\,D_F^{\zeta_{13}}(x_{13})\,
D_F^{\zeta_{23}}(x_{23})\,D_F^{\zeta_{24}}(x_{24})\,D_F^{\zeta_{34}}(x_{34})\ ,
\ee
is an ``overlapping divergence''. The subdiagram 123 (i.e.~with vertices $x_1,x_2,x_3$) is computed in 
Example \ref{exp:triangle}; with different $\zeta_{kl}$'s the regularized amplitude \eqref{triangle6}
reads
\be\label{subtriangle}
\dot t_3^{(\zeta_{12},\zeta_{13},\zeta_{23})}(x_{12},x_{13})=
\frac{d(\zeta_{12})d(\zeta_{13})d(\zeta_{23})}{2(\zeta_{12}+\zeta_{13}+\zeta_{23})}\,\,
\tilde t^{(\zeta_{12},\zeta_{13},\zeta_{23})}(x_{12},x_{13})\in\Dcal^\prime(\RR^{12})\ ,
\ee
where
\be
\tilde t^{(\zeta_{12},\zeta_{13},\zeta_{23})}(x_{12},x_{13}):=
\partial_{(x_{12},x_{13})\mu} \Bigl(\overline{\frac{(x_{12},x_{13})^\mu}{
X_{12}^{\>\>2-\zeta_{12}}\,X_{13}^{\>\>2-\zeta_{13}}\,
X_{23}^{\>\>2-\zeta_{23}}}}\Bigr)
\ee
is analytic in $(\zeta_{12},\zeta_{13},\zeta_{23})$.
The second divergent subdiagram, which is 234, is obtained from 123 by replacing 1 by 4.

The whole diagram 1234 has $\omega =2$; we use the notations ${\bf x}:=(x_{12},x_{13},x_{14})$
and $\bzeta:=(\zeta_{12},\zeta_{13},\zeta_{23},\zeta_{24},\zeta_{34})$. To write down $t^\bzeta({\bf x})
\in\Dcal^\prime(\RR^{18}\setminus\{0\})$ we need to introduce a partition of unity: for ${\bf x}
\in\RR^{18}\setminus\{0\}$ let
\be\label{partofunity1}
1=f_1({\bf x})+ f_2({\bf x}) \quad\text{with}\quad f_1,f_2\in {\cal C}^\infty (\RR^{18}\setminus\{0\})
\ee
and
\be\label{partofunity2}
\supp\,f_1\subset \RR^{18}\setminus\{{\bf x}\,\vert\,x_{24}=0=x_{34}\}\ ,\quad
\supp\,f_2\subset \RR^{18}\setminus\{{\bf x}\,\vert\,x_{12}=0=x_{13}\}\ .
\ee
With that we can write
\be\label{dt1}
t^\bzeta({\bf x})=\frac{\prod d(\zeta_{kl})}{2}\,\Bigl(f_1({\bf x})\,
\frac{t^\bzeta_{123|4}({\bf x})}{\zeta_{12}+\zeta_{13}+\zeta_{23}}+
f_2({\bf x})\,\frac{(1\leftrightarrow 4)}{\zeta_{24}+\zeta_{34}+\zeta_{23}}\Bigr)
\in\Dcal^\prime(\RR^{18}\setminus\{0\})\ ,
\ee
where
\be
t^\bzeta_{123|4}({\bf x}):=\tilde t^{(\zeta_{12},\zeta_{13},\zeta_{23})}(x_{12},x_{13})\,
X_{24}^{\>\>-2+\zeta_{24}}\,X_{34}^{\>\>-2+\zeta_{34}}
\ee
is analytic in $\bzeta$. Here and in the following we mean by $\prod$ and $\sum$ the
product or sum over $(k,l)=(1,2),\,(1,3),\,(2,3),\,(2,4),\,(3,4)$. 

We point out that
$t^\bzeta$ \eqref{dt1} is {\it independent of the choice of $f_1,\,f_2$}, because on
\be
\Bigl(\supp\,f_1\cap\supp\,f_2\Bigr)\subset\Bigl(\{{\bf x}\,|\,x_{23}\not= 0\}\cup
\{{\bf x}\,|\,x_{23}= 0\,\,\wedge\,\,x_{12}\not= 0\not= x_{34}\}\Bigr)
\ee
the distribution $\dot t_3^{(\zeta_{12},\zeta_{13},\zeta_{23})}(x_{12},x_{13})$ \eqref{subtriangle}
is equal to its non-extended version and, hence,
\be\label{independent}
\frac{t^\bzeta_{123|4}({\bf x})}{2(\zeta_{12}+\zeta_{13}+\zeta_{23})}=
\prod X_{kl}^{\>\>-2+\zeta_{kl}}\ ,
\ee
is invariant under $(1\leftrightarrow 4)$. 

The differential renormalization formula \eqref{diffrenhom1}
(with $\omega =2$) yields the extension
\begin{align}\label{dt2}
\dot t^\bzeta({\bf x})=&\frac{\prod d(\zeta_{kl})}
{(\sum\zeta_{kl})\,4\,(1-2\sum\zeta_{kl})(2-2\sum\zeta_{kl})}\,\,\partial_{{\bf x}\mu}
\partial_{{\bf x}\nu}\partial_{{\bf x}\lambda}\notag\\
&\quad\quad\Bigl(\overline{{\bf x}^\mu {\bf x}^\nu {\bf x}^\lambda
\Bigl(f_1({\bf x})\,\frac{t^\bzeta_{123|4}({\bf x})}{\zeta_{12}+\zeta_{13}+\zeta_{23}}+
f_2({\bf x})\,\frac{(1\leftrightarrow 4)}{\zeta_{24}+\zeta_{34}+\zeta_{23}}\Bigr)}\Bigr)
\in\Dcal^\prime(\RR^{18})\ ,
\end{align}
where similarly to \eqref{triangle6} a shorthand notation is used (the detailed version is
analogous to \eqref{trianglesubsub}). 

We use the EG Forest Formula \eqref{EGforest} (Thm.~\ref{forest}) to compute $t^\mathrm{MS}$: 
\be
t^\mathrm{MS}({\bf x})=\lim_{\bzeta\to {\bf 0}}\,(1+R_{1234})
(1+R_{123}+R_{234})\,\dot t^\bzeta({\bf x})\ .
\ee
We point out that $R_{123}$ gives a non-vanishing contribution {\it only} on the $f_1$-term; 
because, setting $\zeta:=\zeta_{12}=\zeta_{13}=\zeta_{23}$, the $f_2$-term is analytic in a 
neighbourhood of $\zeta =0$. 
Taking this into account the counter term of the 123-subdiagram reads
\begin{align}
R_{123}\,\dot t^\bzeta({\bf x})&=R_\zeta\,\frac{d(\zeta)^3\,d(\zeta_{24})\,d(\zeta_{34})}
{\zeta\cdot 12\cdot (3\zeta+\zeta_{24}+\zeta_{34})(1-2(3\zeta+\zeta_{24}+\zeta_{34}))
(2-2(3\zeta+\zeta_{24}+\zeta_{34}))}\notag\\
&\quad\quad\quad\quad\cdot \partial_{{\bf x}\mu}\partial_{{\bf x}\nu}\partial_{{\bf x}\lambda}
\Bigl(\overline{{\bf x}^\mu {\bf x}^\nu {\bf x}^\lambda\,f_1({\bf x})\,
t^{(\zeta,\zeta,\zeta,\zeta_{24},\zeta_{34})}_{123|4}({\bf x})}\Bigr)\notag\\
&=-\frac{d(0)^3\,d(\zeta_{24})\,d(\zeta_{34})\,\pi^6}
{\zeta\cdot 12\cdot (\zeta_{24}+\zeta_{34})(1-2(\zeta_{24}+\zeta_{34}))
(2-2(\zeta_{24}+\zeta_{34}))}\notag\\
&\quad\quad\quad\quad \cdot\partial_{{\bf x}\mu}\partial_{{\bf x}\nu}\partial_{{\bf x}\lambda}
\Bigl(\overline{{\bf x}^\mu {\bf x}^\nu {\bf x}^\lambda\,\delta(x_{12},x_{13})\,
X_{24}^{\>\>-2+\zeta_{24}}\,X_{34}^{\>\>-2+\zeta_{34}}}\Bigr)\ ,
\end{align}
where we use the result for the scaling anomaly of the 123-subdiagram,
$\tilde t^{(0,0,0)}(x_{12},x_{13})=\pi^6\,\delta(x_{12},x_{13})$ (derived e.g.~in \cite[sect.~7.1]{BDF09}),
and that we may replace $f_1({\bf x})$ by 1 (due to $0=f_2({\bf x})\,\delta(x_{12},x_{13})
=(1-f_1({\bf x}))\,\delta(x_{12},x_{13})$). With that we explicitly see that the {\it 123-counter term
is independent of the choice of $f_1,f_2$}. 

Finally $t^\mathrm{MS}({\bf x})$ is the coefficient 
$\sim\zeta^0$ of the Laurent series $(1+R_{123}+R_{234})\,
\dot t^\bzeta({\bf x})\vert_{\bzeta=(\zeta,\zeta,\zeta,\zeta,\zeta)}$,
i.e.~it is the coefficient $\sim\zeta^2$ of the power series
\begin{align}\label{t-MS}
&\frac{d(\zeta)^5}
{60\,(1-10\zeta)(2-10\zeta)}\,\partial_{{\bf x}\mu}
\partial_{{\bf x}\nu}\partial_{{\bf x}\lambda}\Bigl(\overline{{\bf x}^\mu {\bf x}^\nu {\bf x}^\lambda
\Bigl(f_1({\bf x})\,t^{(\zeta,\zeta,\zeta,\zeta,\zeta)}_{123|4}({\bf x})+
f_2({\bf x})\,(1\leftrightarrow 4)\Bigr)}\Bigr)\notag\\
&-\frac{d(0)^3\,d(\zeta)^2\,\pi^6}
{24\,(1-4\zeta)(2-4\zeta)}\partial_{{\bf x}\mu}\partial_{{\bf x}\nu}\partial_{{\bf x}\lambda}
\Bigl(\overline{{\bf x}^\mu {\bf x}^\nu {\bf x}^\lambda\,\bigl(\delta(x_{12},x_{13})\,
X_{24}^{\>\>-2+\zeta}\,X_{34}^{\>\>-2+\zeta}+(1\leftrightarrow 4)\bigr)}\Bigr)\ .
\end{align}
Due to \eqref{independent}, this result is independent of the choice of $f_1,f_2$.
However, to compute $t^\mathrm{MS}$, an explicit choice of $f_1$ and $f_2$ is needed.
This can be done as follows: Let $\chi$ be a smooth approximation of the Heaviside-function
$\Theta (x)$ with $\supp\, \chi'\subset [-\epsilon,\epsilon]$ for a sufficiently small
$\epsilon >0$. For $x\in\RR^6$ we mean by $|x|$ the Euclidean norm. We set
\begin{align}
g_2({\bf x}):=
\begin{cases}   
\chi\Bigl(\frac{|x_{12}|^2+|x_{13}|^2}{|x_{14}|^2}-a\Bigr)\quad & \text{if}\quad x_{14}\not= 0\\
1 \quad & \text{if}\quad x_{14}= 0\>\>\wedge\>\>{\bf x}\not={\bf 0} 
\end{cases}\ ,
\end{align}
with a sufficiently small $a >0$. Note that $g_2\in {\cal C}^\infty (\RR^{18}\setminus\{0\})$. 
Let
\be
g_1({\bf x}):=g_2(x_{24},x_{34},x_{14})\ .
\ee
The sets
\be
K_j:=\{{\bf x}\not={\bf 0}\,|\,g_j({\bf x})=0\}\ ,\quad j=1,2\ ,
\ee
are narrow cones around the diagonal $x_{12}=x_{13}=x_{14}$ (for $g_1$) and the $x_{14}$-axis (for $g_2$).
Since $K_1\cap K_2 =\emptyset$, we have $g_1({\bf x})+g_2({\bf x})>0$ for all ${\bf x}\not={\bf 0}$ and,
hence, we may set
\be
f_j({\bf x}):=\frac{g_j({\bf x})}{g_1({\bf x})+g_2({\bf x})}\ ,\quad j=1,2\ .
\ee
Obviously the pair $(f_1,f_2)$ satisfies the required properties \eqref{partofunity1} and
\eqref{partofunity2}, and it also fulfills the $(1\leftrightarrow 4)$-symmetry
and scales homogeneously with degree $0$:
\be
f_1({\bf x})=f_2(-x_{24},-x_{34},-x_{14})\ ,\quad f_j(\lambda{\bf x})=f_j({\bf x})\quad\forall\lambda\not= 0\ .
\ee
The computation of $t^\mathrm{MS}$ needs explicit formulas for $f_1,f_2$ only in the first line of
\eqref{t-MS}; to compute the latter smoothness of $f_1,f_2$ is not necessary -- $\chi$ can be replaced by $\Theta$.
\end{example}
 
The computational difficulty that, in case of overlapping divergences, our method needs explicit
formulas for a partition of unity, can be avoided by using the distribution-splitting method of
Epstein-Glaser \cite{EG73} or Steinmann's direct construction of retarded products \cite{Ste71,DF04}
(instead of Stora's extension of distributions \cite{Stora1993}).
In the splitting method, a map $\Dcal_n^\bzeta:\Floc^{\otimes n}\rightarrow\Fcal$ corresponds to
$\TT^{\bzeta}_n$ restricted to the complement of the thin diagonal 
$\Delta_n:=\{(x_1,\ldots ,x_n)\,|\,x_1=\cdots =x_n\}$. 
This $\Dcal_n^\bzeta$ is a sum of products of
time-ordered products $\TT^{\bzeta}_k$ of lower orders $k<n$, 
its construction does not need any partition of unity. 
$\Dcal_n^\bzeta$ has causal support, that is the pertinent numerical distributions 
$d(x_1-x_n,...):=d^{\bzeta\,\beta}_{\alpha}(x_1,...,x_n)$ 
(defined analogously to $t^{\bzeta\,\beta}_{\alpha}$ in \eqref{causWick})
have support in $(\bar V_+)^{\times (n-1)}\,\cup\,(\bar V_-)^{\times (n-1)}$. The distribution splitting,
\begin{align}\label{splitting0}
d=a-r\ ,\quad & \text{with}\quad \supp\, a\subset (\bar V_+)^{\times (n-1)}\quad\wedge\quad
\supp\, r\subset (\bar V_-)^{\times (n-1)}\notag\\
& \text{and}\quad \mathrm{sd}(a)\leq\mathrm{sd}(d)\quad\wedge\quad
\mathrm{sd}(r)\leq \mathrm{sd}(d)\ ,
\end{align}
corresponds to the extension of $t^{\bzeta\,\beta}_{\alpha}(x_1-x_n,...)$ from
$\Dcal^\prime(\RR^l\setminus\{0\})$ to $\Dcal^\prime(\RR^l)$ (where $l=d(n-1)$), i.e.~it
can be understood as renormalization. Our results about minimal subtraction (Sect.~\ref{MSnumerical})
and the differential renormalization formula \eqref{diffrenhom1} hold, suitably reformulated, also for the 
splitting problem \eqref{splitting0}. This is worked out in Appendix \ref{app:EG-splitting}.
A main disadvantage of the 
splitting method is that usually the computation of $\Dcal_n^\bzeta$ requires quite a lot of work
(see the examples in \cite{Sch89}), so in absence of overlapping divergences the extension method 
is mostly much more efficient.
\section{Hopf Algebra and Renormalization}
In pioneering work \cite{CK00,CK01}, Connes and Kreimer uncovered interesting algebraic structures underlying the combinatorics of perturbative renormalization. In particular, one obtains the diffeomorphism group on the space of coupling constants, tangent to the identity, which is nothing else than the 
renormalization group in the sense of St{\"u}ckelberg and Petermann and was independently derived in form of the Main Theorem of Renormalization in \cite{SP82,Pin01,DF04,BDF09}. Other structures 
refer to computational methods and can best be formulated in terms of Hopf algebras of graphs.

In this section we will describe the main combinatorial structure arising in our framework and relate it to a certain Hopf algebra. We also argue how this structure can be related to the one used by Connes and Kreimer. Let $\Rcal$ denote the St\"uckelberg Petermann renormalization group. The Main Theorem of renormalization describes the set of scattering matrices $\mathscr{S}$ as a right  group module (right action),
\be
\begin{array}{rccl}
\rho: & \mathscr{S}\otimes \Rcal & \rightarrow &\mathscr{S} \\   
      & \Scal \otimes \Zcal    & \mapsto     & \Scal':=\Scal\circ \Zcal
\end{array}
\ee
Consider the formal symbols $\delta^n$, $n\in\NN$. Using the prescription
\be
\delta^n(\Scal):=\Scal^{(n)}(0):\Floc^{\otimes n}\rightarrow\Fcal \quad\text{and}\quad
\delta^n (\Zcal):=\Zcal^{(n)}(0):\Floc^{\otimes n}\rightarrow\Floc
\ee
we associate $\delta^n$'s with maps from $\mathscr{S}$ and $\Rcal$ to the $\CC$-linear space of linear maps $\Lin(\Floc^{\otimes n},\Fcal)$ and $\Lin(\Floc^{\otimes n},\Floc)$, respectively. Using this identification we can define on symbols $\delta^n$ 
two distinguished products. We start with  the tensor product $\delta^n\otimes \delta^m$ of linear mappings, which is defined to be a map form $\Rcal$ to $\Lin(\Floc^{\otimes n+m},\Floc^{\otimes2})$. Apart from $\otimes$ it is also natural to consider another product, the non-commutative composition product $\oc$ defined for symbols $\delta^{n}$, $\delta^{k_1}\otimes\dots\otimes\delta^{k_m}$ by
\[
(\delta^{n}\oc\delta^{k_1}\otimes\dots\otimes\delta^{k_m})(\Scal,\Zcal)\doteq \left\{\begin{array}{lcl}
\Scal^{(n)}\circ (\Zcal^{(k_1)}\otimes\dots\otimes \Zcal^{(k_m)})
&,&\textrm{if}\quad m=n\,,\\
0&,&\textrm{else}\,,\end{array}\right.
\]
where $\Scal\in\mathscr{S}$, $R\in\Rcal$. Note that $\Scal^{(1)}(0)=\Zcal^{(1)}(0)=\id$, i.e. $\delta^1=1$ is the unit with respect to $\oc$. We can now write the termwise version of the main theorem (i.e.~the Fa\`a di Bruno formula 
for $(\Scal\circ \Zcal)^{(n)}$, cf.~\eqref{FaadiBruno})  and the group law of $\Rcal$ as,
\begin{align}
\delta^n(\Scal\circ \Zcal)=\sum_P(\delta^{|P|}\oc\bigotimes_{I\in P}\delta^{|I|})(\Scal,\Zcal)\label{main-Hopf}\,,\\
\delta^n(\Zcal_1\circ \Zcal_2)=\sum_P(\delta^{|P|}\oc\bigotimes_{I\in P}\delta^{|I|})(\Zcal_1,\Zcal_2)\label{main-Hopf2}\,.
\end{align}
Now we want to reinterpret these formulas in the Hopf algebraic language. To construct the Hopf algebra dual to $\Rcal$ we consider first the algebra $\Oalg$ of functions on $\Rcal$ with values in $\RR$. We want to encode the group law in the coproduct structure, i.e. we want to define $\tilde{\De}:\Oalg\rightarrow\Oalg\otimes\Oalg$ such that
\be\label{coprod}
\tilde{\De}f(\Zcal_1,\Zcal_2)=f(\Zcal_1\circ \Zcal_2)\,.
\ee
This is in general not possible, since $\Rcal$ is not finite. One can fix the problem by replacing the algebraic tensor product with some completed tensor product (see for example \cite{Far00} for the case of Hopf algebras of smooth functions) or restrict oneself to the algebra of representative functions\footnote{We recall that a function is representative if its orbit under the left translation is a finite dimensional subspace of $\Oalg$. In particular, matrix elements of finite dimensional representations are such functions. For more details see the review paper \cite{FGB05} and lecture notes \cite{Frabetti2007}.}. Fortunately the situation simplifies significantly if we take into account the fact that, as shown in \cite{DF04,BDF09}, $\Rcal$ acts on the space of actions\footnote{By actions we mean equivalence classes of generalized Lagrangians, in the sense of \eqref{equ} and the discussion above it.}. Moreover, in a renormalizable theory the orbit of the interaction can be described by a finite set of parameters (coupling constants), so there exists a group morphism map from $\Rcal$ to a subspace $\tilde{\Rcal}$ of $\mathrm{Diff}(\RR^N)$, the group of formal diffeomorphisms with $N\in\NN$. From the physical point of view, all the relevant information about the theory is contained in  $\tilde{\Rcal}$, so we can now focus our attention on this group. We consider now the algebra $\Halg$ spanned by symbols $\delta^{\al,i}$, where $\delta^{\al,i}(\Zcal)\doteq \partial_\al \Zcal(0)^i$, $\Zcal\in\tilde{\Rcal}$, $\Zcal^i$ is the i-th component of $\Zcal$ and $\al\in \NN_0^N$ is a multiindex. The group law in $\mathrm{Diff}(V)$ is the composition of  diffeomorphisms of $V$ and it is easy to check that the functions $\delta^{\al,i}$ are representative. We assumed that $\Zcal(0)=0$, $\Zcal^{(1)}(0)=\id$, for $\Zcal\in\Rcal$, so $\delta^{0,i}$ is trivial and $\delta^{j,i}:\Zcal\mapsto \partial_j\Zcal^i(0)$ is identically $1$ for $j=i$ and $0$ otherwise. Therefore we identify all $\delta^{i,i}, i=1,\dots,N$ with the unit $1$ element and define the counit by setting: $\epsilon(\delta^{\al,i}):=0$, for $\al\neq i$ and  $\epsilon(1)=1$. The coproduct of $\Halg$ is defined by \eqref{coprod} and the explicit formula is just the Fa\`a di Bruno formula for maps $\RR^N\rightarrow\RR^N$.
Next we introduce on $\Halg$ the grading $\deg(\delta^{\al,i})=|\al|-1$. With this definition, $\Halg$ is an $\NN_0$-graded connected bialgebra and from the result of \cite{Kastler2000} follows that $\Halg$ has an antipode 
and, hence, is a Hopf algebra. This way we have constructed the Hopf algebra induced by action of the renormalization group $\Rcal$ on the space of coupling constants of a given renormalizable theory.
{\small\begin{example}[Fa\`a di Bruno Hopf algebra]
Let us consider the case where the space of coupling constants is one dimensional. This happens for example in case of $\ph^3$ in 6 dimensions, after performing the wave function and mass renormalization (see example 7.1 in \cite{BDF09} for details). $\Rcal$ is mapped to 
$\tilde{\Rcal}\subset \mathrm{Diff}(\RR)$ and we consider the algebra $\Falg$ generated by functions $\delta^n$, where $\delta^n(\Zcal)=\Zcal^{(n)}(0)$, $\Zcal\in \tilde{\Rcal}$. The product is just the pointwise product of functions. The action of the St\"uckelberg-Petermann renormalization group on itself (formula \eqref{main-Hopf2})  induces a coproduct $\De:\Falg\rightarrow\Falg\otimes\Falg$ by
\be
\Delta\delta^n\doteq   \sum_P\delta^{|P|}\otimes\prod_{I\in P}\delta^{|I|}\,,
\label{coproduct}
\ee
and one can write \eqref{main-Hopf2} in the form
\[
\delta^n(\Zcal_1\circ \Zcal_2)=m\circ\Delta\delta^n (\Zcal_1,\Zcal_2)\,.
\]
$\Hcal$ is $\NN$-graded 
by the order of the derivatives,
\be
\deg(\delta^n):=n-1\,,\quad\Falg=\bigoplus_{n=0}^\infty\Falg^n\ .
\ee 
The unit of $\Falg$ is $1=\delta^1$ and from $\Zcal^{(1)}(0)=1$, $\Zcal\in\tilde\Rcal$ follows that $\Falg$ is connected. $\Falg$ also has a counit  $\epsilon:\Falg\to \CC$ by $\epsilon(\delta^{n_1}
\cdot...\cdot\delta^{n_l}):=\delta_{1n_1}\cdot...\cdot\delta_{1n_l}$ 
($\delta_{ij}$ means the Kronecker delta) and an antipode  $\Ap:\Falg\to\Falg $, which is obtained by recursion from its definition as 
\be
\Ap(1):=1\quad\text{and}\quad 0=({\rm id}*\Ap)(\delta^n):=m\circ({\rm id}\otimes\Ap)\circ
\Delta(\delta^n)\quad\text{for}\quad n>1
\ee
(where $\Ap(\bigodot_{I\in P}\delta^{|I|}):=\bigodot_{I\in P}\Ap(\delta^{|I|})$), which gives
the recursive relation
\be\label{Ap}
\Ap(\delta^n):=-\sum_{|P|>1}\delta^{|P|}\cdot\prod_{I\in P}\Ap(\delta^{|I|})
\quad\text{for}\quad n>1\ .
\ee
The resulting Hopf algebra $(\Falg,\cdot,\Delta,1,\epsilon,\Ap)$ is a well known structure called the Fa\`a di Bruno Hopf algebra \cite{JoniRota1982} (see also \cite{FGB05} in the context of renormalization).
\end{example}}
To relate our Hopf-algebraic to the Connes-Kreimer approach one has to use the expansion of $S^{(n)}$ into graphs, given by relations  \eqref{time:ord} and \eqref{GraphDO}. Let us start on the abstract level of multilinear maps between spaces of functionals. According to \eqref{GraphDO}, with a graph $\Gamma$, we associate a functional differential operator $T_\Gamma$ from $\Fcal_\loc^{\otimes V(\Gamma)}$ to $(\Fcal^{\otimes V(\Gamma)})_\loc$. The notation $\mathcal{F}_{\mathrm{loc}}^{\otimes V(\Gamma)}$ means that the factors of the tensor product are numbered by the indices of $\Gamma$, i.e. for each vertex $i$ we have a variable $\ph_i$. At the end we set all the $\ph_i$ to be equal (by applying $m_n$), but for now it is important to keep track of the information, which functional derivatives are applied at which vertex. The space $(\mathcal{F}^{\otimes V(\Gamma)})_{\mathrm{loc}}$ contains functional which are local as functions of the multiplet $(\ph_i; i\in V(\Gamma))$, i.e. depending on field configurations only through the jet $(x,\ph_i(x),\partial\ph_i(x),\dots; i\in V(\Gamma))$. The main theorem of renormalization theory can be now formulated on the level of graphs:
\be\label{Hopf:graph}
(S\circ Z)_\Gamma=\sum_{P\in \mathrm{Part}(V(\Gamma))}T_{\Gamma_P}\oc \bigotimes_{I\in P} Z_{\Gamma_I} 
\ee
where $Z_{\Gamma_I}:\mathcal{F}_{\mathrm{loc}}^{\otimes V(\Gamma_I)}\to (\mathcal{F}^{\otimes V(\Gamma_I)})_{\mathrm{loc}}$ and $\Gamma_P$ is the graph with vertex set $V(\Gamma_P)=V(\Gamma)$, with all lines connecting different index sets of the partition $P$, and $\Gamma_I$ is the graph with vertex set $V(\Gamma_I)=I$ and all lines of $\Gamma$ which connect two vertices in $I$.   To find the Hopf algebra structure underlying \eqref{Hopf:graph}, we have to go to the concrete renormalizable theory, where $\Rcal$ is mapped to $\tilde\Rcal$. 
More details can be found in \cite{Pin00b,BrouderFrabettiMenous2009,GBL00}. In non-renormalizable
theories, one has to use a generalization of Hopf algebras, which uses completed tensor products.

\section*{Conclusions and Outlook}
Causal perturbation theory is known to provide rigorous results on structural properties of renormalized 
perturbative quantum field theory in a transparent and elegant way. However, for models containing massless fields, the central 
solution%
\footnote{For a purely massive model the infrared behaviour is harmless and, hence,
one may choose $w_\gamma=\frac{x^\gamma}{\gamma!}$  in the $W$-projection
\eqref{W-projection}. This is the central solution of Epstein and Glaser \cite{EG73}, which maintains 
several symmetries (in particular Lorentz covariance), and is explicitly computable.
For the distribution splitting method \eqref{splitting0} in Minkowski space, it can easily be computed by a 
dispersion integral in momentum space \cite{EG73,Sch89}.}
 does not exist and a generally applicable method for explicit calculations is missing so far.

In this paper we develop such a method, by using dimensional regularisation in position space, proposed
by Bollini and Giambiagi \cite{BG96} some years ago.  More precisely, the regularization parameter is the 
index $\tfrac{d}2-1$ of the Bessel function
appearing in the Feynman propagator ($d$ denotes the spacetime dimension). Since, in the limit $\zeta\to 0$
(which removes the regularization) of the regularized time-ordered product $\TT^\zeta_n$,
there appears not only the overall divergence (localized on the thin diagonal $\Delta_n$),
but also subdivergences localized on partial diagonals, our method needs a 
position space version of Zimmermann's
Forest Formula, which adds suitable local counter terms in correct succession, such that the limit $\zeta\to 0$
exists. We prove such a formula (``Epstein-Glaser Forest Formula'', Thm.~\ref{forest}). It
is based on families of subsets of the set of vertices and not, as in Zimmermann's formula, on families of subgraphs.

Generally, Epstein-Glaser renormalization is non-unique. However, our regularized time-ordered products 
$\TT^\zeta_n$ are unique and, using the minimal subtraction prescription for the limit $\zeta\to 0$,
we get a unique result for the renormalized time-ordered products.

A main reason for the usefulness of conventional dimensional regularization is that the 
regularized time-ordered products are gauge invariant (in particular this holds true for the term
$\sim\zeta^0$ which is the minimal subtracted time-ordered product). To obtain
gauge invariance of our $\TT^\zeta_n$,  a crucial necessary condition is
that for all kinds of fields
the regularized Feynman propagator $\Delta_F^\zeta(x)$ is, for $x\not= 0$, a solution
of the pertinent free field equation. But, as we see from lemma \ref{mod-KG}, in case of a 
real scalar field, this holds only if we deform the Klein-Gordon operator into $d_\zeta=d-2\zeta$
dimensions. Hence, it seems that a $\zeta$-dependent deformation of the free Lagrangian is needed.
In \cite{FRb} gauge theories are incorporated into the Epstein-Glaser framework with the use of the so called Batalin-Vilkovisky formalism. This allows to keep track of gauge symmetry also in the regularized theory by means of the regularized quantum master equation (QME). We hope to apply these ideas also in the case of dimensional regularization.

The combinatorial structure we found can be described in Hopf algebraic terms.  The structure is similar to the structure found in the approach by Connes and Kreimer. There are, however, also differences. In particular, it turned out to be appropriate to distinguish carefully between the tensor product appearing in the decomposition of disconnected graphs and the composition of linear maps arising from finite renormalizations. In the one dimensional case these two products coincide, but in order to exploit  this fact one had to choose a basis and work with matrix elements.

\begin{appendix}
\section{Regularization in the Epstein-Glaser framework}\label{app:regularization}
The Epstein-Glaser method does neither involve any regularization nor divergent counter terms. Nevertheless, one may introduce a regularization and determine the necessary counter terms. This was already discussed in the original paper of Epstein and Glaser where Pauli-Villars regularization was used. With the help of the concept of the St\"uckelberg-Petermann renormalization group, this fact was formulated in \cite{BDF09} in the way that given a solution $S$ of the EG-axioms and a smooth approximation of the Feynman propagator $\Delta_F^\Lambda\to\Delta_F$ such  that the formal S-matrices $S_\Lambda$ can be directly defined, then there exists a sequence of renormalization group elements $\Zcal_\Lambda$ such that $S_\Lambda\circ \Zcal_\Lambda\to S$. The proof proceeds in the same way as the proof of the main theorem of renormalization \cite{DF04} and was not included in \cite{BDF09}. We therefore present it here in a slightly stronger form.
\begin{thm}
 Let $\Scal$ be a solution of the EG-axioms and let the Feynman propagator be approximated by a sequence of symmetric distributions 
$\Delta_F^\Lambda$ which converges in the H\"ormander topology with a scaling degree bounded by that of the Feynman propagator. Let 
$\Scal_\Lambda$ be a formal S-matrix associated to $\Delta_F^\Lambda$. Then there exists a sequence $\Zcal_\Lambda\in\mathscr{R}$ such that 
\[\Scal_\Lambda\circ \Zcal_\Lambda\to \Scal \ .\]  
\end{thm}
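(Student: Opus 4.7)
I follow the strategy of the proof of the Main Theorem of Renormalization in \cite{DF04}, now applied to compare the formal power series $\Scal$ and $\Scal_\Lambda$ rather than two exact solutions of the Epstein--Glaser axioms. The construction of $\Zcal_\Lambda = \mathrm{id} + \sum_{n\geq 2}\frac{1}{n!}\Zcal_\Lambda^{(n)}$ is by induction on the perturbative order $n$. The base case $n=1$ is trivial, since $\Scal^{(1)} = \Scal_\Lambda^{(1)} = \mathrm{id}$; set $\Zcal_\Lambda^{(1)} := \mathrm{id}$. Assuming $\Zcal_\Lambda^{(k)} \in \mathscr{V}$ have been constructed for $k<n$ such that $(\Scal_\Lambda \circ \Zcal_\Lambda)^{(k)} \to \Scal^{(k)}$, the Fa\`a di Bruno identity \eqref{main-Hopf} reads
\[
(\Scal_\Lambda \circ \Zcal_\Lambda)^{(n)} = \Zcal_\Lambda^{(n)} + \sum_{\substack{P \in \mathrm{Part}(\{1,\ldots,n\}) \\ 1 < |P|}} \Scal_\Lambda^{(|P|)} \circ \bigotimes_{I \in P} \Zcal_\Lambda^{(|I|)},
\]
so the task is to exhibit $\Zcal_\Lambda^{(n)} \in \mathscr{V}$ such that the right-hand side tends to $\Scal^{(n)}$ as $\Lambda \to \Lambda_0$.

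The off-diagonal behaviour is settled first. Causal factorization of both $\Scal$ (an EG axiom) and $\Scal_\Lambda$ (for the regularized propagator) reduces the $n$-fold products in the above sum to combinations of lower-order products whenever the input functionals have some pair of supports that can be causally separated. By the inductive hypothesis, these lower-order composed expressions converge to the corresponding expressions built from $\Scal$. Together with the convergence $\Delta_F^\Lambda \to \Delta_F$ in the H\"ormander topology and the uniform scaling-degree bound, this forces the Fa\`a di Bruno sum to converge to $\Scal^{(n)}$ on all test functions supported in $\MM^n \setminus \Delta_n$. Hence the discrepancy to be absorbed by $\Zcal_\Lambda^{(n)}$ is supported on the thin diagonal.

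Step two is to choose $\Zcal_\Lambda^{(n)} \in \mathscr{V}$ which cancels this remaining on-diagonal defect. By the $\ph$-locality axiom and the field expansion \eqref{causWick}, the defect corresponds to a local functional-valued multilinear map $\Floc^{\otimes n} \to \Floc$, automatically satisfying the required vanishings at zero and first order. That it obeys the locality condition \eqref{R-locality} characterizing $\mathscr{V}$ follows by applying the Fa\`a di Bruno inversion to inputs $F_1,\ldots,F_n$ with some pair of supports disjoint and exploiting causal factorization of $\Scal$ and $\Scal_\Lambda$ in conjunction with the inductive hypothesis; this is precisely the argument proving locality of the $\Zcal$ in the Main Theorem of \cite{DF04}. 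With $\Zcal_\Lambda^{(n)}$ so chosen, the induction closes.

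The principal obstacle lies in the off-diagonal step: one must verify that H\"ormander-topology convergence of $\Delta_F^\Lambda$ together with the scaling-degree bound is strong enough to imply convergence of every pointwise product of regularized propagators arising in the graph expansion of each $\Scal_\Lambda^{(|P|)}$, when tested against functions in $\Dcal(\MM^n \setminus \Delta_n)$. The scaling-degree bound is indispensable here, since it controls the behaviour of these products near the lower-dimensional partial diagonals and keeps their direct extensions (Thm.~\ref{thm:Extension-sd}) uniformly bounded as $\Lambda \to \Lambda_0$; without it, the off-diagonal limit could fail even for smooth regularizations. Once this is established, the construction of $\Zcal_\Lambda^{(n)}$ in step two is essentially bookkeeping along the lines of \cite{DF04}.
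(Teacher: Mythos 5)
Your overall skeleton — induction on the order, Fa\`a di Bruno, off-diagonal convergence from causal factorization, a local counter term for the diagonal — matches the paper's strategy (which, like yours, follows the proof of the Main Theorem in \cite{DF04}). But there is a genuine gap exactly at the step you dismiss as ``bookkeeping'': the construction of $\Zcal_\Lambda^{(n)}$. You describe the quantity to be absorbed as the ``defect'' $\Scal^{(n)}-\sum_{|P|>1}\Scal_\Lambda^{(|P|)}\circ\bigotimes_{I\in P}\Zcal_\Lambda^{(|I|)}$ and assert it is supported on the thin diagonal. That is only true \emph{in the limit} $\Lambda\to\Lambda_0$: for finite $\Lambda$ this difference is a perfectly smeared-out, non-local object (it merely tends to zero off $\Delta_n$), so it does not define an element of $\mathscr{V}$, and taking it as $\Zcal_\Lambda^{(n)}$ would put $\Zcal_\Lambda$ outside $\mathscr{R}$ — which is the whole content of the statement. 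Moreover, with that choice the ``convergence'' of $\Scal_\Lambda\circ\Zcal_\Lambda$ would be tautological rather than proven.

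The paper closes this gap by working graph by graph and using the $W$-projections of \eqref{W-projection}: since the given $T_\Gamma$ can be written as $\langle \bar t_\Gamma, W_\Gamma\,\cdot\,\rangle$ for some translation-invariant projection $W_\Gamma:\Dcal(\Delta_\Gamma,Y_\Gamma)\to\Dcal_{\omega_\Gamma}(\Delta_\Gamma,Y_\Gamma)$, one sets $Z_{\Gamma,\Lambda}=\langle t_{\Gamma,\Lambda},(\mathrm{id}-W_\Gamma)\delta_\Gamma\rangle$ (with $Z_{\Gamma,\Lambda}=0$ for EG-reducible graphs and $=\mathrm{id}$ for single vertices), where $t_{\Gamma,\Lambda}$ already contains the lower-order counter terms. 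Because $(\mathrm{id}-W_\Gamma)f=\sum_\gamma w_\gamma\,\partial^\gamma f(0)$, this counter term is a finite combination of derivatives of $\delta$ for \emph{every} finite $\Lambda$, hence exactly local, and it satisfies the locality condition for $\mathscr{R}$ because $W_\Gamma$ acts as the identity on test functions vanishing on the thin diagonal. Convergence then reduces to the statement $\langle t_{\Gamma,\Lambda},W_\Gamma f\rangle\to\langle \bar t_\Gamma,W_\Gamma f\rangle$, i.e.\ to continuity of the direct extension on $\Dcal_{\omega_\Gamma}$ under H\"ormander convergence with the uniform scaling-degree bound — which is where the hypotheses you correctly flag as essential actually enter. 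Your proof would be complete if you replaced the ill-defined ``take the defect'' step by this explicit $(\mathrm{id}-W_\Gamma)$ construction.
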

\begin{proof}
It is convenient to expand the formal S-matrix as a sum over graphs as explained in Section \ref{EG}.
We are going to show that
for each graph $\Gamma$ there exists a sequence of linear maps $Z_{\Gamma,\Lambda}:\mathcal{F}_{\mathrm{loc}}^{\otimes V(\Gamma)}\to (\mathcal{F}^{\otimes V(\Gamma)})_{\mathrm{loc}}$ such that
\[{T}_{\Gamma}=\lim_\Lambda \sum_{P\in\mathrm{Part}(V(\Gamma))}T_{\Gamma_P,\Lambda}\circ\bigotimes_{I\in P}Z_{\Gamma_I,\Lambda}\]
Here $\Gamma_P$ is the graph with vertex set $V(\Gamma_P)=V(\Gamma)$, with all lines connecting different index sets of the partition $P$, and $\Gamma_I$ is the graph with vertex set $V(\Gamma_I)=I$ and all lines of $\Gamma$ which connect two vertices in $I$.  
$Z_{\Gamma,\Lambda}$ is recursively defined by $Z_{\Gamma,\Lambda}=\mathrm{id}$ for graphs with one vertex (and no lines, since only graphs without tadpoles are admitted), $Z_{\Gamma,\Lambda}=0$ for EG-reducible graphs and by
\[Z_{\Gamma,\Lambda}=\langle t_{\Gamma,\Lambda},(\mathrm{id}-W_{\Gamma})\delta_{\Gamma}\rangle\]
for EG-irreducible graphs. Here $t_\Gamma$ contains already the contributions from subgraphs, i.e.
\[\langle t_{\Gamma,\Lambda},\delta_\Gamma\rangle=\sum_{|P|>1}T_{\Gamma_P,\Lambda}\circ\bigotimes_{I\in P}Z_{\Gamma_I,\Lambda}\ .\]
Due to the fact that $W_\Gamma$ coincides with the identity on elements of $\mathcal{D}(\Delta_\Gamma,Y_\Gamma)$ which vanish at the thin diagonal, $Z_{\Gamma,\Lambda}$
satisfies the locality condition 
\[Z_{\Lambda,\Gamma}((F+G)^{\otimes |V(\Gamma)|})=Z_{\Lambda,\Gamma}(F^{\otimes |V(\Gamma)|})+Z_{\Lambda,\Gamma}(G^{\otimes |V(\Gamma)|}) \]
for local functionals $F,G$ with disjoint support. 
$\Zcal_{\Lambda}$ is then defined by
\[\Zcal_{\Lambda}(F)=\sum_\Gamma \frac{1}{\mathrm{Sym}(\Gamma)}m_{|V(\Gamma)|}\circ Z_{\Gamma,\Lambda}(F^{\otimes |V(\Gamma)|})\]
where the sum extends over all graphs without tadpoles and with vertex sets $V(\Gamma)=\{1,\dots,n\}$ for some $n\in\mathbb{N}$.
$\Zcal_\Lambda$ then satisfies the locality condition and is thus an element of $\mathscr{R}$.

 \end{proof} 
\section{Minimal subtraction for the distribution-splitting method}\label{app:EG-splitting}
We assume that the reader is familiar with the distribution-splitting method
(see \cite{EG73,Sch89}). We recall: 
\begin{thm}\label{thm:splitting}
A distribution $d\in \Dcal^\prime(\RR^l)\ ,\,\,\,l:=d(n-1)$, with causal support,
$\supp\, d\subset (\bar V_+)^{\times (n-1)}\,\cup\,(\bar V_-)^{\times (n-1)}$,
has a unique solution $\bar{a}\in\Dcal_\lambda'(\RR^l)$, $\lambda:=\sd(d)-l$
of the splitting problem \eqref{splitting0}, that is the pointwise product
\be\label{splitting1}
 \bar a(x):=\Theta(v\cdot x)\,d(x)
\ee
exists in $\Dcal_\lambda'(\RR^l)$. Here, $\Theta$ has to be understood as the weak limit
$\Theta:=\lim_{\epsilon\downarrow 0}\,\chi_\epsilon$, where $(\chi_\epsilon)_{\epsilon>0}$ is
a family of smooth approximations of the Heaviside function with
$\supp\,\chi'\subset[0,\epsilon]$, and 
\be 
v\cdot x:=\sum_{j=1}^{n-1}v_j\cdot x_j\ , \,\quad v_j\in V_+\,\quad\forall j\ .
\ee
\end{thm}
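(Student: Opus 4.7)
The plan is to reduce the splitting problem to the extension problem solved by Theorem~\ref{thm:Extension-sd}. First I would observe that the choice $v=(v_1,\dots,v_{n-1})$ with $v_j\in V_+$ ensures that the linear functional $x\mapsto v\cdot x$ separates the causal cones strictly away from the origin: for $x\in(\bar V_+)^{\times(n-1)}\setminus\{0\}$ one has $v\cdot x>0$, and for $x\in(\bar V_-)^{\times(n-1)}\setminus\{0\}$ one has $v\cdot x<0$. Consequently, on $\supp\,d\setminus\{0\}$ the Heaviside factor $\Theta(v\cdot x)$ is locally constant---equal to $1$ on the forward piece and $0$ on the backward piece---so $\Theta(v\cdot x)\,d(x)$ is unambiguously defined as a distribution $a_0\in\Dcal'(\RR^l\setminus\{0\})$ with $\supp\,a_0\subset(\bar V_+)^{\times(n-1)}\setminus\{0\}$.

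Next I would verify the scaling-degree estimate $\sd(a_0)\leq\sd(d)$. This follows directly from the definition \eqref{sd}: under a rescaling $x\mapsto\rho x$, the factor $\Theta(v\cdot\rho x)=\Theta(v\cdot x)$ is scale invariant, so $a_0(\rho x)$ behaves at most as badly as $d(\rho x)$. With this in hand, Theorem~\ref{thm:Extension-sd} applied to $a_0$ with $\lambda:=\sd(d)-l$ yields a \emph{unique} extension $\bar a\in\Dcal_\lambda'(\RR^l)$ satisfying $\sd(\bar a)=\sd(a_0)\leq\sd(d)$, and by continuity of the support one has $\supp\,\bar a\subset(\bar V_+)^{\times(n-1)}$. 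Setting $\bar r:=\bar a-d$ then gives a splitting $d=\bar a-\bar r$ with the required support and scaling-degree properties.

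To identify $\bar a$ with the pointwise product \eqref{splitting1}, I would show that $\bar a$ is the weak limit in $\Dcal_\lambda'(\RR^l)$ of $\chi_\epsilon(v\cdot x)\,d(x)$ as $\epsilon\downarrow 0$. Off any neighborhood of the origin this is automatic, since $\chi_\epsilon\to\Theta$ pointwise and uniformly on compacts where $\Theta$ is continuous, and $d$ is fixed there. The potentially obstructing contribution is concentrated near $x=0$; but for $g\in\Dcal_\lambda(\RR^l)$ the vanishing of $g$ to order $\sd(d)-l$ at the origin, combined with the scaling-degree bound on $d$, gives exactly the estimate needed to control $\langle\chi_\epsilon(v\cdot x)\,d(x),g\rangle-\langle\bar a,g\rangle$---an argument modeled on the proof of the direct extension in Theorem~\ref{thm:Extension-sd}.

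The main obstacle is the treatment of the apex $x=0$, where the hyperplane $\{v\cdot x=0\}$ meets both causal cones and the naive pointwise product is ambiguous. The choice of $v$ strictly inside $V_+^{\times(n-1)}$ is essential: it confines this ambiguity to the single point $x=0$, so that standard extension theory applies. Uniqueness of the splitting (and hence of the pointwise product in $\Dcal_\lambda'$) is then a corollary of the uniqueness clause of Theorem~\ref{thm:Extension-sd}, since any two candidate solutions would agree off the origin and share the same scaling degree, forcing their difference to be an element of $\Dcal_\lambda'$ supported at $\{0\}$---and such an element must vanish.
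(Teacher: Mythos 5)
The paper offers no proof of Theorem~\ref{thm:splitting}: it is explicitly \emph{recalled} from Epstein--Glaser and Scharf, so there is nothing in-paper to compare your argument against line by line. That said, your proof is correct and follows the route that the paper itself implicitly endorses in Appendix~\ref{app:EG-splitting}, namely reducing the splitting problem to the extension problem of Theorem~\ref{thm:Extension-sd}. The key observations are all in place: for $v_j\in V_+$ one has $v\cdot x>0$ on $(\bar V_+)^{\times(n-1)}\setminus\{0\}$ and $v\cdot x<0$ on $(\bar V_-)^{\times(n-1)}\setminus\{0\}$, so $\Theta(v\cdot x)\,d(x)$ is unambiguous as an element of $\Dcal'(\RR^l\setminus\{0\})$; the cut-off is homogeneous of degree zero, whence $\sd\le\sd(d)$; and uniqueness in $\Dcal_\lambda'$ holds because the difference of two splitting solutions is supported at the origin with scaling degree $\le\sd(d)=\lambda+l$, hence is a combination of $\partial^\beta\delta$ with $|\beta|\le\lambda$, all of which annihilate $\Dcal_\lambda$. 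The one step you leave as a gesture is the identification of this extension with the weak limit of $\chi_\epsilon(v\cdot x)\,d(x)$, which is really the analytic heart of the classical splitting theorem: each derivative falling on $\chi_\epsilon$ costs a factor $\epsilon^{-1}$, and one must use that these derivatives are supported in $\supp d\cap\{0\le v\cdot x\le\epsilon\}\subset\{|x|\le C\epsilon\}$ (here the strict positivity of $v\cdot x$ on the forward cone is used quantitatively, not just to fix the sign), where the vanishing of $g\in\Dcal_\lambda$ to order $\lambda$ exactly compensates the loss before the scaling-degree bound on $d$ is applied. Your sketch correctly locates where the hypothesis $\lambda=\sd(d)-l$ enters, so I would count this as a complete outline rather than a gap, but in a self-contained write-up that uniform-in-$\epsilon$ estimate is the part that must actually be carried out.
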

Due to the causal support of $d$, the definition \eqref{splitting1} of
$\bar a$ is independent of the choice of $v_1,...,v_{n-1}\in V_+$.

With this Theorem, a solution $a\equiv a_W\in\Dcal'(\RR^l)$ of the splitting problem \eqref{splitting0}
can be obtained by means of a $W$-projection \eqref{W-projection}:
\be\label{splitting2}
\langle a_W,f\rangle :=\langle \bar a,Wf\rangle\ ,\quad
\quad \forall f\in \Dcal(\RR^l)\ .
\ee

Given a splitting solution $a\in\Dcal'(\RR^l)$, a solution 
$\dot t\in\Dcal'(\RR^l)$ of the corresponding\footnote{``Corresponding'' means that comparing the field 
expansions \eqref{causWick} of $\Dcal^\bzeta_n$ and $\TT^\bzeta_n\vert_{\text{outside} \Delta_n}$,
the numerical distributions $d$ and $t$ are the coefficients of the same field combination
$f^{\alpha_1}_{\beta_1}(\ph)(x_1)...f^{\alpha_n}_{\beta_n}(\ph)(x_n)$.} extension problem
$\Dcal^\prime(\RR^l\setminus\{0\})\ni t\rightarrow \dot t\in \Dcal^\prime(\RR^l)$
is obtained by $\dot t:=a-a'$. Note that $\mathrm{sd}(t)=\mathrm{sd}(d)$.
The distribution $a'\in\Dcal'(\RR^l)$ is inductively given by the time-ordered products
of lower orders, see \cite{EG73,Sch89}. Restricting to $\Dcal_\lambda(\RR^l)$, 
we conclude that the unique
splitting solution $\bar a$ (Theorem~\ref{thm:splitting}) and the unique
extension solution $\bar t$ (Theorem~\ref{thm:Extension-sd}) are related by
$\bar t:=\bar a-a'$.

Given functions $(w_b)_{|b|\leq\lambda}$ (where still
$\lambda:=\sd(d)-l\,$) determing a projection $W$ \eqref{W-projection}
and the pertinent splitting- and extension-solution
$a_W$ \eqref{splitting2} and $\dot t_W:=\bar t\circ W$, respectively, we define
a map $F$ from the set $S$ of solutions of the splitting problem,
\be\label{solutions-splitting}
S=\{a=a_W+\sum_{|b|\leq\lambda}C_b\,\partial^b\delta\,|\,C_b\in\CC\}\ ,
\ee
to the set $E$ of solutions of the extension problem,
\be\label{solutions-extension}
E=\{\dot t=\dot t_W+\sum_{|b|\leq\lambda}C_b\,\partial^b\delta\,|\,C_b\in\CC\}\ ,
\ee
by
\be\label{bijection}
F(a):=a-a'+\sum_{|b|\leq\lambda}\langle a',w_b\rangle\,(-1)^{|b|}\,\partial^b\delta\ .
\ee
Since 
\be\label{S->E}
F(a_W+\sum_{b}C_b\,\partial^b\delta)=\dot t_W+\sum_{b}C_b\,\partial^b\delta\ ,
\ee
the map $F$ is a {\it bijection}. To verify the latter equation, we use that 
$Ww_b=0\quad\forall |b|\leq\lambda$, hence $\langle a_W,w_b\rangle =0$ and
$\langle \dot t_W,w_b\rangle =0$. Since $F(a_w)$ \eqref{bijection} is an extension solution, it
can be written as $F(a_w)=\dot t_W+\sum_{|b|\leq\lambda}K_b\,\partial^b\delta$ and  
with that we obtain
\be
(-1)^{|c|}K_c=\langle F(a_W),w_c\rangle=-\langle a',w_c\rangle
+\sum_{|b|\leq\lambda}\langle a',w_b\rangle\,(-1)^{|b|}\,\langle\partial^b\delta,w_c\rangle =0\ ,
\ee
hence $F(a_W)=\dot t_W$, and this implies \eqref{S->E}.

Since for any $\dot t\in E$ there is a projection $W$ \eqref{W-projection}
with $\dot t=\dot t_W$, we conclude that any $a\in S$ is of the form 
$a=a_W$ for some projection $W$ \eqref{W-projection}.

Using these facts, our results about minimal subtraction (Sect.~\ref{MSnumerical})
and the differential renormalization formula \eqref{diffrenhom1} can be transformed to the splitting
problem as follows:
\begin{df}[Regularization] With the above notations, a family $\{a^\zeta\}_{\zeta
\in\Omega\setminus\{0\}}$, $a^\zeta\in\Dcal'(\RR^l)$, is a(n) (analytic / finite)
regularization of $d$, if $(a^\zeta-a')_{\zeta\in\Omega\setminus\{0\}}$ is
a(n) (analytic / finite) regularization of the corresponding $t\in\Dcal'(\RR^l\setminus\{0\})$.
\end{df} 
More explicitly, in the definition \ref{df:regularisation} the condition
\eqref{eq:regularization} is replaced by
\be
\lim_{\zeta\rightarrow0}\langle a^\zeta,g\rangle=\langle \bar{a},g\rangle\ ,\quad
\quad\forall g\in\Dcal_\lambda(\RR^l)\ .
\ee
Analogously to \eqref{regW-2}, we can write $\langle a_W,f\rangle$ as
\be
\langle a_W,f\rangle=\langle \bar{a},Wf\rangle=\lim_{\zeta\rightarrow0}
\left[\langle a^\zeta,f\rangle - \sum_{|b|\leq\lambda}
\langle a^\zeta,w_b\rangle\;\partial^b f(0)\right]\ ,\quad f\in\Dcal(\RR^l)\ .
\ee
Again, for an analytic regularization the principal parts of the two terms on the r.h.s.~must
cancel. Therefore, $\pp(a^\zeta)$ is a local distribution, 
\be
\pp(a^\zeta) = \sum_{|b|\leq\lambda}C_b(\zeta)\;\partial^b\delta\,,\quad\text{with}
\quad C_b(\zeta)=(-1)^{|b|}\pp(\langle a^\zeta,w_b\rangle)\ ,
\ee
and
\be
\langle a^\MS,f\rangle :=\lim_{\zeta\rightarrow0} \rp(\langle a^\zeta,f\rangle)
\ee
is a distinguished solution of the splitting problem \eqref{splitting0}
(the '$\MS$-solution').

{\it Differential renormalization} works also for the splitting problem \cite[Sect.~2.2]{Duet96}: let 
\be
d(x)=\partial_{r_1}...\partial_{r_{\omega+1}}d_1(x)\ ,\quad\quad\omega\in\NN_0\ ,
\ee
where $d_1$ has also causal support and $\mathrm{sd}(d_1)=\mathrm{sd}(d)-(\omega+1)<l$.
Then, $d_1$ can be splitted directly
(Theorem~\ref{thm:splitting}):
\be
a_1(x):=\Theta(v\cdot x)\,d_1(x)\in\Dcal'(\RR^l)\ .
\ee
With that a splitting solution $a$ of $d$ is obtained by
\be\label{splitting-diffren}
a(x)=\partial_{r_1}...\partial_{r_{\omega+1}}\bigl(\Theta(v\cdot x)\,d_1(x)\bigr)\ .
\ee

Assuming that $d^\bzeta$ scales homogeneously in $x$
with a non-integer degree $\kappa^\bzeta$, it follows that $d^\bzeta$
satisfies \eqref{diffrenhom} (with $P_0\equiv 0\equiv Q_0$). Hence, we can apply \eqref{splitting-diffren}
to split $d^\bzeta$:
\be
a^\bzeta(x)=\frac{1}{\prod_{k=0}^\omega (2{\bf l}\bzeta-k)}\,
\sum_{r_1...r_{\omega+1}}\partial_{r_1}...\partial_{r_{\omega+1}}
\Bigl(\Theta(v\cdot x)\,m_{x_{r_1}}...m_{x_{r_{\omega+1}}}\,d^\bzeta(x)\Bigr)\ .
\ee
Obviously, $a^\bzeta$ scales also homogeneously with degree $\kappa^\bzeta$; it
is the only splitting solution with this property. (The latter follows from the fact 
that $E$ \eqref{solutions-extension} contains precisely one homogeneous element and that
$S=E+a'$, taking into account that $a'$ is homogeneous with the same degree.) 
\\
\\
\end{appendix}

{\bf Acknowledgments.} We profitted a lot from stimulating discussions with Jos{\'e} M. Gracia-Bond{\'i}a. 
During working at this paper M.~D.~was mainly at the Max Planck Institute 
for Mathematics in the Sciences, Leipzig; he thanks Eberhard Zeidler for the invitations to 
Leipzig and for enlightening discussions.

{\small
\bibliographystyle{amsalpha}
\bibliography{Literatur}}
\end{document}